\newcommand*\diff{\mathop{}\!\mathrm{d}}
\newtheorem{theorem}{Theorem}
\newtheorem{lem}[theorem]{Lemma}
\newtheorem{Def}[theorem]{Definition}
\newtheorem{Prop}[theorem]{Proposition}
\theoremstyle{definition}
\newtheorem{Ex}{Example}
\theoremstyle{definition}
\newtheorem{Rem}[theorem]{Remark}
\def\E{{\mathbb E}} 
\def\WW{\mathcal{W}}
\def\ZZ{\mathcal{Z}}
\title{Fair valuation of L\'evy-type drawdown-drawup contracts with general insured and penalty functions}
\author{Zbigniew Palmowski}
\address{Faculty of Pure and Applied Mathematics, Wroc\l aw University of Science and Technology, ul. Wyb. Wyspia\'nskiego 27, 50-370 Wroc\l aw, Poland}
\email{zbigniew.palmowski@gmail.com}
\author{Joanna Tumilewicz}
\address{Mathematical Institute, University of Wroc\l aw, pl. Grunwaldzki 2/4, 50-384 Wroc\l aw, Poland}
\email{joanna.tumilewicz@gmail.com}
\thanks{This work is partially supported by National Science Centre Grants No. 2015/17/B/ST1/01102
(2016-2019) and No. 2016/23/N/ST1/01189
(2017-2019).}
\date{\today}
\subjclass[JEL]{C61, G01, G13, G22} %
\keywords{}
\begin{document}

\begin{abstract}
In this paper, we analyse some equity-linked contracts that are related to drawdown and drawup events based on assets governed by a geometric spectrally negative L\'evy process.
Drawdown and drawup refer to the differences between the historical maximum and minimum of the asset price and its current value, respectively.
We consider four contracts.
In the first contract, a protection buyer pays a premium with a constant intensity $p$ until the drawdown of fixed size occurs.
In return, he/she receives a certain insured amount at the drawdown epoch, which depends on the drawdown level at that moment.
Next, the insurance contract may expire earlier if a certain fixed drawup event occurs prior to the fixed drawdown.
The last two contracts are extensions of the previous ones but with an additional cancellable feature that
allows the investor to terminate the contracts earlier.
In these cases, a fee for early stopping depends on the drawdown level at the stopping epoch.
In this work, we focus on two problems: calculating the fair premium $p$ for basic contracts and finding the optimal stopping rule for the polices with a cancellable feature.
To do this, we use a fluctuation theory of L\'evy processes and rely on a theory of optimal stopping.
\vspace{3mm}

\noindent {\sc Keywords.}  insurance contract $\star$ fair valuation $\star$ drawdown $\star$ drawup $\star$ L\'evy process  $\star$ optimal stopping

\end{abstract}

\maketitle

\pagestyle{myheadings} \markboth{\sc Z.\ Palmowski
--- J.\ Tumilewicz} {\sc Fair valuation of L\'evy-type drawdown-drawup contracts}

\vspace{1.8cm}

\tableofcontents

\newpage

\section{Introduction}

The recent financial crises have shown that drawdown events can directly affect the incomes of individual and institutional investors.
This basic observation suggests that drawdown protection can be very useful in daily practice.
For this reason, in this paper we consider  a few insurance contracts that can protect the buyer from a large drawdown.
By the drawdown of a price process, we mean here the distance of the current value  from the maximum value that it has attained to date.
In return for the protection, the investor pays a premium.
More precisely, we consider the following insurance contracts.
In the simplest contract, the protection buyer pays a premium with a constant intensity until the drawdown of fixed size occurs.
In return, he/she receives a certain insured amount at the drawdown epoch that depends on the level of drawdown at this moment.
Another insurance contract provides protection from any specified drawdown with drawup contingency.
The drawup is defined as the current rise of the asset present value over the running minimum.
This contract may expire earlier if a certain fixed drawup event occurs prior to the fixed drawdown.
This is a very demanding feature of an insurance contract from the investor's perspective.
Indeed, when a large drawup is realised, there is little need to insure against a drawdown.
Therefore, this drawup contingency automatically stops the
premium payment and is an attractive feature that will potentially
reduce the cost of drawdown insurance.

In fact, the buyer of the insurance contract might think that they are unlikely to get large drawdown and he/she
might want to stop paying the premium at some other random time.
Therefore, we expand the previous two contracts by adding a cancellable feature.
In this case,
the fee for early stopping depends on the level of drawdown at the stopping epoch.

We focus on two problems: calculating the fair premium $p^*$ for basic contracts and showing
that the investor's optimal cancellable timing is based on the first passage time of the drawdown process.
This allows us to identify the fair price of all of the contracts that we have described.

The shortcomings of the diffusion models in representing the risk related to large market
movements have led to the development of various option pricing models with jumps,
where large log-returns are represented as discontinuities in prices as a function of time.
Therefore, in this paper we model an asset price appearing in these  contracts
with a geometric spectrally negative L\'evy process.
In this model, the log-price $\log S_t=X_t$ is described by a L\'evy process without positive jumps.
This is a  natural generalisation of the Black-Scholes market (for which $X_t=B_t$ is a Brownian motion), which
allows for a more realistic representation
of price dynamics and a greater flexibility in calibrating the model to market prices.
This will also allow us to reproduce a wide variety of implied volatility skews and smiles
(see e.g. \cite{Cont}).

In this paper we follow Zhang et al. \cite{olympia} and Palmowski and Tumilewicz \cite{ZP&JT}.
Zhang et al. \cite{olympia} considered the Black-Scholes model, in contrast to our
more general, L\'evy-type market. However, they did not consider
an insurance contract with a drawup contingency and cancellable feature.

In Zhang et al. \cite{olympia}, and Palmowski and Tumilewicz \cite{ZP&JT} the insured amount and penalty fee are fixed and constant.
In this paper, we allow these quantities to depend on level of drawdown at the maturity of the contract or at the stopping epoch.
This new feature in our model allows for more flexible insurance contracts.
Analysing this interesting case also requires a deeper understanding of the position of the L\'evy process
at these stopping times, which is also of theoretical interest.
Apparently, this could be achieved by using the fluctuation theory of spectrally negative L\'evy processes,
and can refine and find new results from the optimal stopping theory.
The research conducted in this paper continues a list of papers analysing
drawdown and drawup processes, see for example \cite{CZH, GZ, MA, drawdownup2, PV, Sorn, Vec1,Vec2, drawdownup1}.

In this paper, we also give an extensive numerical analysis
which shows that suggested optimal stopping times and fair premium rule are easy to
find and the implemented algorithm is very efficient.
We mainly focus on the case where a logarithm of the asset price is a linear Brownian motion
(Black-Scholes model) or drift minus compound Poisson process (so-called Cram\'er-Lundberg risk process).
The dependency of the price of the considered contracts on the chosen model parameters
shows some very interesting phenomenon.

The rest of this paper is organised as follows.
In Section \ref{sec:prel} we introduce the main definitions, notations and main identities
that will be used later. In Section \ref{sec:drawdown}, we analyse
the insurance contracts that are based only on drawdown (with and without cancellable feature).
In Section \ref{sec:drawup}, we add an additional possibility of stopping at the first drawup
(with and without cancellable feature). Some of our proofs are given in the Appendix.


\section{Preliminaries}\label{sec:prel}
We work on a complete filtered probability space $(\Omega,\mathcal{F},\mathbb{P})$
satisfying the usual conditions. We model a logarithm of risky underlying asset price $\log S_t$ by
a spectrally negative L\'evy process $X_t$; that is, $X_t$ is a stationary stochastic process with independent increments,
having right-continuous paths with left-hand finite limits and having only negative jumps (or not having jumps at all which means that $X_t$
is a  Brownian motion with linear drift). Any L\'evy process is associated with a triple $(\mu,\sigma,\Pi)$ by its characteristic function, as:
\begin{align}
\Psi(\phi):=-i\mu\phi+\frac{\sigma^2}{2}\phi^2+\int_{\mathbb{R}}(1-e^{i\phi y}+i\phi y\mathbbm{1}_{(|y|<1)})\Pi(\diff y),
\end{align}
where $\mu\in\mathbb{R}$, $\sigma\geq 0$ and a L\'evy measure $\Pi$ satisfies $\int_{\mathbb{R}}(1\wedge y^2)\Pi(\diff y)<\infty$.

In this paper we focus on two examples of L\'evy process $X_t$. We calculate all of the quantities explicitly and do whole numerical analysis
for them.
The first concerns a Black-Scholes market under which $X_t$ is the linear Brownian motion given by
\begin{align}
X_t=\mu t+\sigma B_t,\label{linearBrownian}
\end{align}
where $B_t$ is standard Brownian motion and if $\mathbb{P}$ is a martingale measure, then $\mu=r-\sigma^2/2$
for a risk-free interest rate $r$ and a volatility $\sigma >0$.
Obviously, $X_t$ in \eqref{linearBrownian} is a spectrally negative L\'evy process because it has no jumps.

In another classical example, we focus  on a Cram\'er-Lundberg process with the exponential jumps:
\begin{equation}\label{CL}
X_t=\hat{\mu} t-\sum_{i=1}^{N_t}{\eta_i},\end{equation}
where $\hat{\mu}=\mu - \int_{(0,1)}y\Pi(-\diff y)$, the sequence  $\{\eta_i\}_{\{i\geq 1\}}$ consists of i.i.d. exponentially distributed random variable with a parameter $\rho >0$
and $N_t$ is a Poisson process with an intensity $\beta>0$ independent of the sequence.

These two examples describe the most important features of L\'evy-type log-prices: their diffusive nature and their possible jumps.
As such, they may serve as core examples of the theory presented in this paper.

The main message of this paper is that fair premiums, prices of all contracts and all optimal stopping rules
can be expressed only in terms of two special functions, which are called the scale functions.
To define them properly, we introduce the Laplace exponent of $X_t$:
\begin{equation}
\psi (\phi ):=\log\mathbb{E}[e^{\phi X_1}]\label{Lapexp}
\end{equation}
which is well defined for $\phi\geq 0$ due to the absence of positive jumps.
Recall that, for $\mu\in\mathbb{R}$, $\sigma\geq 0$ and for a L\'evy measure $\Pi$, by L\'evy-Khintchine theorem:
\begin{equation}
\psi(\phi)=\mu\phi
+\frac{1}{2}\sigma^{2}\phi^{2}+\int_{(0,\infty)}\big(\mathrm e^{-\phi
y}-1+\phi y\mathbbm{1}_{(y<1)}\big)\Pi(-\diff y).\label{eq:exponent}
\end{equation}
Note that $\psi$ is zero at the origin, tends to infinity at infinity and it is strictly
convex. Therefore, we can properly define a right-inverse $\Phi:[0,\infty)\rightarrow [0,\infty)$
of the Laplace exponent $\psi$ given in the \eqref{Lapexp}. Thus,
\begin{equation*}
\Phi(r):=\sup\{\phi>0:\psi(\phi)=r\} \quad \textrm{and} \quad
\psi(\Phi(r))=r \quad \text{for all} \quad r \geq 0.
\end{equation*}
For $r\geq 0$ we define a continuous and strictly increasing function $W^{(r)}$ on $[0,\infty )$  with the Laplace transform given by:
\begin{align}
\int_0^\infty e^{-\phi y}W^{(r)}(y)\diff y=\frac{1}{\psi (\phi )-r}\quad \textrm{for} \quad \phi>\Phi(r).\label{W}
\end{align}
This is the so-called first scale function. From this definition, it also follows  that
$W^{(r)}$ is a non-negative function.
The second scale function is related to the first one via the following relationship:
\begin{align}
Z^{(r)}(\phi)=1+r\int_0^{\phi} W^{(r)}(y)\diff y. \label{Z}
\end{align}
In this paper, we assume that either the process $X_t$ has non-trivial Gaussian component --- that is, $\sigma >0$ (hence, it is of unbounded variation)
--- or it is of bounded variation and $\Pi(-\infty,-y)$ is continuous function for $y>0$. From \cite[Lem 2.4]{kyprianou},
it follows that
\begin{align}
W^{(r)}\in\mathcal{C}^1(\mathbb{R}_+)\label{Wprim}
\end{align}
for $\mathbb{R}_+=(0,\infty )$.
Moreover, under this assumptions the process $X_t$ has absolutely continuous transition density, that is, for any fixed $t>0$ the random variable $X_t$  is absolutely continuous.

\begin{Ex}
For linear Brownian motion \eqref{linearBrownian}
the Laplace exponent equals
\begin{align}
\psi(\phi)=\mu\phi+\frac{1}{2}\sigma^2\phi^2\nonumber
\end{align}
and, therefore,
the scale functions, for $\phi\geq 0$, are given as follows:
\begin{align}
W^{(r)}(\phi)&=\frac{1}{\Xi\sigma^2}\left(e^{(\Xi-\frac{\mu}{\sigma^2})\phi}-e^{(-\Xi-\frac{\mu}{\sigma^2})\phi}\right)=\frac{2}{\Xi\sigma^2}e^{-\frac{\mu}{\sigma^2}\phi}\sinh(\Xi\phi),\nonumber\\
Z^{(r)}(\phi)&=\frac{r}{\Xi\sigma^2}\left(\frac{e^{(\Xi-\frac{\mu}{\sigma^2})\phi}}{\Xi-\frac{\mu}{\sigma^2}}+\frac{e^{(-\Xi-\frac{\mu}{\sigma^2})\phi}}{\Xi+\frac{\mu}{\sigma^2}}\right)=e^{-\frac{\mu}{\sigma^2}\phi}\left(\cosh(\Xi\phi)+\frac{\mu}{\Xi\sigma^2}\sinh(\Xi\phi)\right),\nonumber
\end{align}
where
\begin{align}
\Xi=\frac{\sqrt{\mu^2+2r\sigma^2}}{\sigma^2}.\nonumber
\end{align}
\end{Ex}
\begin{Ex}

For the Cram\'er-Lundberg process \eqref{CL}
we have
\begin{align}
\psi(\phi)=\hat{\mu}\phi + \frac{\beta\rho}{\rho+\phi}-\beta\nonumber
\end{align}
and, hence, for $\phi\geq 0$,
\begin{align}
W^{(r)}(\phi) =&\frac{e^{\Phi(r)\phi}}{\psi^\prime(\Phi(r))}+\frac{e^{\zeta\phi}}{\psi^\prime(\zeta)},\nonumber\\
Z^{(r)}(\phi)=&1+r\frac{e^{\Phi(r)\phi}-1}{\Phi(r) \psi^\prime (\Phi(r) )}+r\frac{e^{\zeta\phi}-1}{\zeta\psi^\prime(\zeta)},\nonumber
\end{align}
where
\begin{align}
&\Phi(r) =\frac{1}{2\hat{\mu}}\left ((\beta+r-\hat{\mu}\rho)+\sqrt{(\beta+r-\hat{\mu}\rho)^2+4r\hat{\mu}\rho}\right ),\nonumber\\
&\zeta =\frac{1}{2\hat{\mu}}\left ((\beta+r-\hat{\mu}\rho)-\sqrt{(\beta+r-\hat{\mu}\rho)^2+4q\hat{\mu}\rho}\right ).\nonumber
\end{align}
\end{Ex}

Let us denote:
\begin{align}
\overline{X}_t=\sup_{s\leq t} X_s,\qquad \underline{X}_t=\inf_{s\leq t} X_s.\nonumber
\end{align}
In this paper, we analyse the insurance contracts related to the drawdown and drawup processes.
The classical definitions for these processes are as follows.
Drawdown is the difference between running maximum of the process and its current value.
Meanwhile, drawup is the difference between the process current value and its running minimum.
Without loss of generality, let us assume that $X_0=0$.
Additionally, one can allow that the drawdown and drawup processes start from some points $d\geq 0$ and $u\geq 0$, respectively.
That is,
\begin{align}
&D_t=\overline{X}_t\vee d-X_t,\quad D_0=d,\nonumber\\
&U_t=X_t-\underline{X}_t\wedge (-u),\quad U_0=u.\nonumber
\end{align}
Thus, the above values $d$ and $-u$ may be interpreted as the historical maximum and historical minimum of process $X$.
In daily practice, zero level of $X_0$ might be treated as the present position of log-prices of the asset that we work with.
In this case, the above interpretations of $d$ and $-u$ are even more clear.

The following first passage times of drawdown and drawup processes are crucial for further work, respectively:
\begin{align}
&\tau_{D}^+(a):=\inf\{t\geq 0 : D_t> a\},\quad\tau_{D}^-(a):=\inf\{t\geq 0 : D_t> a\},\nonumber\\
&\tau_{U}^+(b):=\inf\{t\geq 0 : U_t> b\},\quad\tau_{U}^-(b):=\inf\{t\geq 0 : U_t> b\},\nonumber
\end{align}
for some $a,b>0$.

Later, for fixed $d,\ u$ we will use the following notational convention:
\begin{align}
\begin{array}{ll}
\mathbb{P}_{|d}\left[\cdot\right]:=\mathbb{P}\left[\cdot|X_0=0, D_0=d\right],&\mathbb{P}_{x|d}\left[\cdot\right]:=\mathbb{P}\left[\cdot|X_0=x,D_0=d\right],\\
\mathbb{P}_{|d|u}\left[\cdot\right]:=\mathbb{P}\left[\cdot|X_0=0, D_0=d,\ U_0=u\right],&\mathbb{P}_{x|d|u}\left[\cdot\right]:=\mathbb{P}\left[\cdot|X_0=x,\ D_0=d,\ U_0=u\right].
\end{array}\nonumber
\end{align}
Finally, we denote $\mathbb{P}_{x}\left[\cdot\right]:=\mathbb{P}\left[\cdot|X_0=x\right]$, with $\mathbb{P}=\mathbb{P}_{0}$ and $\mathbb{E}_{|d}, \mathbb{E}_{x|d},\mathbb{E}_{|d|u}, \mathbb{E}_{x|d|u}, \mathbb{E}_{x}, \mathbb{E}$ will be corresponding expectations to the above measures.
We will also use the following notational convention: $\mathbb{E}[\cdot\ \mathbbm{1}_{(A)}]=\mathbb{E}[\cdot ;A]$.

The seminal observation for the fluctuation of L\'evy processes
is the fact the scale functions \eqref{W} and \eqref{Z}
are used in solving so-called exit problems given by:
\begin{align}
&\mathbb{E}_{x}\left[e^{-r\tau^+_a};\tau^+_a<\tau^-_0\right]=\frac{W^{(r)}(x)}{W^{(r)}(a)},\label{twosided1}\\
&\mathbb{E}_{x}\left[e^{-r\tau^-_0};\tau^-_0<\tau^+_a\right]=Z^{(r)}(x)-Z^{(r)}(a)\frac{W^{(r)}(x)}{W^{(r)}(a)},\label{twosided2}
\end{align}
where $x\leq a$, $r\geq 0$ and
\begin{align}
\tau^+_a:=\inf\{t\geq 0 : X_t\geq a\},\qquad\tau^-_a:=\inf\{t\geq 0 : X_t< a\}\nonumber
\end{align}
are the first passage times of the process $X_t$.
We finish this section with the formula (given in Mijatovi\'{c} and Pistorius \cite[Thm. 3]{Mijatovic&Pistorius})
that identifies the joint law of $\{\tau_{U}^+(b), \overline{X}_{\tau_{U}^+(b)}$, $\underline{X}_{\tau^+_{U}(b)}\}$, for $r,u,v\geq 0$:
\begin{align}
\mathbb{E}\left[e^{-r\tau^+_{U}(b)+u\underline{X}_{\tau^+_{U}(b)}};\overline{X}_{\tau^+_{U}(b)}<v\right]=&e^{ub}\frac{1+(r-\psi (u))\int_0^{b-v}e^{-uy}W^{(r)}(y)dy}{1+(r-\psi (u))\int_0^{b}e^{-uy}W^{(r)}(y)dy}\nonumber\\
&-e^{-u(b-v)}\frac{W^{(r)}(b-v)}{W^{(r)}(b)}.\label{m2}
\end{align}

\section{Drawdown insurance contract}\label{sec:drawdown}
\subsection{Fair premium}
In this section we consider the insurance contract in which
a protection buyer pays constant premium $p>0$ continuously until the drawdown of size $a>0$ occurs.
In return, he/she receives the reward $\alpha(D_{\tau_{D}^+(a)})$ that depends on the value of the drawdown process at this moment of time.
It is natural to assume that $\alpha(\phi)=0$ for $\phi<a$.
Let $r\geq 0$ be the risk-free interest rate. The price of the first, basic contract that we consider
in this paper equals the discounted value of the future cash-flows:
\begin{align}\label{f}
f(d,p) := \mathbb{E}_{|d}\left[-\int_0^{\tau_{D}^+(a)}e^{-rt}p\diff t+\alpha(D_{\tau_{D}^+(a)})e^{-r\tau_{D}^+(a)}\right].
\end{align}
In this contract, the investor wants to protect herself/himself from the asset price $S_t=e^{X_t}$ falling down from the previous maximum
more than fixed level $e^a$ for some fixed $a>0$.
In other words, she/he believes that even if the price will go up again after the first drawdown of size $e^a$
it will not bring her/him sufficient profit. Therefore, she/he is
ready to take this type of contract to reduce loss by getting $\alpha(D_{\tau_{D}^+(a)})$
at the drawdown epoch.

Note that we define the drawdown contract value \eqref{f} from the investor's position.
This represents the investor's average profit when the value is positive or loss when this value is negative.
Thus, the contract is unprofitable for the insurance company in the first case and for the investor in the second case.
The only fair solution for both sides situation is when the contract value equals zero.
Obviously, this means that this is the premium $p$ under which the contract should be constructed
or it can serve as the basic reference economical premium.

\begin{Def}\label{p*definition}
The premium $p$ is said to be \emph{\textbf{fair}} when contract value at its beginning is equal to $0$.
We denote this fair premium by $p^*$. We will add argument of initial drawdown (and later drawup) to
underline its dependence on the initial conditions, e.g. writing $p^*(d)$ or $p^*(d,u)$.
\end{Def}

The main of goal of this section and the paper is identifying
the fair premium $p^*$ for the first contract \eqref{f} under spectrally negative L\'evy-type market.

Let us start from the basic observation that:
\begin{equation}\label{f1}
f(d,p)= \frac{p}{r}\xi(d)-\frac{p}{r}+\Xi(d),
\end{equation}
where
\begin{align}
\xi(d):=&\mathbb{E}_{|d}\left[e^{-r\tau_{D}^+(a)}\right],\label{xi}\\
\Xi(d):=&\mathbb{E}_{|d}\left[e^{-r\tau_{D}^+(a)}\alpha(D_{\tau_{D}^+(a)})\right]\label{Xi}
\end{align}
are the Laplace transform of $\tau_{D}^+ (a)$ and the discounted reward function, respectively.
Note that $\xi\in[0,1]$ is well defined.

Moreover, from now on we assume that for all $d\geq 0$
\begin{align}
\Xi(d)<\infty,\label{warXi}
\end{align}
for $\Xi$ to be well defined.
Observe that \eqref{warXi} holds true, such as for the bounded reward function $\alpha$.
To price the contract \eqref{f}, we start by identifying
the crucial functions $\xi$ and $\Xi$.

\begin{Prop}\label{Ksi}
The value of the drawdown contract \eqref{f} equals \eqref{f1} for $\xi$ and $\Xi$ given by:
\begin{align}
\xi (d) = Z^{(r)}(a-d)-rW^{(r)}(a-d)\frac{W^{(r)}(a)}{W'^{(r)}(a)}\nonumber
\end{align}
and
\begin{align}
\Xi(d)=&\int_0^a\!\!\!\int_0^\infty\alpha(a+h)\left[W^{(r)}(a-d)\frac{W'^{(r)}(a-z)}{W'^{(r)}(a)}-W^{(r)}(a-d-z)\right]\Pi(-z-\diff h)\diff z\nonumber\\
&+\alpha(a)\frac{\sigma^2}{2}\left[W'^{(r)}(a-d)-W^{(r)}(a-d)\frac{W''^{(r)}(a)}{W'^{(r)}(a)}\right],\nonumber
\end{align}
where $\Pi$ is the L\'evy measure of underlying process $X_t$ defined formally in \eqref{eq:exponent}.
\end{Prop}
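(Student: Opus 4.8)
The plan is to compute $\xi$ and $\Xi$ separately, in each case reducing the drawdown first passage to the two-sided exit identities \eqref{twosided1}--\eqref{twosided2} together with the regenerative behaviour of $X$ at its running maximum, and then fixing the single remaining constant by the reflecting (Neumann) condition that the drawdown obeys at $0$. For $\xi$, start from $X_0=0$ with running maximum $d$: as long as $X<d$ the drawdown equals $d-X$, so $\tau_{D}^+(a)$ agrees, up to the first new maximum, with the exit of $X$ from $(d-a,d)$; the drawdown exceeds $a$ exactly when $X$ falls below $d-a$, while the maximum only moves once $X$ climbs to $d$. Shifting $(d-a,d)$ to $(0,a)$ (starting point $a-d$), applying \eqref{twosided1}--\eqref{twosided2}, and invoking the strong Markov property at the instant a new maximum is reached (where the drawdown regenerates from $0$), one obtains
\begin{align}
\xi(d)=Z^{(r)}(a-d)-Z^{(r)}(a)\frac{W^{(r)}(a-d)}{W^{(r)}(a)}+\frac{W^{(r)}(a-d)}{W^{(r)}(a)}\,\xi(0).\nonumber
\end{align}
At $d=0$ this collapses to a tautology, so it determines $\xi$ only up to the constant $\xi(0)$, i.e.\ up to a multiple of $W^{(r)}(a-d)$; thus $\xi(d)=Z^{(r)}(a-d)-K\,W^{(r)}(a-d)$ for an unknown $K$.

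To pin down $K$ I would exploit that $D_t=\overline X_t\vee d-X_t$ is reflected at $0$, its upward singular part being carried by $\diff\overline X_t$, which acts only on $\{D_t=0\}$. Applying It\^o's formula to $e^{-rt}\xi(D_t)$ and using that $\xi$ is $(\mathcal L-r)$-harmonic on $(0,a)$ --- precisely what the pair $Z^{(r)}(a-d),\,W^{(r)}(a-d)$ encodes, together with $Z^{(r)\prime}=rW^{(r)}$ from \eqref{Z} --- the only non-martingale term is the reflection term, proportional to $\xi^\prime(0^+)\,\diff\overline X_t$. Since $\overline X$ is genuinely increasing and singular, the martingale property forces the Neumann condition $\xi^\prime(0^+)=0$. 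Imposing it on $Z^{(r)}(a-d)-K\,W^{(r)}(a-d)$ yields $K=rW^{(r)}(a)/W^{(r)\prime}(a)$ and the stated formula for $\xi$; the regularity \eqref{Wprim} makes $W^{(r)\prime}(a)$ well defined.

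For $\Xi$ the same decomposition applies, but now the exit value $D_{\tau_{D}^+(a)}=a+h$ must be tracked. The drawdown crosses $a$ either by a downward jump of $X$, with overshoot $h>0$ and reward $\alpha(a+h)$, or, when $\sigma>0$, by continuous downward passage (creeping), with $h=0$ and reward $\alpha(a)$. Writing $\Xi$ again as a particular solution plus a multiple of $W^{(r)}(a-d)$, the particular solution is the Gerber--Shiu-type functional of the process killed on exiting $(0,a)$: the jump part comes from the compensation formula applied to the killed $r$-resolvent density $\frac{W^{(r)}(a-d)W^{(r)}(a-z)}{W^{(r)}(a)}-W^{(r)}(a-d-z)$, producing the convolution term $-\int_0^a\int_0^\infty\alpha(a+h)W^{(r)}(a-d-z)\,\Pi(-z-\diff h)\,\diff z$ and a homogeneous piece carrying $W^{(r)}(a-z)/W^{(r)}(a)$, while the creeping part contributes $\alpha(a)\frac{\sigma^2}{2}W^{(r)\prime}(a-d)$ through the $\sigma^2/2$ multiple of the normal derivative of that density at the lower boundary. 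Exactly as for $\xi$, the residual constant is fixed by $\Xi^\prime(0^+)=0$; imposing it replaces $W^{(r)}(a)$ in the denominators by $W^{(r)\prime}(a)$ and upgrades the homogeneous kernels to $W^{(r)\prime}(a-z)/W^{(r)\prime}(a)$ and $W^{(r)\prime\prime}(a)/W^{(r)\prime}(a)$, giving precisely the stated $\Xi$. Substituting the resulting $\xi$ and $\Xi$ into \eqref{f1} then delivers the value of the contract \eqref{f}.

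The main obstacle is the evaluation ``at the running maximum'', i.e.\ the constant multiplying $W^{(r)}(a-d)$: the renewal argument degenerates at $d=0$, so everything hinges on justifying and applying the reflecting condition $\xi^\prime(0^+)=\Xi^\prime(0^+)=0$, equivalently on a careful excursion-theoretic treatment of $X$ at its supremum. A secondary delicate point is the creeping contribution, which is present only in the unbounded-variation case $\sigma>0$ and requires the exact $\sigma^2/2$ creeping density of the killed process; this is where the second derivative $W^{(r)\prime\prime}(a)$ enters, so one must also ensure $W^{(r)}\in\mathcal C^2$ in that regime.
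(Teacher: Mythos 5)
Your proposal is correct in substance and reaches the stated formulas, but it replaces the paper's key final step by a different argument, so a comparison is in order. The first half of your argument coincides with the paper's: both condition on whether $\tau_D^+(a)$ occurs before the first new maximum $\tau_d^+$, shift the exit problem to $(0,a)$ started from $a-d$, use the killed resolvent density $u^r(x,z,a)=\bigl(W^{(r)}(x)W^{(r)}(a-z)-W^{(r)}(x-z)W^{(r)}(a)\bigr)/W^{(r)}(a)$ together with the compensation formula for the overshoot term $\alpha(a+h)$, and the $\tfrac{\sigma^2}{2}$--normal-derivative identity for the creeping term $\alpha(a)$, and then invoke the strong Markov property at $\tau_d^+$ to obtain the renewal relation $\Xi(d)=P(d)+\tfrac{W^{(r)}(a-d)}{W^{(r)}(a)}\Xi(0)$ (and likewise for $\xi$). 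Where you diverge is in evaluating the remaining constant $\Xi(0)$ (equivalently $\xi(0)$): the paper imports the explicit laws of $D_{\tau_D^+(a)}$ started from $D_0=0$ from Mijatovi\'{c} and Pistorius, which is where the $W'^{(r)}(a)$ denominators enter, whereas you derive the constant from the normal-reflection (Neumann) condition $\xi'(0^+)=\Xi'(0^+)=0$. I checked that imposing this condition on $Z^{(r)}(a-d)-KW^{(r)}(a-d)$ and on $P(d)+CW^{(r)}(a-d)$ does reproduce exactly the stated kernels $W'^{(r)}(a-z)/W'^{(r)}(a)$ and $W''^{(r)}(a)/W'^{(r)}(a)$, so your route is a self-contained re-derivation of the cited result rather than a citation of it; this is a genuine gain in self-containedness at the cost of having to justify the boundary condition.

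That justification is the one place where your argument is thin. Your singular-local-time argument (the reflection term $\xi'(0^+)\,\diff\overline{X}^c_t$ must vanish because $\overline{X}^c$ is increasing and singular) is valid only in the unbounded variation case $\sigma>0$; for the bounded variation processes also covered by the paper, $\diff\overline{X}^c_t=\hat{\mu}\mathbbm{1}_{(D_t=0)}\diff t$ is absolutely continuous and the set $\{D_t=0\}$ has positive Lebesgue measure, so the Neumann condition must instead be extracted by comparing the interior generator equation as $d\downarrow 0$ with the generator of $D$ at the boundary (the drift term is absent there); the conclusion $\xi'(0^+)=0$ still holds, but by a different mechanism which you should spell out. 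You correctly flag the remaining regularity issue: the formula for $\Xi$ involves $W''^{(r)}(a)$, so when $\sigma>0$ one needs $W^{(r)}\in\mathcal{C}^2$, which goes beyond the $\mathcal{C}^1$ regularity \eqref{Wprim} that the paper records. With these two points made precise, your proof is complete.
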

\begin{proof}
See Appendix.
\end{proof}

Note that $\tau_D^+(a)<\infty$ a.s. which follows from  Proposition \ref{Ksi} by taking $r=0$ in $\xi(d)$.

From \eqref{f1} and using Proposition \ref{Ksi}, we can derive the following theorem.
\begin{theorem}
For the contract \eqref{f} the fair premium equals:
\begin{equation}
p^* (d) = \frac{r\Xi (d)}{1-\xi (d)}.\label{p*}
\end{equation}
\end{theorem}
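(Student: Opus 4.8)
The plan is to read the theorem as the solution of the fair-premium equation. By Definition \ref{p*definition}, the fair premium $p^*(d)$ is characterised by the single scalar equation $f(d,p^*(d))=0$, so the entire task reduces to solving this equation for $p$. The crucial simplification has already been supplied by the decomposition \eqref{f1}, which exhibits $f(d,p)$ as an \emph{affine} function of the premium $p$: writing $f(d,p)=\frac{p}{r}\xi(d)-\frac{p}{r}+\Xi(d)$, the unknown enters only linearly, so no fixed-point or monotonicity argument is needed and a bare linear solve will suffice.

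First I would substitute the decomposition into the fairness condition and collect the coefficient of $p$, obtaining
\[
0=f(d,p^*)=\frac{p^*}{r}\big(\xi(d)-1\big)+\Xi(d).
\]
Rearranging gives $\frac{p^*}{r}\big(1-\xi(d)\big)=\Xi(d)$, and hence, provided $1-\xi(d)\neq 0$, I obtain $p^*(d)=\frac{r\Xi(d)}{1-\xi(d)}$, which is exactly \eqref{p*}. Since Proposition \ref{Ksi} already supplies closed forms for $\xi(d)$ and $\Xi(d)$ in terms of the scale functions $W^{(r)}$ and $Z^{(r)}$, this simultaneously delivers a fully explicit expression for the fair premium.

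The one genuine point that must be checked --- and the only possible obstacle --- is the invertibility step, namely that the denominator $1-\xi(d)$ does not vanish, so that the division is legitimate and the premium uniquely determined. Here I would argue as follows. We work with a strictly positive discount rate $r>0$, and for an admissible initial drawdown $d<a$ the process $D$ starts strictly below the trigger level $a$, so that $\tau_{D}^+(a)>0$ almost surely. Consequently $e^{-r\tau_{D}^+(a)}<1$ on a set of full measure, and taking expectations in \eqref{xi} yields the strict inequality $\xi(d)=\mathbb{E}_{|d}\big[e^{-r\tau_{D}^+(a)}\big]<1$. Thus $1-\xi(d)>0$, the quotient in \eqref{p*} is well defined and finite (using \eqref{warXi} to guarantee $\Xi(d)<\infty$), and because $f(d,\cdot)$ is affine with nonzero slope the solution is unique. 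This completes the argument; all the remaining content has been front-loaded into \eqref{f1} and Proposition \ref{Ksi}, leaving only this elementary algebra.
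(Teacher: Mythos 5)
Your proposal is correct and follows essentially the same route as the paper, which derives the formula directly by setting the affine expression \eqref{f1} to zero and solving for $p$; your additional verification that $1-\xi(d)>0$ (via $\tau_D^+(a)>0$ a.s.\ for $d<a$ and $r>0$) is a small but welcome piece of care that the paper leaves implicit.
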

\addtocounter{Ex}{-2}
\begin{Ex}[continued]
The linear Brownian motion given in \eqref{linearBrownian} is a continuous process and, therefore, $D_{\tau_D^+(a)}=a$. The paid reward is always equal to $\alpha(a):=\alpha$, which corresponds to the results for constant reward function in \cite{ZP&JT}. The value function equals:
\begin{align}
f(d,p)=\left(\frac{p}{r}+\alpha\right)e^{-\frac{\mu}{\sigma^2}(a-d)}\frac{\Xi\cosh(\Xi d)-\frac{\mu}{\sigma^2}\sinh(\Xi d)}{\Xi\cosh(\Xi a)-\frac{\mu}{\sigma^2}\sinh(\Xi a)}-\frac{p}{r}\nonumber
\end{align}
and the fair premium $p^*$ is given by:
\begin{align}
p^*(d)=\frac{r\alpha\cdot(\Xi\cosh(\Xi d)-\frac{\mu}{\sigma^2}\sinh(\Xi d))}{\Xi(e^{\frac{\mu}{\sigma^2}(a-d)}\cosh(\Xi a)-\cosh(\Xi d))-\frac{\mu}{\sigma^2}(e^{\frac{\mu}{\sigma^2}(a-d)}\sinh(\Xi a)-\sinh(\Xi d))}.\nonumber
\end{align}

In Figure \ref{fig:p*bm},
we demonstrate the $p^*$ value for the drawdown insurance contract
for the Black-Scholes market for various values of $\alpha$. We choose
the following parameters: $r=0.01, \mu=0.03, \sigma=0.4, a=10$.

In Figure \ref{fig:fbm}, we present the contract value $f$ for various premiums $p$.
We choose the same parameters as above and fixed $\alpha=100$.
Finally, for the premium rates, we take the fair premiums
when the initial drawdown equals $d=0$, $d=5$, $d=6$ and $d=7$, respectively.

\begin{figure}[!ht]
\centering
\includegraphics[width=0.6\textwidth]{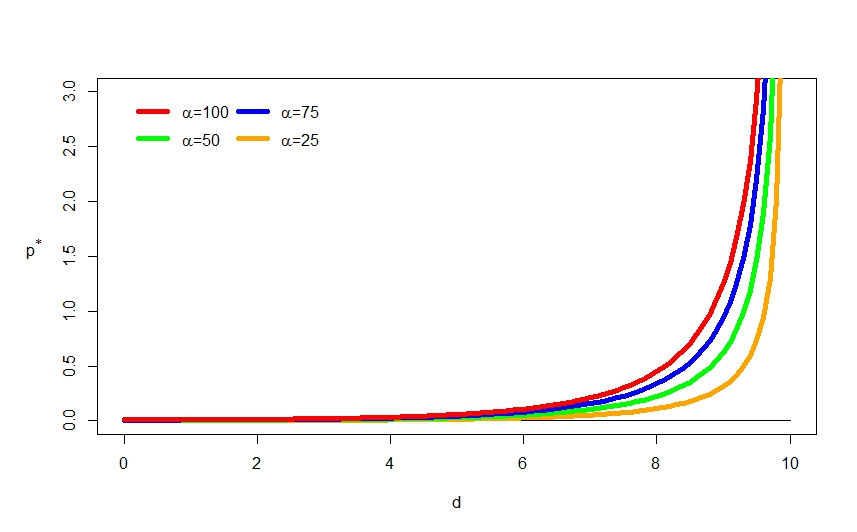}
\caption{\footnotesize{The $p^*$ values for drawdown contract for the linear Brownian motion and various reward values $\alpha$. Parameters: $r=0.01, \mu=0.03, \sigma=0.4, a=10$.}}
\label{fig:p*bm}
\end{figure}
\begin{figure}[!ht]
\centering
\includegraphics[width=0.6\textwidth]{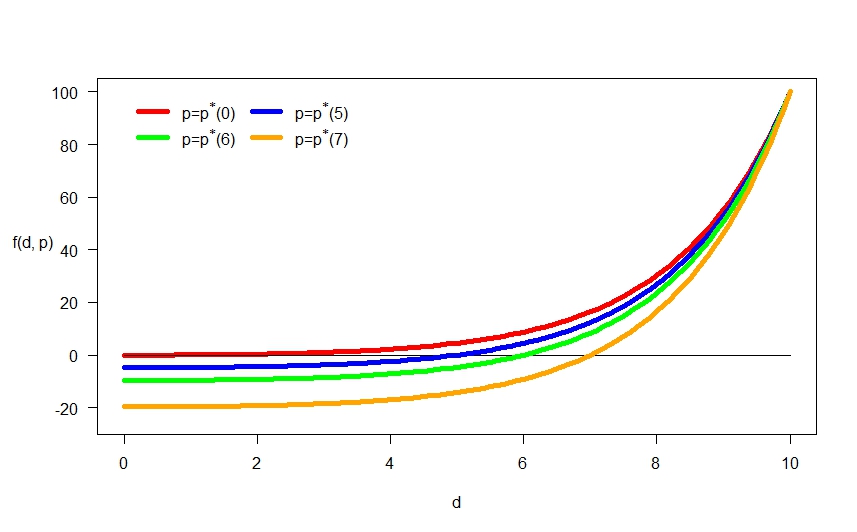}
\caption{\footnotesize{The contract value $f$ for the drawdown contract for the linear Brownian motion and various premiums $p$.
Parameters: $r=0.01, \mu=0.03, \sigma=0.4, a=10, \alpha=100$.}}
\label{fig:fbm}
\end{figure}
\end{Ex}
\begin{Ex}[continued]
Let us consider now Cram\'er-Lundberg process defined in \eqref{CL}. From \cite{ZP&JT} we have
\begin{align}
\xi (d) =& 1+r\frac{e^{\Phi(r) (a-d)}-1}{\Phi(r)\psi^\prime(\Phi(r) )}+r\frac{e^{\zeta (a-d)}-1}{\zeta\psi^\prime(\zeta)}-r\left ( \frac{e^{\Phi(r) (a-d)}}{\psi^\prime(\Phi(r) )}+\frac{e^{\zeta (a-d)}}{\psi^\prime(\zeta)}\right ) \nonumber \\
&\cdot\frac{\psi^\prime(\zeta)e^{\Phi(r) a}+\psi^\prime(\Phi(r) )e^{\zeta a}}{\Phi(r) \psi^\prime(\zeta)e^{\Phi(r) a}+\zeta\psi^\prime(\Phi(r) )e^{\zeta a}}=: c_0 + c_{\Phi(r)}e^{\Phi(r) (r)(a-d)}+c_{-\zeta}e^{-\zeta (a-d)},\nonumber
\end{align}
where
\begin{align}
&c_0=1-\frac{r}{\Phi(r) \psi^\prime(\Phi(r) )}-\frac{r}{\zeta\psi^\prime(\zeta)},\nonumber\\
&c_{\Phi(r)}=\frac{r}{\Phi(r) \psi^\prime(\Phi(r) )}-\frac{r}{\psi^\prime(\Phi(r) )}\frac{\psi^\prime(\zeta)e^{\Phi(r) a}+\psi^\prime(\Phi(r) )e^{\zeta a}}{\Phi(r) \psi^\prime(\zeta)e^{\Phi(r) a}+\zeta\psi^\prime(\Phi(r) )e^{\zeta a}},\nonumber\\
&c_{-\zeta}=\frac{r}{\zeta\psi^\prime(\zeta)}-\frac{r}{\psi^\prime(\zeta)}\frac{\psi^\prime(\zeta)e^{\Phi(r) a}+\psi^\prime(\Phi(r) )e^{\zeta a}}{\Phi(r) (r)\psi^\prime(\zeta)e^{\Phi(r) a}+\zeta\psi^\prime(\Phi(r) )e^{\zeta a}}.\nonumber
\end{align}
For Cram\'er-Lundberg model we have $D_{\tau_D^+(a)}>a$ because of the jump's presence.
Let us consider the general reward function $\alpha(d)$ for $d\geq a$. Then
\begin{align}
\Xi(d)=&\int_0^a\!\!\!\int_0^\infty \alpha (a+h)\left[W^{(r)}(a-d)\frac{W'^{(r)}(a-z)}{W'^{(r)}(a)}-W^{(r)}(a-d-z)\right]\beta\rho e^{-\rho z}e^{-\rho h}\diff h\diff z\nonumber\\
=&\int_0^\infty \alpha (a+h) \rho e^{-\rho h}\diff h\int_0^a \beta e^{-\rho z}\left[W^{(r)}(a-d)\frac{W^{\prime (r)}(a-z)}{W^{\prime (r)}(a)}-W^{(r)}(a-d-z)\right]\diff z\nonumber\\
=&\E\left[\alpha (a+\mathbf{e}_{\rho})\right]\xi(d),\nonumber
\end{align}
where $\mathbf{e}_{\rho}$ is exponentially distributed random variable with parameter $\rho$.
Therefore, the value function and the fair premium are given by
\begin{align}
&f(d,p)=\left(\frac{p}{r}+\E\left[\alpha (a+\mathbf{e}_{\rho})\right]\right)\xi(d)-\frac{p}{r},\nonumber\\
&p^*(d)=\frac{r\E\left[\alpha (a+\mathbf{e}_{\rho})\right]\xi(d)}{1-\xi(d)}.\nonumber
\end{align}

From the above representation of $\Xi$ we can easily check the condition \eqref{warXi}. For example, let $\alpha(d)=\omega e^{\kappa d}$, where $\omega$ and $\kappa$ are some constants. For this reward function, condition \eqref{warXi} is satisfied when:
\begin{align}
\mathbb{E}\left[\alpha(a+\mathbf{e}_{\rho})\right]=\omega e^{\kappa a}\mathbb{E}\left[e^{\kappa\mathbf{e}_{\rho}}\right]=\omega e^{\kappa a}\rho\int_0^\infty e^{-(\rho-\kappa)x}\diff x<\infty.\nonumber
\end{align}
This inequality is satisfied when $\kappa<\rho$ and $\omega<\infty$.
Another example can be found in the linear reward function $\alpha(d)=\alpha_1 +\alpha_2 d$ for some constants $\alpha_1$ and $\alpha_2.$ In this case
\begin{align}
\mathbb{E}\left[\alpha(a+\mathbf{e}_{\rho})\right]=\alpha_1+\alpha_2 a +\alpha_2\mathbb{E}\left[\mathbf{e}_{\rho}\right]=\alpha\left(a+\frac{1}{\rho}\right).\nonumber
\end{align}
Thus, the linear reward function satisfied condition \eqref{warXi} always when $\alpha_1, \alpha_2<\infty$.
In particular, for the linear reward function $\alpha (\cdot )$ we get:
\begin{align}
f(d,p)&=\left(\frac{p}{r}+\alpha\left(a+\frac{1}{\rho}\right)\right)\left(c_0 + c_{\Phi(r)}e^{\Phi(r) (a-d)}+c_{-\zeta}e^{\zeta (a-d)}\right)-\frac{p}{r},\nonumber\\
p^*(d)&=\frac{r\alpha\left(a+\frac{1}{\rho}\right)\cdot(c_0 + c_{\Phi(r)}e^{\Phi(r) (a-d)}+c_{-\zeta}e^{\zeta (a-d)})}{1-c_0 - c_{\Phi(r)}e^{\Phi(r) (a-d)}-c_{-\zeta}e^{\zeta (a-d)}}.\nonumber
\end{align}

For the Cram\'er-Lundberg model the we present in Figures \ref{fig:p*cl} and \ref{fig:fcl},
the $p^*$ value for various linear reward functions $\alpha(d)$ and
the contract value $f(d,p)$ for two premiums: $p=p^*(0)$ (which means that we start at the temporary maximum asset price)
and for $p=p^*(7.5)$, respectively.

\begin{figure}[!ht]
\centering
\includegraphics[width=0.6\textwidth]{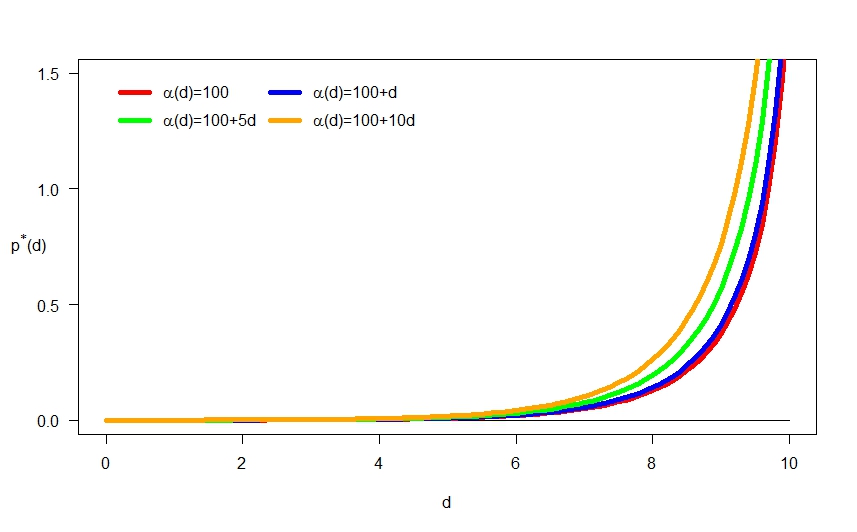}
\caption{\footnotesize{The $p^*$ values for the drawdown contract for Cram\'er-Lundberg model and various linear reward functions $\alpha(\cdot)$.
Parameters: $r=0.01, \hat{\mu}=0.05, \beta=0.1, \rho=2.5, a=10$.}}
\label{fig:p*cl}
\end{figure}
\begin{figure}[!ht]
\centering
\includegraphics[width=0.6\textwidth]{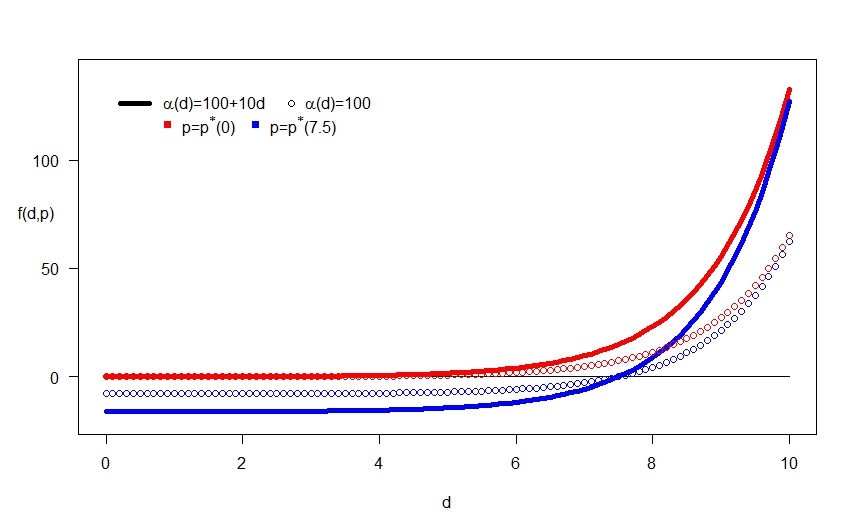}
\caption{\footnotesize{The contract value $f$ for the drawdown contract for Cram\'er-Lundberg model for various premiums $p$ and linear reward functions $\alpha(\cdot)$.
Parameters: $r=0.01, \hat{\mu}=0.05, \beta=0.1, \rho=2.5, a=10$.}}
\label{fig:fcl}
\end{figure}
\end{Ex}

\subsection{Cancellable feature}
We now extend the previous contract by adding a cancellable feature.
In other words, we give the investor a right to terminate the contract at any time prior to a pre-specified drawdown of size $a>0$.
To terminate, the contract she/he pays a fee $c(D_\tau)$ that depends on the level of the drawdown at the termination time.
It is intuitive to assume that the penalty function $c(\cdot )$ is non-increasing.
Indeed, if the value of drawdown is increasing, then the investor losses more money and the fee should decrease.
For simplicity of future calculation, let also $c$ be in $C^2(\mathbb{R}_+)$.
The value of the contract that we consider here equals:
\begin{align}
F(d,p) := \sup\limits_{\tau\in\mathcal{T}}\mathbb{E}_{|d}\Bigg[-\int_0^{\tau_{D}^+(a)\wedge\tau}e^{-rt}p \diff t - c(D_\tau)e^{-r\tau}\mathbbm{1}_{(\tau<\tau_{D}^+(a))} + \alpha(D_{\tau_{D}^+(a)}) e^{-r\tau_{D}^+(a)}\mathbbm{1}_{(\tau_{D}^+(a)\leq\tau)}; \tau<\infty\Bigg],\label{F}
\end{align}
where $\mathcal{T}$ is a family of all $\mathcal{F}_t$-stopping times.

One of the main goals of this paper is to identify the optimal stopping rule $\tau^*$ that realises the price $F(d,p)$. We start from the simple observation.
\begin{Prop}\label{FDecomp}
The cancellable drawdown insurance value admits the following decomposition:
\begin{align}\label{Fdecomposition}
F(d,p)=f(d,p)+G(d,p),
\end{align}
where
\begin{align}
&G(d,p):=\sup\limits_{\tau\in\mathcal{T}}g_\tau(d,p),\label{G}\\
&g_\tau(d,p):=\mathbb{E}_{|d}\left[e^{-r\tau}\tilde{f}(D_{\tau},p);\ \tau<\tau_{D}^+(a),\ \tau<\infty\right]\label{gtaub}\\
&\tilde{f}(d,p):=-f(d,p)-c(d)\label{tildef}
\end{align}
for $f$ defined in (\ref{f}).
\end{Prop}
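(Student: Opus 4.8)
The statement is a clean algebraic decomposition, so the strategy is to start from the definition of $F(d,p)$ in \eqref{F} and split the integral of the premium payment at the stopping time $\tau$. The key observation is that the premium integral over $[0,\tau_D^+(a)\wedge\tau]$ can be written as the integral over the full horizon $[0,\tau_D^+(a)]$ minus the tail from $\tau$ onward (on the event $\tau < \tau_D^+(a)$). This tail, together with the terminal reward $\alpha(D_{\tau_D^+(a)})e^{-r\tau_D^+(a)}$ restricted to the complementary events, is exactly what will reassemble into $f(d,p)$ plus a correction term living on $\{\tau < \tau_D^+(a)\}$.

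\textbf{First steps.} I would first fix a stopping time $\tau \in \mathcal{T}$ and rewrite the integrand inside the expectation. On the event $\{\tau_D^+(a)\leq \tau\}$ the payoff is $-\int_0^{\tau_D^+(a)}e^{-rt}p\diff t + \alpha(D_{\tau_D^+(a)})e^{-r\tau_D^+(a)}$, which is precisely the integrand defining $f(d,p)$ in \eqref{f}. On the complementary event $\{\tau < \tau_D^+(a)\}$ the payoff is $-\int_0^\tau e^{-rt}p\diff t - c(D_\tau)e^{-r\tau}$. The idea is to add and subtract the $f$-integrand on this second event: writing
\begin{align}
-\int_0^\tau e^{-rt}p\diff t - c(D_\tau)e^{-r\tau}
= \left[-\int_0^{\tau_D^+(a)}e^{-rt}p\diff t + \alpha(D_{\tau_D^+(a)})e^{-r\tau_D^+(a)}\right]
+ R_\tau,\nonumber
\end{align}
where $R_\tau$ collects the leftover terms. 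Taking $\mathbb{E}_{|d}$ of the bracketed part over both events recombines into $f(d,p)$, and what remains is $\mathbb{E}_{|d}[e^{-r\tau}R_\tau;\ \tau<\tau_D^+(a)]$.

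\textbf{Identifying the remainder via the strong Markov property.} The remainder $R_\tau$ on $\{\tau<\tau_D^+(a)\}$ equals $\int_\tau^{\tau_D^+(a)}e^{-rt}p\diff t - c(D_\tau)e^{-r\tau} - \alpha(D_{\tau_D^+(a)})e^{-r\tau_D^+(a)}$. Factoring out $e^{-r\tau}$ and applying the strong Markov property at $\tau$, the conditional expectation of $\int_0^{\tau_D^+(a)-\tau}e^{-rs}p\diff s - \alpha(D_{\tau_D^+(a)})e^{-r(\tau_D^+(a)-\tau)}$ given $\mathcal{F}_\tau$ depends only on $D_\tau$ and equals $-f(D_\tau,p)$ by the very definition of $f$ in \eqref{f} and its representation \eqref{f1}. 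Hence the remainder contributes $\mathbb{E}_{|d}[e^{-r\tau}(-f(D_\tau,p)-c(D_\tau));\ \tau<\tau_D^+(a)]$, which is exactly $g_\tau(d,p)$ with $\tilde f = -f - c$ as in \eqref{tildef} and \eqref{gtaub}. Taking the supremum over $\tau\in\mathcal{T}$ and using that the $f(d,p)$ term is deterministic (so it pulls out of the supremum) yields $F(d,p)=f(d,p)+G(d,p)$.

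\textbf{Main obstacle.} The delicate point is the Markovian reduction in the third step: I must verify that the post-$\tau$ dynamics of the drawdown process $D$ restart correctly, i.e.\ that after time $\tau$ the pair governing the remaining payoff behaves as a fresh drawdown problem started from level $D_\tau$ with the running maximum appropriately reset. Because $D_t=\overline X_t\vee d - X_t$ is not itself Markov but the pair $(X_t,\overline X_t)$ is, I would argue that on $\{\tau<\tau_D^+(a)\}$ the conditional law of the future, given $\mathcal{F}_\tau$, coincides with that of the contract started under $\mathbb{P}_{|D_\tau}$; this is exactly where the notational convention $\mathbb{P}_{|d}$ and the stationarity/independence of increments of $X$ are used. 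Handling the event $\{\tau=\infty\}$ and the integrability guaranteed by \eqref{warXi} to justify splitting the expectation is routine by comparison once this reduction is in place.
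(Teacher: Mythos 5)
Your proposal is correct and follows essentially the same route as the paper: split the payoff using $\mathbbm{1}_{(\tau_D^+(a)\leq\tau)}=1-\mathbbm{1}_{(\tau<\tau_D^+(a))}$ (equivalently, add and subtract the $f$-integrand on $\{\tau<\tau_D^+(a)\}$), pull out the $\tau$-independent term as $f(d,p)$, and identify the remainder as $e^{-r\tau}\tilde f(D_\tau,p)$ via the strong Markov property applied at $\tau$. The delicate point you flag — that the post-$\tau$ continuation depends only on $D_\tau$ — is precisely what the paper's $\mathbb{E}_{|D_\tau}$ notation encodes, so no gap remains.
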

\begin{proof}
Using $\mathbbm{1}_{(\tau\geq\tau_{D}^+(a))}=1-\mathbbm{1}_{(\tau <\tau_{D}^+(a))}$ in (\ref{F}) we obtain:
\begin{align}
&F(d,p)=\mathbb{E}_{|d}\left[ -\int_0^{\tau_{D}^+(a)}e^{-rt}p\diff t +\alpha(D_{\tau_{D}^+(a)}) e^{-r\tau_{D}^+(a)}\right] \nonumber \\
&\qquad +\sup\limits_{\tau\in\mathcal{T}}\mathbb{E}_{|d}\left[\int_{\tau}^{\tau_{D}^+(a)}e^{-rt}p \diff t\mathbbm{1}_{\left(\tau<\tau_D^+(a)\right)} - \alpha(D_{\tau_{D}^+(a)}) e^{-r\tau_{D}^+(a)}\mathbbm{1}_{\left(\tau<\tau_{D}^+(a)\right)} - c(D_\tau)e^{-r\tau}\mathbbm{1}_{\left(\tau <\tau_{D}^+(a)\right)}; \tau<\infty\right]. \nonumber
\end{align}
Note that the first term does not depend on $\tau$. The second term depends on $\tau$ only in event $\{\tau <\tau_D^+(a)\}$. Then, by using a strong Markov property we get:
\begin{align}
&F(d,p)=f(d,p)\nonumber\\
&\qquad + \sup\limits_{\tau\in\mathcal{T}}\mathbb{E}_{|d}\left[\int_{\tau}^{\tau_D^+(a)}e^{-rt}p \diff t\mathbbm{1}_{\left(\tau<\tau_D^+(a)\right)} - \alpha(D_{\tau_D^+(a)}) e^{-r\tau_D^+(a)}\mathbbm{1}_{\left(\tau<\tau_D^+(a)\right)} - c(D_\tau)e^{-r\tau}\mathbbm{1}_{\left(\tau <\tau_D^+(a)\right)};\ \tau<\infty\right] \nonumber \\
&=f(d,p)+\sup\limits_{\tau\in\mathcal{T}}\mathbb{E}_{|d}\left[e^{-r\tau}\mathbb{E}_{|D_\tau}\left[\int_{0}^{\tau_D^+(a)}e^{-rt}p\diff t - \alpha(D_{\tau_D^+(a)}) e^{-r\tau_D^+(a)} - c(D_\tau);\ \tau<\infty\right];\ \tau <\tau_D^+(a),\ \tau<\infty\right]. \nonumber
\end{align}
This completes the proof.
\end{proof}

To determine the optimal cancellation strategy for out contract, it is sufficient to solve the optimal stopping problem represented by the second increment of \eqref{Fdecomposition},
that is used to identify function $G(d,p)$. We use the ``guess and verify'' approach. This means that we first guess the candidate stopping rule and then we
verify that this is truly the optimal stopping rule
using the Verification Lemma given below.
\begin{lem}\label{war}
Let $\Upsilon_t$ be a right-continuous process living in some Borel state space $\mathbb{B}$
killed at some $\mathcal{F}^\Upsilon_t$-stopping time $\tau_0$, where $\mathcal{F}^\Upsilon_t$ is a right-continuous natural filtration of $\Upsilon$.
Consider the following stopping problem:
\begin{equation}\label{BVP}
v(\phi)=\sup\limits_{\tau\in\mathcal{T}^\Upsilon}\mathbb{E}\left[e^{-r\tau}V(\Upsilon_\tau)|\Upsilon_0=\phi\right]
\end{equation}
for some function $V$ and the family $\mathcal{F}^\Upsilon_t$-stopping times $\mathcal{T}^\Upsilon$.
Assume that
\begin{equation}
\mathbb{P}(\lim_{t\rightarrow \infty}e^{-rt}V(\Upsilon_t)<\infty|\Upsilon_0=\phi)=1.
\end{equation}
The pair $(v^*,\tau^*)$ is a solution of stopping problem (\ref{BVP}) if
for
$$v^*(\phi):=\mathbb{E}\left[e^{-r\tau^*}V(\Upsilon_{\tau^*})|\Upsilon_0=\phi\right]$$
we have
\begin{enumerate}[i)]
\item $v^*(\phi)\geq V(\phi)$ for all $\phi\in\mathbb{B}$ \label{i}
\item the process $e^{-rt}v^*(\Upsilon_t)$ is a right-continuous supermartingale. \label{ii}
\end{enumerate}
\end{lem}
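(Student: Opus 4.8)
The plan is to show that the candidate $v^*$ coincides with the value function $v$ of the stopping problem \eqref{BVP} and that $\tau^*$ attains the supremum; establishing $v=v^*$ is exactly what certifies optimality of $\tau^*$. One inequality is immediate and requires nothing beyond the definitions: since $\tau^*\in\mathcal{T}^\Upsilon$ is one particular admissible stopping time, its payoff cannot exceed the supremum over all such times, so $v^*(\phi)=\mathbb{E}[e^{-r\tau^*}V(\Upsilon_{\tau^*})\mid\Upsilon_0=\phi]\le v(\phi)$ for every $\phi\in\mathbb{B}$. The entire content of the lemma therefore lies in the reverse inequality $v(\phi)\le v^*(\phi)$.

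To obtain it, I would set $M_t:=e^{-rt}v^*(\Upsilon_t)$ and fix an arbitrary $\tau\in\mathcal{T}^\Upsilon$. By the supermartingale hypothesis \ref{ii}, $M_t$ is a right-continuous supermartingale, so for every bounded truncation $\tau\wedge n$ the optional sampling theorem yields
\begin{equation}
\mathbb{E}\big[M_{\tau\wedge n}\mid\Upsilon_0=\phi\big]\le \mathbb{E}\big[M_0\mid\Upsilon_0=\phi\big]=v^*(\phi).\nonumber
\end{equation}
I would then let $n\to\infty$. On $\{\tau<\infty\}$ the right-continuity of $\Upsilon$, and hence of $M$, gives $M_{\tau\wedge n}\to M_\tau$, while on $\{\tau=\infty\}$ one has $M_{\tau\wedge n}=M_n$, whose convergence to a finite limit is guaranteed by the standing assumption $\mathbb{P}(\lim_{t\to\infty}e^{-rt}V(\Upsilon_t)<\infty\mid\Upsilon_0=\phi)=1$. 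Passing the displayed inequality to the limit via Fatou's lemma produces $\mathbb{E}[M_\tau\mid\Upsilon_0=\phi]\le v^*(\phi)$.

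Finally I would invoke the domination property \ref{i}, namely $v^*\ge V$ pointwise on $\mathbb{B}$, to replace $v^*$ by $V$ at the stopping instant:
\begin{equation}
\mathbb{E}\big[e^{-r\tau}V(\Upsilon_\tau)\mid\Upsilon_0=\phi\big]\le \mathbb{E}\big[e^{-r\tau}v^*(\Upsilon_\tau)\mid\Upsilon_0=\phi\big]=\mathbb{E}\big[M_\tau\mid\Upsilon_0=\phi\big]\le v^*(\phi).\nonumber
\end{equation}
Since $\tau\in\mathcal{T}^\Upsilon$ was arbitrary, taking the supremum over $\tau$ gives $v(\phi)\le v^*(\phi)$, which together with the first step yields $v=v^*$ and exhibits $\tau^*$ as an optimal stopping time.

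I expect the delicate point to be the passage to the limit in $n$, that is, the optional-sampling argument at the possibly unbounded time $\tau$. Fatou's lemma applies cleanly only when $M$ is bounded below by an integrable random variable (for instance when the supermartingale is nonnegative, or when $V$ is bounded, as is the case for the payoffs actually used later in the paper); the finiteness hypothesis on $\lim_{t\to\infty}e^{-rt}V(\Upsilon_t)$ is precisely what controls the contribution of the event $\{\tau=\infty\}$ and prevents probability mass from escaping to infinity. Checking such an integrability or uniform-integrability condition in each concrete application is where the real work resides, whereas the supermartingale comparison itself is routine once the convergence is justified.
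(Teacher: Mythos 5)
Your argument is correct and is essentially the same as the paper's, which does not write out a proof but simply cites the standard verification lemma (Kyprianou, Lem.~9.1; Peskir--Shiryaev, Thm.~2.2), whose proof is exactly your two-step scheme: the trivial inequality $v^*\le v$ plus optional sampling of the supermartingale $e^{-rt}v^*(\Upsilon_t)$ followed by the domination $v^*\ge V$. Your closing remark correctly identifies the only delicate point (justifying the limit $n\to\infty$ via a lower integrable bound or uniform integrability), which is harmless here since the payoffs $\tilde f$ and $\tilde k$ used later in the paper are bounded.
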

\begin{proof}
The proof follows the same arguments as the proof of \cite[Lem. 9.1, p. 240]{KIntr}; see also \cite[Th. 2.2, p. 29]{Peskir}.
\end{proof}

Using above Verification Lemma \ref{war} we prove that the first passage time of drawdown process below some level $\theta$ is the
optimal stopping time for \eqref{G} (hence, also for \eqref{Fdecomposition}); that is,
\begin{equation}\label{tau*pierwsze}
\tau^*:=\tau_D^-(\theta)\in\mathcal{T}.\end{equation}
for some optimal $\theta^*\in (0,a)$.

For the stopping rule (\ref{tau*pierwsze}), we consider two cases: when $d>\theta$ and when $d\leq\theta$; that is, we decompose
$g_{\tau_D^-(\theta)}(d,p)$ as follows:
\begin{align}
g_{\tau_D^-(\theta)}(d,p)=g_{\tau_D^-(\theta)}(d,p)\mathbbm{1}_{(d>\theta)}+g_{\tau_D^-(\theta)}(d,p)\mathbbm{1}_{(d\leq\theta)}:=g_>(d,p,\theta)\mathbbm{1}_{(d>\theta)}+g_<(d,p,\theta)\mathbbm{1}_{(d\leq\theta)}.\nonumber
\end{align}
If $d\leq\theta$, then we have situation when investor should stop the contract immediately. In this case we have,
\begin{eqnarray}
g_<(d,p,\theta)=\tilde{f}(d,p).\label{g=f}
\end{eqnarray}
To analyse the complimentary case of $d>\theta$, observe now that, the choice of class of stopping time \eqref{tau*pierwsze}
implies that $D_{\tau^*}=D_{\tau_D^-(\theta)}=\theta$ since $X$ is spectrally negative and it goes upward continuously. This implies that:
\begin{align}
g_>(d,p,\theta)=\E_{|d}\left[e^{-r\tau_D^-(\theta)}\tilde{f}(\theta,p);\ \tau_D^-(\theta)<\tau_D^+(a)\right]=\tilde{f}(\theta,p)\frac{W^{(r)}(a-d)}{W^{(r)}(a-\theta)}.\label{g}
\end{align}
From this structure, it follows that it is optimal to never terminate the contract earlier if $\tilde{f}(\theta,p)<0$ for all $\theta$.
To eliminate this trivial case, we assume from now on that there exist at least one $\theta_0\geq 0$ which satisfies
\begin{equation}\label{mainzalozenia}
\tilde{f}(\theta_0,p)>0.
\end{equation}

Recall that by \eqref{Fdecomposition} the optimal level $\theta^*$ maximises the value function $F(d,p)$ given in \eqref{F} and, hence,
it also maximises the value function $G(d,p)$ \eqref{G}.
Thus, we have to choose $\theta^*$ as follows:
\begin{align}\label{teta*}
\theta^*=\inf\left\{ \theta\in (0,a):\ g_>(d,p,\varsigma)\leq g_>(d,p,\theta) \ \forall \varsigma\geq 0\  \right\}.
\end{align}
\begin{lem}
Assume that \eqref{mainzalozenia} holds.
Then, there exists $\theta^*$ given by \eqref{teta*} and, additionally, $\theta^*$ does not depend on starting position of the drawdown $d$.
\end{lem}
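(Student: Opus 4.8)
The plan is to collapse the optimisation in \eqref{teta*}, which superficially involves both $d$ and $\theta$, into a one–dimensional problem in $\theta$ alone, with $d$ entering only through a positive multiplicative constant. Starting from the explicit form \eqref{g}, write
\begin{equation*}
g_>(d,p,\theta)=\tilde f(\theta,p)\,\frac{W^{(r)}(a-d)}{W^{(r)}(a-\theta)}=W^{(r)}(a-d)\,H(\theta),\qquad H(\theta):=\frac{\tilde f(\theta,p)}{W^{(r)}(a-\theta)},
\end{equation*}
where for $d\in[0,a)$ the prefactor $W^{(r)}(a-d)$ is a strictly positive number (recall $W^{(r)}>0$ on $(0,\infty)$) that is constant in $\theta$. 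Note also that a stopping level $\varsigma\ge a$ is not admissible, since the contract has already terminated at drawdown $a$, so the quantification in \eqref{teta*} runs effectively over $\varsigma\in(0,a)$.

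The independence of $\theta^*$ from $d$ is then immediate. Dividing the defining inequality $g_>(d,p,\varsigma)\le g_>(d,p,\theta)$ by the positive constant $W^{(r)}(a-d)$ turns it into the equivalent condition $H(\varsigma)\le H(\theta)$, which no longer mentions $d$. Hence the maximising set in \eqref{teta*} coincides with the set of maximisers of $H$ on $(0,a)$ for \emph{every} $d$, and $\theta^*$, being its infimum, cannot depend on $d$. It therefore remains only to show that this maximising set is nonempty and that its infimum lies in $(0,a)$.

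For existence I would argue by continuity together with a boundary analysis. The map $H$ is continuous on $(0,a)$, since $\tilde f(\cdot,p)=-f(\cdot,p)-c(\cdot)$ is continuous (the penalty $c$ is $C^2$ and $f$ is given explicitly through the scale functions via \eqref{f1} and Proposition \ref{Ksi}) while $W^{(r)}\in C^1$ is strictly positive on $(0,\infty)$, and by \eqref{mainzalozenia} there is a point where $\tilde f>0$, so $\sup H>0$. The left endpoint is harmless: as $\theta\to 0^+$ one has $H(\theta)\to\tilde f(0,p)/W^{(r)}(a)$, a finite limit. The delicate point, which I expect to be the main obstacle, is the right endpoint $\theta\to a$, where $W^{(r)}(a-\theta)\to W^{(r)}(0)$ and one must rule out a blow-up of $H$ to $+\infty$. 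In the bounded-variation regime $W^{(r)}(0)>0$, so $H$ extends continuously to the compact interval $[0,a]$ and attains its maximum there. In the unbounded-variation regime $W^{(r)}(0)=0$, and the key observation is that $\tilde f(a,p)\le 0$: when $D_0=a$, regularity of $0$ for $(-\infty,0)$ forces the drawdown to strictly exceed $a$ instantaneously (for any such $s$ with $X_s<0$ one has $D_s=\overline{X}_s\vee a-X_s>a$), so $\tau_D^+(a)=0$, whence $\xi(a)=1$ and $f(a,p)=\Xi(a)=\alpha(a)$, giving $\tilde f(a,p)=-\alpha(a)-c(a)\le 0$ because the reward $\alpha$ and the penalty $c$ are nonnegative. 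Consequently $H(\theta)\to\tilde f(a,p)/0^+\in[-\infty,0]$, so $H$ is bounded above near $a$ and its strictly positive supremum is not approached at either endpoint.

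Putting these together, $H$ is continuous on $(0,a)$, bounded above, with $\sup H>0$ attained in the interior; hence its set of maximisers is a nonempty closed subset of $(0,a)$, its infimum belongs to it and lies in $(0,a)$, and this is precisely the $\theta^*$ of \eqref{teta*}. I expect the only genuinely technical step to be the endpoint analysis at $\theta=a$ in the unbounded-variation case, i.e. the identity $\tau_D^+(a)=0$ for $D_0=a$ and the resulting sign $\tilde f(a,p)\le 0$ that prevents $H$ from diverging; the remainder is a routine continuity and compactness argument together with the division by a positive constant.
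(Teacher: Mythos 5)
Your reduction of \eqref{teta*} to the $d$-free problem of maximising $H(\theta)=\tilde f(\theta,p)/W^{(r)}(a-\theta)$ is correct and is essentially the paper's own mechanism for the second claim (the paper phrases it via the bracket in \eqref{gprime_theta}, which does not involve $d$; your multiplicative factorisation is cleaner). Your right-endpoint analysis, including the sign $\tilde f(a,p)=-\alpha(a)-c(a)\le 0$ and the case split on $W^{(r)}(0)$, is also sound and matches the paper's observation that $\tilde f(a,p)<0$.

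The gap is at the left endpoint. You assert that the strictly positive supremum of $H$ ``is not approached at either endpoint,'' but for $\theta\to 0^+$ you only establish that the limit $\tilde f(0,p)/W^{(r)}(a)$ is \emph{finite}. Finiteness does not exclude the scenario in which $\tilde f(0,p)>0$ and $H$ attains its maximum over $[0,a)$ only at $\theta=0$ (equivalently, the supremum over $(0,a)$ is approached only as $\theta\to 0^+$ and never attained); in that case the set in \eqref{teta*} is empty, since every $\theta\in(0,a)$ fails the test against $\varsigma=0$, and existence of $\theta^*$ breaks down. This is exactly the case the paper's proof is built to handle: when $\tilde f(0,p)\ge 0$ it computes
\begin{equation*}
\frac{\partial}{\partial\theta}\tilde f(\theta,p)\Big|_{\theta=0}=-\frac{p}{r}\,\xi'(0)-\Xi'(0)-c'(0)=-c'(0)>0,
\end{equation*}
using $\xi'(0)=\Xi'(0)=0$ (a consequence of the explicit scale-function formulas, reflecting normal reflection of $D$ at $0$) together with the monotonicity of $c$, and concludes from \eqref{gprime_theta} that $\frac{\partial}{\partial\theta}g_>(d,p,\theta)\big|_{\theta=0}>0$, so $H$ strictly increases away from $0$ and the supremum is genuinely interior. (When $\tilde f(0,p)<0$ the issue does not arise, since then $H(0^+)<0<\sup H$ by \eqref{mainzalozenia}, which is the paper's other case.) Your argument needs this derivative computation, or an equivalent statement, to close the case $\tilde f(0,p)>0$; the rest is fine.
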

\begin{proof}
At the beginning, note that by \eqref{f1}, Proposition \ref{Ksi}, assumption \eqref{Wprim} and \eqref{tildef}, the function $\tilde{f}$ is continuous.
Moreover, $\tilde{f}(a,p)=-\alpha (a)<0$ since for the initial drawdown $d=a$ the insured drawdown level is achieved at the beginning and the reward has to be paid immediately.
Consider now two cases.
If $\tilde{f}(0,p)<0$, then, by assumption \eqref{mainzalozenia}, there exist $\theta_0$ such that $\tilde{f}(\theta_0,p)>0$.
This implies existence of $\theta^*$ defined in \eqref{teta*}.

On the other hand, if $\tilde{f}(0,p)\geq 0$, then using \eqref{g} we have:
\begin{align}
\frac{\partial}{\partial\theta}g_>(d,p,\theta)&=\frac{\partial}{\partial\theta}\tilde{f}(\theta,p)\frac{W^{(r)}(a-d)}{W^{(r)}(a-\theta)}+\tilde{f}(\theta,p)\frac{W^{(r)}(a-d)W^{\prime(r)}(a-\theta)}{(W^{(r)}(a-\theta))^2}\nonumber\\
&=\frac{W^{(r)}(a-d)}{W^{(r)}(a-\theta)}\left(\frac{\partial}{\partial\theta}\tilde{f}(\theta,p)+\tilde{f}(\theta,p)\frac{W^{\prime (r)}(a-\theta)}{W^{(r)}(a-\theta)}\right).\label{gprime_theta}
\end{align}
Note that by \eqref{xi}-\eqref{Xi} we have
$\xi^\prime (0)=0$ and $\Xi^\prime (0)=0$. Thus
$$\frac{\partial}{\partial\theta}\tilde{f}(\theta ,p)|_{\theta=0}=-\frac{p}{r}\xi^\prime (0)-\Xi^\prime (0)-c^\prime (0)=-c^\prime (0)>0.$$
Recall that scale function $W^{(r)}$ given in \eqref{W} is non-negative. 
Thus, $\frac{\partial}{\partial\theta}g_>(d,p,\theta)|_{\theta=0}>0$ as each term in \eqref{gprime_theta} is positive.
We can now conclude that there exist local maximum $\theta^*>0$. Note that to find $\theta^*$ it suffices
to make expression in the brackets of \eqref{gprime_theta} equal zero and, therefore,
$\theta^*$ does not depend on $d$.
\end{proof}

The choice of the stopping rule \eqref{tau*pierwsze} ensures that the ``continuous fit'' condition holds:
\begin{align}
g_>(d,p,\theta^*)|_{d\searrow\theta^*}=g_<(d,p,\theta^*)|_{d\nearrow\theta^*}\qquad\textrm{\textit{(continuous fit)}}.\label{continuous_fit}
\end{align}
Indeed, the continuous fit is satisfied for any $\theta\in (0,a)$ from the continuity of scale function $W^{(r)}$ and the
definition of $g_>(d,p,\theta)$ given in \eqref{g}. Moreover, if we choose $\theta=\theta^*$ defined in \eqref{teta*}, then we have a ``smooth fit'' property satisfied:
\begin{align}
\frac{\partial}{\partial d}g_>(d,p,\theta^*)|_{d\searrow\theta^*}=\frac{\partial}{\partial d}g_<(d,p,\theta^*)|_{d\nearrow\theta^*}\qquad\textrm{\textit{(smooth fit)}}.\label{smooth_fit}
\end{align}
The {\it smooth fit} for $\theta^*$ follows from its definition \eqref{teta*} and from equation \eqref{gprime_theta}.
Indeed, since $\theta^*$ maximises $g_>(d,p,\theta)$ (given in \eqref{g}) with respect to $\theta$ it must be a root of \eqref{gprime_theta}.
Then, by equation \eqref{g=f} we have,
\begin{align}
\frac{\partial}{\partial d}g_>(d,p,\theta^*)|_{d=\theta^*}=-\tilde{f}(\theta^*,p)\frac{W^{\prime (r)}(a-\theta^*)}{W^{(r)}(a-\theta^*)}=\frac{\partial}{\partial d}\tilde{f}(d,p)|_{d=\theta^*}=\frac{\partial}{\partial d}g_<(d,p,\theta^*)|_{d=\theta^*}.\nonumber
\end{align}
Note that from the above equalities we see that the {\it smooth fit} conditions is satisfied whenever the scale function $W^{(r)}$ has continuous derivative. Thus, the {\it smooth fit} condition is always fulfilled in our framework, as we assumed in \eqref{Wprim}.

We will now verify that the stopping time $\tau^*$ given in \eqref{tau*pierwsze} is indeed optimal.
The proof of this fact is based on Verification Lemma \ref {war} and we start all of our considerations from the following lemma.
\begin{lem}\label{supermtg}
By adding superscript $c$ to some process we denote its continuous part.
Let $X_t$ be spectrally negative L\'evy process and define two disjoint regions:
\begin{align}
&C=\{d\in\mathbb{R}_+ :\ d>\theta^*\},\nonumber\\
&S=\{d\in\mathbb{R}_+ :\ 0\leq d<\theta^*\}\nonumber
\end{align}
for some $\theta^*\in (0,a)$.
Let $p(t,d,u)$ be $C^{1,2,2}$-function on $\overline{C}$ and $\overline{S}$.
We denote $$p_C(t,d,u):=p(t,d,u)\mathbbm{1}_{(d\in C)}\qquad\text{and}\qquad p_S(t,d,u):=p(t,d,u)\mathbbm{1}_{(d\in S)}.$$
For any function $\hat{p}$ we denote:
\begin{align}
I(\hat{p})(t,D_t, U_t):=&\hat{p}(0,D_0, U_0)+\int_0^t\frac{\partial}{\partial s}\hat{p}(s,d,u)|_{(s,d,u)=(s,D_s, U_s)}\diff s\nonumber\\
&+\int_0^t \frac{\partial}{\partial d}\hat{p}(s,d,u)|_{(s,d,u)=(s,0, U_{s-})}\diff \overline{X}^c_s
-\int_0^t \frac{\partial}{\partial d}\hat{p}(s,d,u)|_{(s,d,u)=(s,D_{s-}, U_{s-})}\diff X_s^c\nonumber\\
&+\int_0^t \frac{\partial}{\partial u}\hat{p}(s,d,u)|_{(s,d,u)=(s,D_{s-}, U_{s-})}\diff X_s^c-\int_0^t \frac{\partial}{\partial u}\hat{p}(s,d,u)|_{(s,d,u)=(s,D_{s-}, 0)}\diff \underline{X}^c_s\nonumber\\
&+\frac{1}{2}\int_0^t \left(\frac{\partial^2}{\partial d^2}
+2\frac{\partial^2}{\partial d\partial u}+\frac{\partial^2}{\partial u^2}\right)\hat{p}(s,d,u)|_{(s,d,u)=(s,D_{s-}, U_{s-})}\diff [X]^c_s\nonumber\\
&+\sum\limits_{0<s\leq t}\bigg(\hat{p}(s,D_s, U_s)-\hat{p}(s,D_{s-}, U_{s-})\nonumber\\&\qquad -\frac{\partial}{\partial d}\hat{p}(s,d,u)|_{(s,d,u)=(s,D_s, U_s)}\Delta D_s
-\frac{\partial}{\partial u}\hat{p}(s,d,u)|_{(s,d,u)=(s,D_s, U_s)}\Delta U_s\bigg),\nonumber
\end{align}
where $\Delta D_s = D_s-D_{s-}=-(X_s-X_{s-})=-\Delta X_s$, $\Delta U_s = U_s-U_{s-}$.
\begin{enumerate}[i)]
\item
Then, the process $p(t,D_t, U_t)$ is a supermartingale if
\subitem $I(p_C)(t,D_t, U_t)$ is a martingale,
\subitem $I(p_S)(t,D_t, U_t)$ is a supermartingale
\subitem and if the following {\it smooth fit} condition holds:
\begin{equation}\label{smooth}\frac{\partial}{\partial d}p_S(t,d,u)\big|_{d\nearrow\theta^*}
=\frac{\partial}{\partial d}p_C(t,d,u)\big|_{d\searrow\theta^*},\qquad\textrm{(\textit{smooth fit})}\end{equation}
\subitem for $\sigma >0$ and if the following {\it continuous fit} condition  holds:
\begin{equation}\label{smooth2}p_S(t,\theta^*,u)=p_C(t,\theta^*,u)\qquad\textrm{(\textit{continuous fit})}\end{equation}
\subitem for $\sigma=0$ and $X_t$ being a L\'evy process of bounded variation.
\item
If the stopped process $p_C(t\wedge \tau_D^-(\theta^*),D_{t\wedge \tau_D^-(\theta^*)}, U_{t\wedge \tau_D^-(\theta^*)})$ is a martingale then the process $I(p_C)(t,D_t, U_t)$ is also a martingale and we have
\begin{alignat}{2}
&\frac{\partial}{\partial t}p_C(t,d,u)+\mathcal{A}^{(D,U)}p_C(t,d,u)=0,\,&&\qquad\textrm{(\textit{martingale condition})}\label{mtg}\\
&\frac{\partial}{\partial d}p_C(t,d,u)|_{d=0}=0,\ \frac{\partial}{\partial u}p_C(t,d,u)|_{u=0}= 0,&&\qquad\textrm{(\textit{normal reflection})}\label{normalref}
\end{alignat}
where $\mathcal{A}^{(D,U)}$ is a full generator of
the Markov process $(D_t,U_t)$ defined as follows:
\begin{align}
&\mathcal{A}^{(D,U)}\hat{p}(t,d,u):=\nonumber\\&
\qquad-\mu \frac{\partial}{\partial d}\hat{p}(t,d,u)+\mu \frac{\partial}{\partial u}\hat{p}(t,d,u)+\frac{\sigma^2}{2}\frac{\partial^2}{\partial d^2}\hat{p}(t,d,u)+\frac{\sigma^2}{2}\frac{\partial^2}{\partial u^2}\hat{p}(t,d,u)-\frac{\sigma^2}{2}\frac{\partial^2}{\partial u\partial d}\hat{p}(t,d,u)\nonumber\\&
\qquad +\int_{(0,\infty)}\left(\hat{p}(t,d+z,(u-z)\vee 0)-\hat{p}(t,d,u)-z\frac{\partial}{\partial d}\hat{p}(t,d,u)\mathbbm{1}_{(z<1)}
+z\frac{\partial}{\partial u}\hat{p}(t,d,u)\mathbbm{1}_{(z<1)}\right)\Pi(-\diff z),\label{Dgenerator}
\end{align}
for $\sigma>0$ and
\begin{align}
\mathcal{A}^{(D,U)}\hat{p}(t,d,u):=-\hat{\mu} \frac{\partial}{\partial d}\hat{p}(t,d,u)+\hat{\mu} \frac{\partial}{\partial u}\hat{p}(t,d,u)+\int_{(0,\infty)}\left(\hat{p}(t,d+z,(u-z)\vee 0)-\hat{p}(t,d)\right)\Pi(-\diff z),\label{generator_bounded}
\end{align}
for $\sigma=0$ and $X_t$ being a L\'evy process with bounded variation.
\item The process $I(p_S)(t,D_t, U_t)$ is a supermartingale if:
\begin{alignat}{2}
&\frac{\partial}{\partial t}p_S(t,d,u)+\mathcal{A}^{(D,U)}p_S(t,d,u)\leq 0,&&\qquad\textrm{(\textit{supermartingale condition})}\label{supmtg}\\
&\frac{\partial}{\partial d}p_S(t,d,u)|_{d=0}\geq 0,\ \frac{\partial}{\partial u}p_S(t,d,u)|_{u=0}\geq 0.\label{normalref2}
\end{alignat}
\end{enumerate}
\end{lem}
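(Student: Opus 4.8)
The starting point is to recognise that the operator $I(\cdot)$ is precisely the Itô--Meyer (generalised change-of-variables) expansion of the process $s\mapsto\hat{p}(s,D_s,U_s)$: for a globally $C^{1,2,2}$ function $\hat{p}$ one has $\hat{p}(t,D_t,U_t)=I(\hat{p})(t,D_t,U_t)$, the separate integrals collecting the time-derivative, the two reflection regulators $\overline{X}^c$ (active only on $\{D=0\}$) and $\underline{X}^c$ (active only on $\{U=0\}$), the continuous-martingale and quadratic-variation contributions of $X^c$, and the pure-jump part. The whole proof is then an exercise in reading off, from this decomposition, which pieces are (local) martingales and which are predictable finite-variation, and in controlling the single non-smooth interface $\{d=\theta^*\}$. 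The plan is therefore: first apply the generalised Itô formula to the piecewise-smooth $p=p_C+p_S$; then use the fit condition appropriate to the path regularity of $X$ to annihilate the interface term; and finally identify the drift of each piece with $[\partial_t+\mathcal{A}^{(D,U)}]p$ plus the two reflection terms.

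For part \ref{i} I would apply the change-of-variables formula with local time on the curve $\{d=\theta^*\}$ to $p$, which is only $C^{1,2,2}$ on each of $\overline{C}$ and $\overline{S}$. Writing $p=p_C\mathbbm{1}_{(D\in C)}+p_S\mathbbm{1}_{(D\in S)}$ produces $I(p_C)\mathbbm{1}_{(D\in C)}+I(p_S)\mathbbm{1}_{(D\in S)}$ together with a boundary term carried by the local time of $D$ at $\theta^*$, whose integrand is proportional to the jump $[\partial_d p_C-\partial_d p_S]|_{d=\theta^*}$ of the normal derivative across the interface. Here spectral negativity is essential: $X$ has no positive jumps, so $D=\overline{X}\vee d-X$ can only cross $\theta^*$ downward continuously, while all its upward crossings are jumps that land inside $C$; consequently the interface is always approached from the $C$-side continuously. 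When $\sigma>0$ the Gaussian component gives $D$ a non-trivial semimartingale local time at $\theta^*$, and the smooth-fit condition \eqref{smooth} makes the integrand vanish. When $\sigma=0$ and $X$ is of bounded variation, $D$ has no continuous martingale part, its semimartingale local time at $\theta^*$ vanishes, and the Lebesgue--Stieltjes chain rule applies provided $p$ is merely continuous across $\theta^*$; the continuous-fit condition \eqref{smooth2} supplies exactly this and prevents spurious jumps of $p(\cdot,D_\cdot,U_\cdot)$ at downcrossings. In either regime one is left with $I(p_C)\mathbbm{1}_{(D\in C)}+I(p_S)\mathbbm{1}_{(D\in S)}$, a sum of a martingale and a supermartingale, hence a supermartingale; the passage from local to true (super)martingale uses the integrability hypothesis inherited from Lemma \ref{war}.

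For parts \ref{ii} and iii) the key structural observation is that the drift (predictable finite-variation) part of $I(\hat{p})(t,D_t,U_t)$ equals $\int_0^t[\partial_s\hat{p}+\mathcal{A}^{(D,U)}\hat{p}]\diff s$ plus the two reflection integrals against $\diff\overline{X}^c$ and $\diff\underline{X}^c$. Indeed, since a negative jump $-z$ of $X$ maps $(d,u)\mapsto(d+z,(u-z)\vee 0)$, compensating the jump sum produces exactly the integral term in \eqref{Dgenerator}--\eqref{generator_bounded}, while the $\sigma\,\diff B$-integrals and the compensated jumps form the martingale part. For part \ref{ii}, the change-of-variables identity shows the stopped process coincides with $I(p_C)$ stopped at $\tau_D^-(\theta^*)$; a predictable finite-variation martingale is constant, so the drift must vanish identically. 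Since the bulk contribution is absolutely continuous in $\diff s$ whereas the reflection measures $\diff\overline{X}^c,\diff\underline{X}^c$ are singular, supported on $\{D=0\}$ and $\{U=0\}$, the two parts vanish separately, yielding the interior equation \eqref{mtg} and the normal-reflection conditions \eqref{normalref}; letting the starting point range over $C$ upgrades this to $I(p_C)$ being a martingale on all of $\mathbb{R}_+$. For part iii) the same splitting is used, but the supermartingale requirement only forces the drift to be non-increasing: the absolutely continuous part gives the inequality \eqref{supmtg}, while the monotonicity of the regulators $\overline{X}^c$ (non-decreasing) and $\underline{X}^c$ (non-increasing) dictates the admissible sign of the reflection coefficients, producing \eqref{normalref2}.

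The main obstacle is Step \ref{i}: rigorously justifying the generalised Itô formula for the non-smooth $p$ and, above all, the clean removal of the interface term in the two regimes. The delicate point is that the regularity of $W^{(r)}$ assumed in \eqref{Wprim} is exactly what underwrites the smooth-fit identity when $\sigma>0$, whereas in the bounded-variation case one must instead argue that $D$ spends zero Lebesgue time at $\theta^*$, so that a kink in $\partial_d p$ is harmless once continuity is imposed; it is spectral negativity, forcing downcrossings of $\theta^*$ to be continuous, that makes both arguments go through. A secondary technical point is the localisation/integrability needed to promote the resulting local (super)martingales to genuine ones, which is handled by the growth hypothesis carried over from the verification setup.
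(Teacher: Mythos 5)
Your proposal is correct and follows essentially the same route as the paper's proof: a generalised It\^o/Meyer change-of-variables formula (in the spirit of Eisenbaum--Kyprianou, resp.\ Kyprianou--Surya in the bounded-variation case) producing $I(p_C)+I(p_S)$ plus a local-time term on the interface $\{d=\theta^*\}$, which the smooth-fit (resp.\ continuous-fit) condition annihilates, followed by a Dynkin-type identification of the drift of $I(\hat p)$ with $\partial_t+\mathcal{A}^{(D,U)}$ plus the singular reflection integrals against $\diff\overline{X}^c$ and $\diff\underline{X}^c$. The separation of the absolutely continuous drift from the singular regulator measures to obtain \eqref{mtg}--\eqref{normalref} and \eqref{supmtg}--\eqref{normalref2}, and the use of spectral negativity to control crossings of $\theta^*$, match the paper's argument.
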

\begin{proof}
See Appendix.
\end{proof}
\begin{Rem}\label{rem_lem7}
This lemma considers the case of function $p(t,D_t,U_t)$, which may depend on $t$, $D_t$ and $U_t$. In the framework of the drawdown contract we take
\begin{align}
p(t,d,u)=p(t,d)=e^{-rt}g_{\tau_D^-(\theta^*)}(d,p)\nonumber
\end{align}
though, which depends on $t$ and $d$ only. In this case, this lemma still holds true by taking all derivatives with respect to $u$ equal to $0$. Thus, we simplify the generator $\mathcal{A}^{(D,U)}$, given in \eqref{Dgenerator}, to the full generator of Markov process $D_t$ only:
\begin{align}
\mathcal{A}^{(D)}\hat{p}(t,d):=-\mu \frac{\partial}{\partial d}\hat{p}(t,d)+\frac{\sigma^2}{2}\frac{\partial^2}{\partial d^2}\hat{p}(t,d)+\int_{(0,\infty)}\left(\hat{p}(t,d+z)-\hat{p}(t,d)-z\frac{\partial}{\partial d}\hat{p}(t,d)\mathbbm{1}_{(z<1)}\right)\Pi(-\diff z),\nonumber
\end{align}
for $\sigma>0$ and
\begin{align}
\mathcal{A}^{(D)}\hat{p}(t,d):=-\hat{\mu} \frac{\partial}{\partial d}\hat{p}(t,d)+\int_{(0,\infty)}\left(\hat{p}(t,d+z)-\hat{p}(t,d)\right)\Pi(-\diff z),\nonumber
\end{align}
for $\sigma=0$ and $X_t$ being a L\'evy process with bounded variation.
\end{Rem}

The main message of this crucial lemma is that we can separately analyse the continuation and stopping regions.
In other words, to check the supermartingale condition of Verification Lemma \ref{war},
it is enough to prove the \textit{martingale property} in so-called continuation region $C$ and the \textit{supermartingale property} in so-called stopping region
$S$. This is possible thanks to the smooth/continuous fit condition that holds at the boundary that links these two regions.
We can now solve our optimisation problem.

\begin{theorem} Assume that \eqref{mainzalozenia} holds.
Let non-increasing bounded penalty function $c: \mathbb{R}_+\rightarrow\mathbb{R}_+$ be in $C^2([0,a))$ and satisfies:
\begin{align}
-rc(d)-\mu c'(d)+\frac{\sigma^2}{2}c''(d)+\int_{(0,\infty)}\left(c(d+z)-c(d)-zc'(d)\mathbbm{1}_{(z<1)}\right)\Pi(-\diff z) \geq -p.\label{war1}
\end{align}
If
\begin{align}
\int_{(\theta^*-d,\infty)}\tilde{f}(d+z,p)\Pi(-\diff z)\geq 0\ \textrm{for all }d\in[0,\theta^*),\label{assum_additional}
\end{align}
then the stopping time $\tau_D^-(\theta^*)$ for $\theta^*$
defined by (\ref{teta*}) is the optimal stopping rule for the stopping problem (\ref{G}) and hence also for the insurance contract \eqref{F}.
\end{theorem}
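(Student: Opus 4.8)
The plan is to apply the Verification Lemma \ref{war} with the killed Markov process $\Upsilon_t = D_t$ (the drawdown, killed at $\tau_0 = \tau_D^+(a)$), reward $V(\cdot) = \tilde f(\cdot,p)$ and candidate rule $\tau^* = \tau_D^-(\theta^*)$. The candidate value function is $v^*(d) = g_{\tau_D^-(\theta^*)}(d,p)$, which by \eqref{g=f} and \eqref{g} equals $\tilde f(d,p)$ on the stopping region $S=[0,\theta^*)$ and $\tilde f(\theta^*,p)\,W^{(r)}(a-d)/W^{(r)}(a-\theta^*)$ on the continuation region $C=(\theta^*,a)$. It then remains to check the two hypotheses of Lemma \ref{war}: (i) $v^*\ge \tilde f$ everywhere, and (ii) $e^{-rt}v^*(D_t)$ is a right-continuous supermartingale; the integrability requirement $\mathbb{P}(\lim_t e^{-rt}V(D_t)<\infty)=1$ follows from boundedness of $c$, the finiteness \eqref{warXi} of $f$, and $\tau_D^+(a)<\infty$ a.s.

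For (i), equality holds on $S$ by construction, so only $d\in C$ matters. Here I would use the maximality of $\theta^*$: by \eqref{teta*} the map $\varsigma\mapsto \tilde f(\varsigma,p)/W^{(r)}(a-\varsigma)$ is maximised at $\theta^*$, and evaluating this inequality at $\varsigma=d$ and multiplying by the positive factor $W^{(r)}(a-d)$ gives exactly $v^*(d)=\tilde f(\theta^*,p)W^{(r)}(a-d)/W^{(r)}(a-\theta^*)\ge \tilde f(d,p)$.

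For (ii), I would invoke Lemma \ref{supermtg} with Remark \ref{rem_lem7}, which reduces the supermartingale property to a martingale condition in $C$, a supermartingale condition in $S$, the fit conditions at $\theta^*$, and the reflection conditions at $d=0$. The fits are already recorded in \eqref{continuous_fit} and \eqref{smooth_fit}, while the reflection condition \eqref{normalref2} in $S$ reads $\frac{\partial}{\partial d}\tilde f(d,p)\big|_{d=0}=-c'(0)\ge 0$, true since $c$ is non-increasing. The martingale condition in $C$ holds because $v^*$ is a constant multiple of $W^{(r)}(a-\cdot)$, for which $e^{-r(t\wedge\tau)}W^{(r)}(a-D_{t\wedge\tau})$ with $\tau=\tau_D^-(\theta^*)\wedge\tau_D^+(a)$ is a martingale by the exit identity \eqref{twosided1}, equivalently $(\mathcal{A}^{(D)}-r)v^*=0$ on $C$.

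The decisive step is the supermartingale condition \eqref{supmtg} in $S$. Writing $p_S(t,d)=e^{-rt}\tilde f(d,p)\mathbbm{1}_{(d\in S)}$ and noting that the jump integral in the generator sees $p_S(t,d+z)=0$ once $d+z\ge\theta^*$, the condition becomes
\[
(\mathcal{A}^{(D)}-r)\tilde f(d,p)-\int_{(\theta^*-d,\infty)}\tilde f(d+z,p)\,\Pi(-\diff z)\le 0,\qquad d\in[0,\theta^*).
\]
For the first term I would use the Feynman--Kac identity $(\mathcal{A}^{(D)}-r)f(d,p)=p$ on $(0,a)$, which is exactly the equation underlying Proposition \ref{Ksi} (it encodes the running premium $-p$ together with the payoff at level $a$); since $\tilde f=-f-c$, this gives $(\mathcal{A}^{(D)}-r)\tilde f=-p-(\mathcal{A}^{(D)}-r)c\le 0$, the inequality being precisely assumption \eqref{war1}. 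The second term is non-negative by assumption \eqref{assum_additional}, so subtracting it preserves the sign. I expect this truncated jump term --- arising because $D$ can jump upward out of $S$ while $p_S$ is set to zero there --- to be the main obstacle, as it is exactly the reason the extra hypothesis \eqref{assum_additional} is imposed on top of \eqref{war1}. Once (i) and (ii) hold, Lemma \ref{war} yields optimality of $\tau_D^-(\theta^*)$ for \eqref{G}, and the decomposition \eqref{Fdecomposition} transfers it to the full contract \eqref{F}.
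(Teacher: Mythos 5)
Your proposal is correct and follows essentially the same route as the paper: the Verification Lemma with $\Upsilon_t=D_t$, domination via the maximality of $\theta^*$, Lemma \ref{supermtg} with the fit conditions to split into a martingale condition on $C$ and a supermartingale condition on $S$, and the truncated jump integral handled by \eqref{assum_additional} on top of the supermartingale property of $e^{-rt}\tilde f(D_t,p)$ coming from \eqref{war1}. The only (cosmetic) difference is that you phrase the martingale facts infinitesimally --- $(\mathcal{A}^{(D)}-r)f=p$ and $(\mathcal{A}^{(D)}-r)v^*=0$ on $C$ --- where the paper derives the same statements at the process level via the Strong Markov Property applied to $\xi$, $\Xi$ and $g_>$.
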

\begin{proof} We have to show that $\tau_D^-(\theta^*)$
and, hence, $g_{\tau_D^-(\theta^*)}(d,p)$ satisfy all of the conditions of the Verification Lemma \ref{war}.
We take Markov process $\Upsilon_t=D_t$, $\mathbb{B}=\mathbb{R}_+$, $\tau_0=\tau_D^+(a)$ and
$V(\phi)=g_{\tau_D^-(\theta^*)}(d,p)$ with $\phi=d$.
To prove the domination condition (i) of Verification Lemma \ref{war},
we have to show that $g_{\tau_D^-(\theta^*)}(d,p)\geq \tilde{f}(d,p)$.
To prove this, note that
from the definition of $\theta^*$ we have that $g_>(d,p,\theta^*)\geq g_>(d,p,\theta)$ for all $\theta\in(0,a)$.
Furthermore, for $d>\theta^*$ we have:
\begin{align}
g_{\tau_D^-(\theta^*)}(d,p)=g_>(d,p,\theta^*)\geq g_>(d,p,d)=\tilde{f}(d,p).\nonumber
\end{align}
Similarly, for $d\leq\theta^*$,
\begin{align}
g_{\tau_D^-(\theta^*)}(d,p)=g_<(d,p,\theta^*)=\tilde{f}(d,p)\nonumber
\end{align}
which completes the proof of condition i) of Verification Lemma \ref{war}.

To prove the second condition (ii) of Verification Lemma \ref{war}, we have to show that the process $$\left\{e^{-r(t\wedge\tau_D^+(a))}g_{\tau_D^-(\theta^*)}(D_{t\wedge\tau_D^+(a)},p)\right\}$$ is a supermartingale.
To do this, we prove that $\left\{e^{-rt}g_{\tau_D^-(\theta^*)}(D_{t},p)\right\}$ is a supermartingale by
applying Lemma \ref{supermtg} (i)
for function $p(t,d,u):=p(t,d)=e^{-rt}g_{\tau_D^-(\theta^*)}(d,p)$, for $d\geq 0$ and $C=(\theta^*,\infty)$, $S=[0,\theta^*)$.
Given that $\theta^*$ satisfies the \textit{continuous fit} condition \eqref{continuous_fit} we can
separately consider functions $p_C(t,d):=e^{-rt}g_>(d,p,\theta^*)\mathbbm{1}_{(d\in (\theta^*,\infty))}$ and $p_S(t,d):=e^{-rt}g_<(d,p,\theta^*)\mathbbm{1}_{(d\in [0,\theta^*))}$.
Moreover, the {\it smooth fit} condition \eqref{smooth_fit} is also fulfilled. Thus, according to Lemma \ref{supermtg}(i), it is enough to show that $I(p_C)(t,D_t)$ is a martingale and $I(p_S)(t,D_t)$ is a supermartingale.

At the beginning note that the function $g_{\tau_D^-(\theta^*)}(d,p)$ defined in \eqref{gtaub} is bounded.
Indeed, this follows from the fact that $\tilde{f}$ is bounded
which is a consequence of the inequality $\xi(\cdot)\leq 1$, assumption \eqref{warXi} and assumed conditions on fee function $c(\cdot)$.
Thus, by Lemma \ref{supermtg}(ii), the \textit{martingale} condition of $I(p_C)(t,D_t)$ follows form a martingale property of the process:
$$\left\{ e^{-r(t\wedge\tau_D^-(\theta^*))}g_>(D_{t\wedge\tau_D^-(\theta^*)},p,\theta^*)\right\}.$$
This property follows from the Strong Markov Property:
\begin{align}
&\E\left[e^{-r\tau_D^-(\theta^*)}g_>(D_{\tau_D^-(\theta^*)},p,\theta^*)\big|\ \mathcal{F}_{t\wedge\tau_D^-(\theta^*)}\right]\nonumber\\
&\qquad = \E\left[e^{-r\tau_D^-(\theta^*))}\E_{|D_{\tau_D^-(\theta^*)}}\left[e^{-r\tau_D^-(\theta^*)}\tilde{f}(\theta^*,p)\right]\big|\ \mathcal{F}_{t\wedge\tau_D^-(\theta^*)}\right]\nonumber\\
&\qquad = e^{-r(t\wedge\tau_D^-(\theta^*))}g_>(D_{t\wedge\tau_D^-(\theta^*)}).\nonumber
\end{align}

Now, by Lemma \ref{supermtg}, we finish the proof if we show that the \textit{supermartingale} condition for $e^{-rt}g_<(d,p,\theta^*)\mathbbm{1}_{(d\in [0,\theta^*))}$ is satisfied.
To do so, we first prove that $e^{-rt}\tilde{f}(D_t,p)$ is a supermartingale.
Using definition of $\tilde{f}$ given in \eqref{tildef}, we can write:
\begin{align}
e^{-rt}\tilde{f}(d,p)=-\frac{p}{r}e^{-rt}\xi(d)-e^{-rt}\Xi(d)+e^{-rt}\left(\frac{p}{r}-c(d)\right).
\end{align}
Moreover, the processes $e^{-rt}\xi(D_t)$ and $e^{-rt}\Xi(D_t)$ are both  martingales by Strong Markov Property.
Thus, for a supermartingale property for $e^{-rt}\tilde{f}(D_t,p)$ holds if a process $e^{-rt}\left(\frac{p}{r}-c(d)\right)$ is a supermartingale.
Using \cite[Thm. 31.5]{sato},
the latter statement is equivalent to requirement that
\begin{align}
\mathcal{A}^{(D)}\left(\frac{p}{r}-c(d)\right)-r\left(\frac{p}{r}-c(d)\right)&\leq 0\label{adu}
\end{align}
which holds true by assumption \eqref{war1} and that
\begin{align}
\frac{\partial}{\partial d}\left(\frac{p}{r}-c(d)\right)|_{d=0}=-c'(0)\geq 0\label{normalrefld}
\end{align}
which follows from the fact that the function $c$ is non-increasing.

To prove \textit{supermartingale} property of $I(p_S)$ for $p_S(t,d)=e^{-rt}g_<(d,p,\theta^*)\mathbbm{1}_{(d\in [0,\theta^*))}$ we apply Lemma \ref{supermtg}(iii). That is, this property holds if for $d\in[0,\theta^*)$ we have:
\begin{align}
&-re^{-rt}g_<(d,p,\theta^*)+e^{-rt}\mathcal{A}^{(D)}g_<(d,p,\theta^*)=-re^{-rt}g_<(d,p,\theta^*)-\mu e^{-rt}\frac{\partial}{\partial d}g_<(d,p,\theta^*)+\frac{\sigma^2}{2}\frac{\partial^2}{\partial d^2}g_<(d,p,\theta^*)\nonumber\\
&\qquad\qquad\qquad+e^{-rt}\int_{(0,\infty)}\left(g_<(d+z,p,\theta^*)\mathbbm{1}_{(d+z\in[0,\theta^*))}-g_<(d,p,\theta^*)-z\frac{\partial}{\partial}g_<(d,p,\theta^*)\right)\Pi(-\diff z)\leq 0.\nonumber
\end{align}
Note that, from \eqref{g=f}, we have that $g_<(d,p,\theta^*)=\tilde{f}(d,p)$ for $d\in[0,\theta^*)$.
Moreover, the indicator appearing in the definition of $p_S$  is important only in first increment in the sum under above integral.
More precisely, because the process $D$ only jumps  upward, the above mentioned indicator may produce zero after a possible jump.
Taking this observation into account, we can rewrite the above inequality as follows:
\begin{align}
&-re^{-rt}g_<(d,p,\theta^*)+e^{-rt}\mathcal{A}^{(D)}g_<(d,p,\theta^*)=-re^{-rt}\tilde{f}(d,p)-\mu e^{-rt}\frac{\partial}{\partial d}\tilde{f}(d,p)+\frac{\sigma^2}{2}\frac{\partial^2}{\partial d^2}\tilde{f}(d,p)\nonumber\\
&\quad+e^{-rt}\int_{(0,\theta^*-d)}\left(\tilde{f}(d+z,p)-\tilde{f}(d,p)-z\frac{\partial}{\partial}\tilde{f}(d,p)\right)\Pi(-\diff z)+e^{-rt}\int_{(\theta^*-d,\infty)}\left(-\tilde{f}(d,p)-z\frac{\partial}{\partial d}\tilde{f}(d,p)\right)\Pi(-\diff z)\nonumber\\
&\leq-re^{-rt}\tilde{f}(d,p)-\mu e^{-rt}\frac{\partial}{\partial d}\tilde{f}(d,p)+\frac{\sigma^2}{2}\frac{\partial^2}{\partial d^2}\tilde{f}(d,p)+e^{-rt}\int_{(0,\infty)}\left(\tilde{f}(d+z,p)-\tilde{f}(d,p)-z\frac{\partial}{\partial}\tilde{f}(d,p)\right)\Pi(-\diff z)\leq 0.\nonumber
\end{align}
The first inequality follows from \eqref{assum_additional} and the second follows from supermartingale property of $\tilde{f}$, as proven previously.

The condition \eqref{normalref2} in Lemma \ref{supermtg} (iii) is satisfied since $g_<(d,p,\theta^*)=\tilde{f}(d,p)$ on $d\in[0,\theta^*)$ and condition \eqref{normalref2} was already proved for $\tilde{f}$. This completes the proof.
\end{proof}
\begin{Rem}\label{rem1}
We will give now a few comments about made assumptions \eqref{war1} and \eqref{assum_additional}.

Note that condition \eqref{war1} is satisfied when $-\mu+\int_{(1,\infty)}z\Pi(-\diff z)\leq 0$ and for all convex non-increasing penalty functions $c$ such that
\begin{align}
\frac{p}{r}\geq c(d) \qquad \textrm{for all }d\geq 0.\label{condition_for_convex}
\end{align}
Indeed, we can rewrite inequality \eqref{war1} as follows
\begin{align}
&-\mu c^\prime(d)+\frac{\sigma^2}{2}c^{\prime\prime}(d)+\int_{(0,\infty)}\left(c(d+z)-c(d)-zc^\prime(d)\mathbbm{1}_{(z<1)}\right)\Pi(-\diff z) \nonumber\\
=&\left(-\mu+\int_{(1,\infty)}z\Pi(-\diff z)\right)c^\prime(d)+\frac{\sigma^2}{2}c^{\prime\prime}(d)+\int_{(0,\infty)}\left(c(d+z)-c(d)-zc^\prime(d)\right)\Pi(-\diff z)\geq -p+rc(d).\label{dodatkowe}
\end{align}
The expression under the integral sign is positive since
\begin{align}
c(d+z)-c(d)\geq zc^\prime (d)\nonumber
\end{align}
by convexity of $c$.
Now, the right-hand side of inequality \eqref{dodatkowe}
is non-positive by \eqref{condition_for_convex} and all the terms on the left-hand side are non-negative.
Thus, the condition \eqref{war1} holds true.

Condition \eqref{assum_additional} holds for the Brownian motion because it has continuous trajectories.
Unfortunately, we are unable to give any sufficient conditions for the assumption \eqref{assum_additional} to hold true.
We check this numerically  and we show that it holds for all our examples.
Note that this condition is simply satisfied whenever the payoff function is non-negative (e.g. for classical American options).
In this paper, this is not the case. Indeed, in view of decomposition in Proposition \ref{FDecomp}, the payoff function $\tilde{f}$ can attain negative values.
\end{Rem}

\addtocounter{Ex}{-2}
\begin{Ex}[continued]
Let us consider two penalty functions: linear and quadratic:
\begin{align}
c_1(d)&=\frac{p}{ra}\left(a-d\right)\mathbbm{1}_{(0\leq d<a)},\label{c1_ex}\\
c_2(d)&=\frac{p}{ra^2}\left(a-d\right)^2\mathbbm{1}_{(0\leq d<a)}.\label{c2_ex}
\end{align}
We choose fixed reward function; that is. $\alpha(a)=\alpha$.
We also have to choose the premium intensity $p$ such that the condition \eqref{mainzalozenia} is satisfied.
Moreover, note that for the Brownian motion the condition \eqref{assum_additional} always holds true because of absence of jumps.

\begin{figure}[!ht]
\centering
\includegraphics[width=0.5\textwidth]{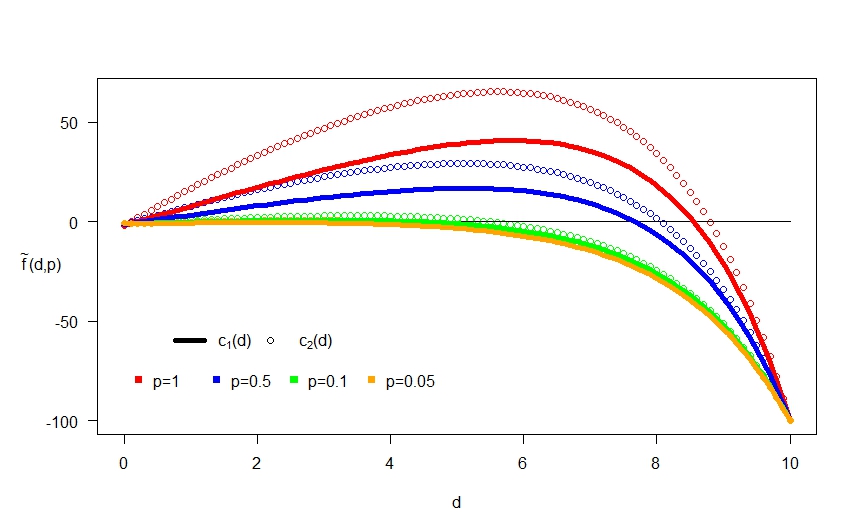}
\caption{\footnotesize{The function $\tilde{f}$ for Brownian motion, linear penalty function $c_1$ and quadratic penalty function $c_2$, and for various premiums $p$. Parameters: $r=0.01, \mu=0.03, \sigma =0.4, a=10, \alpha=100$.}}
\label{fig:ftildebm}
\end{figure}
Given that the function $\tilde{f}$ is decreasing with respect to premium $p$ for both $c_1(\cdot)$ and $c_2(\cdot)$, Figure \ref{fig:ftildebm} shows that
we can choose $p\geq 0.1$ in both cases for our set of parameters. Recall that the optimal $\theta^*$ maximises function $g_>(d,p,\theta)$ given in \eqref{g}. For Brownian motion we have:
\begin{align}
g_>(d,p,\theta)=e^{\frac{\mu}{\sigma^2}(d-\theta )}\frac{\sinh (\Xi (a-d))}{\sinh (\Xi (a-\theta))}\left(-\left(\frac{p}{r}+\alpha\right)e^{-\frac{\mu}{\sigma^2}(a-\theta )}\frac{\Xi\cosh(\Xi \theta )-\frac{\mu}{\sigma^2}\sinh(\Xi \theta )}{\Xi\cosh(\Xi a)-\frac{\mu}{\sigma^2}\sinh(\Xi a)}+\frac{p}{r}-c(\theta)\right).\nonumber
\end{align}
By making a plot of this function, we can easily identify the optimal level $\theta^*$. This is done in
Figure \ref{fig:gbm}.

\begin{figure}[!ht]
\centering
\includegraphics[width=0.5\textwidth]{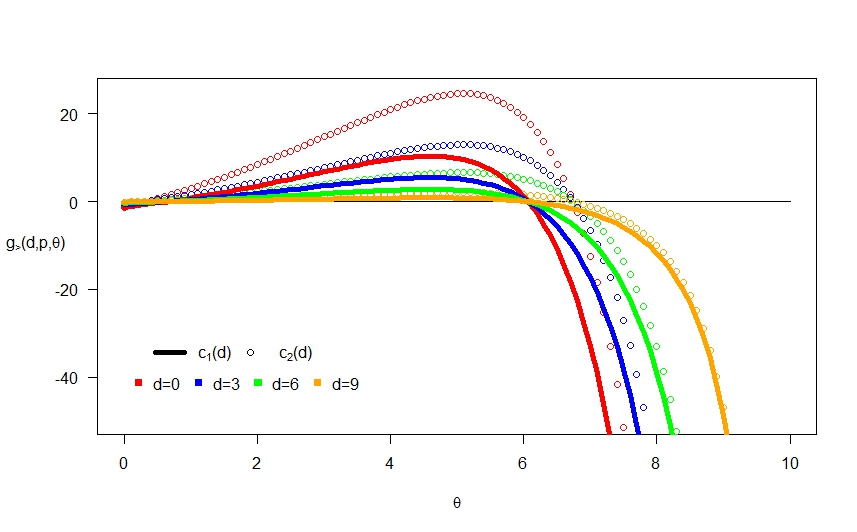}
\caption{\footnotesize{The function $g_>(d,p,\theta )$ for Brownian motion, linear penalty function $c_1$ and quadratic penalty function $c_2$, and for various staring drawdown levels $d$. Parameters: $p=0.2, r=0.01, \mu=0.03, \sigma =0.4, a=10, \alpha=100$.}}
\label{fig:gbm}
\end{figure}
\end{Ex}

\begin{Ex}[continued]
We also analyse Cram\'er-Lundberg model for the linear and quadratic penalty functions \eqref{c1_ex}-\eqref{c2_ex} and for the linear reward function $\alpha (\cdot)$.
As Figure \ref{fig:ftildecl} of the function $\tilde{f}$ shows, we can take a penalty intensity $p$ bigger than $0.1$ for our set of parameters for condition
\eqref{mainzalozenia} to hold.
Moreover, in this case, we have:
\begin{align}
g_>(d,p,\theta)=&\left(\frac{p}{r}-c(\theta )-\left(\frac{p}{r}+\alpha(a+\frac{1}{\rho})\right)\left(c_0 + c_{\Phi(r)}e^{\Phi(r) (a-\theta)}+c_{\zeta}e^{\zeta (a-\theta)}\right)\right)\nonumber\\
&\quad\cdot\frac{\psi^\prime (\zeta )e^{\Phi (r)(a-d)}+\psi^\prime (\Phi (r))e^{\zeta (a-d)}}{\psi^\prime (\zeta )e^{\Phi (r)(a-\theta)}+\psi^\prime (\Phi (r))e^{\zeta (a-\theta)}}.\nonumber
\end{align}
In Figure \ref{fig:gcl}, we numerically check the condition \eqref{assum_additional} and present the value of $g_>(d,p,\theta)$ using the above formula. To find optimal stopping level $\theta^*$ for the drawdown, we pick the $\theta$ that maximises the value function.

\begin{figure}[!ht]
\centering
\includegraphics[width=0.5\textwidth]{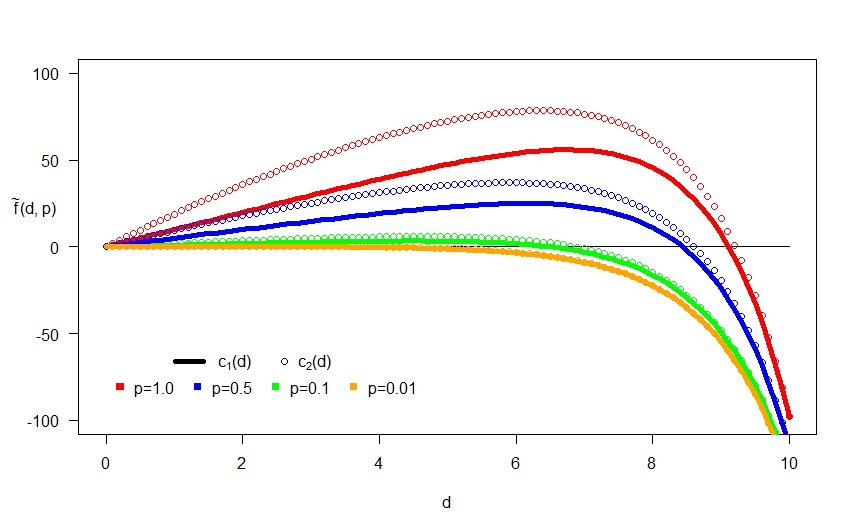}
\caption{\footnotesize{The function $\tilde{f}$ for Cram\'er-Lundberg, linear $c_1$ and quadratic $c_2$ penalty functions, and for various premiums $p$. Parameters: $r=0.01, \hat{\mu}=0.05, \beta =0.1, \rho=2.5, a=10, \alpha (d)=100+10d$.}}
\label{fig:ftildecl}
\end{figure}
\begin{figure}[!ht]
\center{
\subfloat{\includegraphics[width=0.5\textwidth]{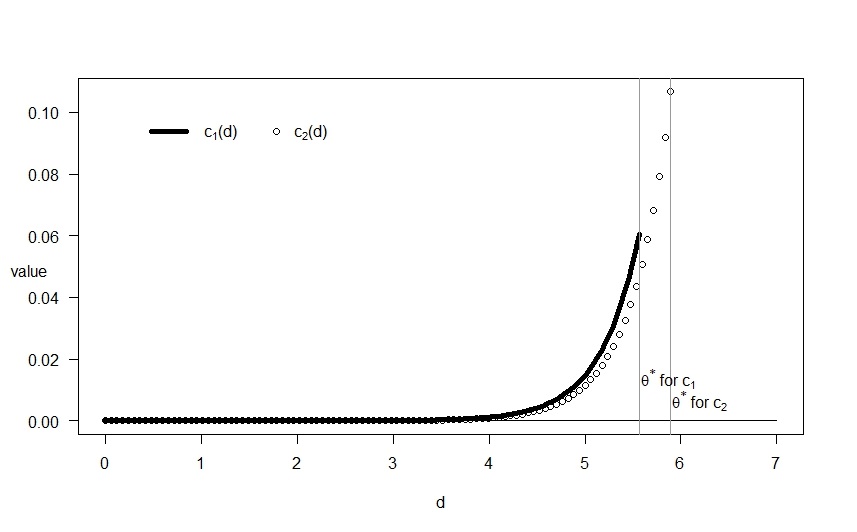}}
\subfloat{\includegraphics[width=0.5\textwidth]{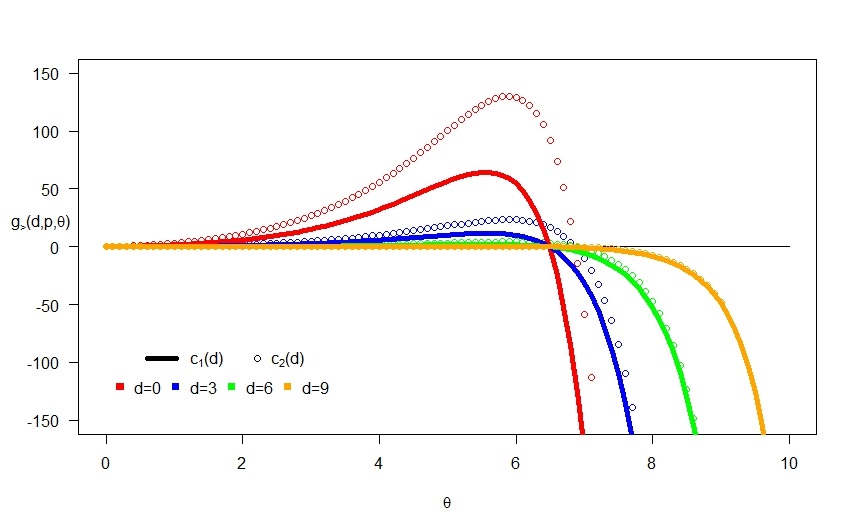}}}
\caption{\footnotesize{The condition \eqref{assum_additional} (left) and the function $g_>(d,p,\theta )$ (right) for Cram\'er-Lundberg model, linear $c_1$ and quadratic $c_2$ penalty functions, and for various starting drawdown levels $d$. Parameters: $p=0.1, r=0.01, \hat{\mu}=0.05, \beta =0.1, \rho =2.5, a=10, \alpha (d)=100+10d$.}}
\label{fig:gcl}
\end{figure}
\end{Ex}

\section{Incorporating drawup contingency}\label{sec:drawup}
\subsection{Fair premium}
The investors might like to buy a contract that meets some maturity conditions.
This means that contract will end when these conditions will be fulfilled.
We add this feature to the previous contracts by considering drawup contingency.
In particular, the next contracts may expire earlier if a fixed drawup event occurs prior to drawdown epoch.
Choosing the drawup event is natural since it corresponds to some market upward trends.
Therefore, the investor might stop believing that a substantial drawdown will happen in the close future
and she/he might want to stop paying a premium when this event happens.
Under a risk-neutral measure, the value of this contract equals:
\begin{align}
k(d,u,p):=&\mathbb{E}_{|d|u}\left[-\int_0^{\tau^+_D (a)\wedge\tau^+_U(b)}e^{-rt}p\diff t+\alpha(D_{\tau^+_D (a)}) e^{-r\tau^+_D(a)}\mathbbm{1}_{\left(\tau^+_D(a)<\tau^+_U(b)\right)}\right],\label{kyzp}\end{align}
where $a\geq b\geq 0$.
At the beginning we find the above value function and later we identify the fair premium $p^*$ under which
\begin{equation}\label{fairk}
k(d,u,p^*)=0.
\end{equation}

Note that
\begin{align}
k(d,u,p)=&\frac{p}{r}\Big(\nu(d,u)+\lambda(d,u)\Big)+N(d,u)-\frac{p}{r},\label{fdrawup}
\end{align}
where
\begin{align}
\nu(d,u):=&\mathbb{E}_{|d|u}\left[e^{-r\tau_D^+(a)};\tau_D^+(a)<\tau_U^+(b)\right],\label{nu}\\
\lambda(d,u):=&\mathbb{E}_{|d|u}\left[e^{-r\tau_U^+(b)};\tau_U^+(b)<\tau_D^+(a)\right],\label{lambda}\\
N(d,u):=&\mathbb{E}_{|d|u}\left[e^{-r\tau_D^+(a)}\alpha(D_{\tau_D^+(a)});\tau_D^+(a)<\tau_U^+(b)\right].\label{N}
\end{align}
Note that all of the above functions are well defined since $\nu,\lambda\in[0,1]$ and $N(d,\cdot)\leq\Xi(d)<\infty$ by \eqref{warXi}.

In the next proposition, we identify all of the above quantities.
We denote $x^+=\max(x,0)$.
\begin{Prop}\label{lambdanu}
Let $b<a$. For any $d\in[0,a]$ and $u\in[0,b]$ we have:
\begin{align}
\lambda(d,u)=&\frac{W^{(r)}((a-d)\wedge u)}{W^{(r)}(b-u+(a-d)\wedge u)}+\mathbb{E}\left[e^{-r\tau_U^+(b)};\overline{X}_{\tau_U^+(b)}\vee d-\underline{X}_{\tau_U^+(b)}<a,\underline{X}_{\tau_U^+(b)}\leq -u\right],\nonumber\\
\nu(d,u)=&\left(Z^{(r)}(a-d)-Z^{(r)}(b-u+a-d)\frac{W^{(r)}(a-d)}{W^{(r)}(b-u+a-d)}\right)\mathbbm{1}_{(d+u\geq a)}\nonumber\\
&\quad+\mathbb{E}\left[e^{-r\tau_U^+(b)};\overline{X}_{\tau_U^+(b)}\vee d-\underline{X}_{\tau_U^+(b)}\geq a,\ \underline{X}_{\tau_U^+(b)}\leq -u,\ \overline{X}_{\tau_U^+(b)}\leq b-u\right]\frac{\mathbbm{1}_{(d+u<a)}}{\E\left[e^{-r\tau_U^+(b)}\right]},\nonumber\\
N(d,u)=&\Xi(d)-\frac{W^{(r)}((a-d)\wedge u)}{W^{(r)}(b-u+(a-d)\wedge u)}\Xi((d+u-b)^+)\nonumber\\
&-\int_{((d+u-b)^+,\ a-b)}\Xi(s)\mathbb{E}_{|d|u}\left[e^{-r\tau_U^+(b)};\overline{X}_{\tau_U^+(b)}\vee d-\underline{X}_{\tau_U^+(b)}\in\diff s+b,\ \underline{X}_{\tau_U^+(b)}\leq -u\right],\nonumber
\end{align}
where the joint distribution for $(\tau_U^+(b),\overline{X}_{\tau_U^+(b)},\underline{X}_{\tau_U^+(b)})$ is given via \eqref{m2}
and where $\Xi(\cdot)$ is defined in \eqref{Xi}.\\
Moreover, if $a=b$ then, for $d,u\in [0,a]$, we have:
\begin{align}
\lambda (d,u) =& \frac{W^{(r)}(a-d)}{W^{(r)}(a-u+(a-d)\wedge u)}-\frac{1}{r}\frac{W^{\prime (r)}(a)}{(W^{(r)}(a))^2}\left(Z^{(r)}((a-d)\vee u)-Z^{(r)}(u)\right),\nonumber\\
\nu (d,u) =& Z^{(r)}((a-d)\wedge u)-Z^{(r)}(a-u+(a-d)\wedge u)\frac{W^{(r)}(a-d)}{W^{(r)}(a-u+(a-d)\wedge u)}\nonumber\\
&+\frac{1}{r}Z^{(r)}(a)\frac{W^{\prime (r)}(a)}{(W^{(r)}(a))^2}\left(Z^{(r)}((a-d)\vee u)-Z^{(r)}(u)\right),\nonumber\\
N(d,u) =& \Xi(d)-\lambda(d,u)\Xi((d+u-a)^+).\nonumber
\end{align}
\end{Prop}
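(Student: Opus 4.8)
The plan is to reduce every quantity in \eqref{nu}--\eqref{N} to the exit identities \eqref{twosided1}--\eqref{twosided2} and the joint law \eqref{m2}, using as the single organising principle the dichotomy of whether the running infimum is driven strictly below its initial value $-u$ before the decisive time $\tau=\tau_D^+(a)\wedge\tau_U^+(b)$. Throughout I work with $X_0=0$, historical maximum $d$ and historical minimum $-u$, and I translate the drawdown/drawup passage times into events for $X$: since $X$ is spectrally negative it climbs continuously, so $\tau_U^+(b)$ is a continuous up-crossing while $\tau_D^+(a)$ typically occurs through a downward jump (overshoot). This is exactly why $W^{(r)}$ governs the ``drawup wins'' events and $Z^{(r)}$ the ``drawdown wins'' events.

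First I would treat the event $\{\underline{X}_{\tau}>-u\}$, on which the infimum is never breached before $\tau$. Here the drawup reference is frozen at $-u$, so $U_t=X_t+u$ and the drawup of size $b$ occurs precisely when $X$ reaches $b-u$, while the drawdown reference stays at $d$ as long as $X$ does not exceed $d$; the binding lower barrier is therefore the higher of the drawdown level $d-a$ and the old minimum $-u$, which accounts for the quantity $(a-d)\wedge u$. The contribution to $\lambda$ is the discounted probability of an upward exit, read off from \eqref{twosided1} as $W^{(r)}((a-d)\wedge u)/W^{(r)}(b-u+(a-d)\wedge u)$; the contribution to $\nu$ is the discounted probability of a downward exit with overshoot, read off from \eqref{twosided2} as $Z^{(r)}(a-d)-Z^{(r)}(b-u+a-d)W^{(r)}(a-d)/W^{(r)}(b-u+a-d)$. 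The indicator $\mathbbm{1}_{(d+u\ge a)}$ appears because a downward exit at $d-a$ can precede the infimum breach only when $d-a\ge-u$; when $d+u<a$ the drawdown is unreachable before the infimum is breached, so this term drops out.

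Next I would handle the complementary event $\{\underline{X}_{\tau}\le -u\}$ by applying the strong Markov property at the decisive time and invoking the Mijatovi\'{c}--Pistorius joint law \eqref{m2} of $(\tau_U^+(b),\overline{X}_{\tau_U^+(b)},\underline{X}_{\tau_U^+(b)})$ to rewrite the remaining probabilities as expectations of $e^{-r\tau_U^+(b)}$ restricted by range conditions. The crucial translation is that, since $\overline{X}_t\le\overline{X}_{\tau_U^+(b)}$ and $X_t\ge\underline{X}_{\tau_U^+(b)}$, the drawdown stays below $a$ on $[0,\tau_U^+(b)]$ exactly when $\overline{X}_{\tau_U^+(b)}\vee d-\underline{X}_{\tau_U^+(b)}<a$, the reverse inequality signalling that the drawdown reaches $a$ first. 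This yields the second term of $\lambda$ and, after subtracting from the unconstrained reward $\Xi(d)$ of \eqref{Xi} and restarting the drawdown-reward problem by the strong Markov property at $\tau_U^+(b)$, the whole of $N$: there $\Xi((d+u-b)^+)$ is the reward restarted from the infimum-not-breached drawup configuration and the integral against the law of $\overline{X}_{\tau_U^+(b)}\vee d-\underline{X}_{\tau_U^+(b)}$ accounts for the infimum-breached one. For $\nu$ in the regime $d+u<a$ the same joint law appears but carries the normalising factor $1/\mathbb{E}[e^{-r\tau_U^+(b)}]=Z^{(r)}(b)$; this is the genuinely delicate point. It comes from a renewal/strong-Markov factorisation in which the discounted cost of completing a drawup after the range has already reached $a$ must be divided out, converting the joint law \eqref{m2} computed \emph{at} the drawup time into the probability that the range hits $a$ \emph{first}. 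Justifying this factorisation, together with the exact event translation for the range, is where I expect the main obstacle to lie.

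Finally, for the boundary case $a=b$ I would introduce no new idea but simply evaluate the leftover expectations by substituting \eqref{m2}; the integrals over the infimum telescope and the previously separate barriers coincide, so the residual expectations collapse to closed form. The derivative terms $W'^{(r)}(a)/(W^{(r)}(a))^2$ and the combinations $Z^{(r)}((a-d)\vee u)-Z^{(r)}(u)$ arise from integrating the density of $\underline{X}_{\tau_U^+(b)}$ supplied by \eqref{m2} across the admissible range and from the coincidence $a=b$ forcing a differentiation in the upper barrier of the exit identities. This last step is purely computational once the general formulas are established.
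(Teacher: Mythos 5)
Your overall architecture is the paper's: split each quantity according to whether the historical infimum $-u$ is breached before $\tau_D^+(a)\wedge\tau_U^+(b)$, treat the unbreached part as a two-sided exit problem via \eqref{twosided1}--\eqref{twosided2}, treat the breached part via the joint law \eqref{m2}, and obtain $N$ by writing $\mathbbm{1}_{(\tau_D^+(a)<\tau_U^+(b))}=1-\mathbbm{1}_{(\tau_U^+(b)<\tau_D^+(a))}$ and applying the strong Markov property at $\tau_U^+(b)$, so that the restarted reward is $\Xi(D_{\tau_U^+(b)})$ integrated against the law of $D_{\tau_U^+(b)}$. That is exactly the Appendix proof, with the caveat that the paper only proves the $N$ identities and imports $\lambda$ and $\nu$ wholesale from \cite{ZP&JT}, whereas you propose to rederive them.

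Two concrete points remain open in your plan. First, you yourself leave unjustified the factor $1/\E\left[e^{-r\tau_U^+(b)}\right]$ in $\nu$ for $d+u<a$; this is precisely the step that cannot be read off from \eqref{twosided1}--\eqref{twosided2} or \eqref{m2} alone, and it is the substance of the $\nu$-identity (the paper does not supply it either, deferring to \cite{ZP&JT}). Second, your pathwise translation needs the observation, used implicitly throughout the Appendix, that because $X$ creeps upward the drawdown reaches $a$ exactly when the range $\overline{X}_t\vee d-\underline{X}_t\wedge(-u)$ reaches $a$ through a new infimum, while the drawup reaches $b$ exactly when the range reaches $b$ through a new supremum; without it, the case in which $X$ sets a new supremum above $d$ before either passage time is not covered by your ``binding lower barrier $(d-a)\vee(-u)$'' argument, since the drawdown can then be triggered by a jump from the new supremum even though $X$ never goes below $(d-a)\vee(-u)$. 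The same observation is what yields the deterministic identity $D_{\tau_U^+(a)}=(d+u-a)^+$ when $a=b$, which is how the paper collapses $N$ to $\Xi(d)-\lambda(d,u)\,\Xi((d+u-a)^+)$ --- not by substituting \eqref{m2} and telescoping.
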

\begin{proof}
See Appendix.
\end{proof}

Identity \eqref{fdrawup} gives the following theorem.
\begin{theorem}\label{thm2}
The price of the contract \eqref{kyzp} is given in \eqref{fdrawup} and
the fair premium defined in \eqref{fairk} equals:
\begin{equation}
p^*(d,u) = \frac{rN(d,u)}{1-\lambda(d,u)-\nu(d,u)},\label{p*drawup}
\end{equation}
where functions $\lambda$, $\nu$ and $N$ are given in Proposition \ref{lambdanu}.
\end{theorem}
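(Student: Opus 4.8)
The plan is to reuse, almost verbatim, the two-step template that produced \eqref{f1}--\eqref{p*} for the basic contract: first establish the linear-in-$p$ decomposition \eqref{fdrawup} of the value $k(d,u,p)$ defined in \eqref{kyzp}, and then solve the fairness equation \eqref{fairk} for $p^*$. Once \eqref{fdrawup} is in place, the premium formula \eqref{p*drawup} is a one-line rearrangement, so the only content to verify is the decomposition itself.

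For the decomposition I would separate the premium (integral) part of \eqref{kyzp} from the reward part. The reward part is $N(d,u)$ by its definition \eqref{N}. For the premium part, write $\tau:=\tau_D^+(a)\wedge\tau_U^+(b)$, which is finite almost surely since $\tau_D^+(a)<\infty$ a.s. (the remark following Proposition \ref{Ksi}), and apply the deterministic identity $\int_0^\tau e^{-rt}\diff t=(1-e^{-r\tau})/r$ to get
\begin{align}
-p\,\mathbb{E}_{|d|u}\left[\int_0^{\tau_D^+(a)\wedge\tau_U^+(b)}e^{-rt}\diff t\right]=-\frac{p}{r}+\frac{p}{r}\,\mathbb{E}_{|d|u}\left[e^{-r(\tau_D^+(a)\wedge\tau_U^+(b))}\right].\nonumber
\end{align}
I would then split the remaining expectation across the two disjoint exit events $\{\tau_D^+(a)<\tau_U^+(b)\}$ and $\{\tau_U^+(b)<\tau_D^+(a)\}$, on which $\tau$ equals $\tau_D^+(a)$ and $\tau_U^+(b)$ respectively; by \eqref{nu}--\eqref{lambda} this recovers $\nu(d,u)+\lambda(d,u)$, and collecting terms yields exactly \eqref{fdrawup}. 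Finiteness of all three ingredients is already guaranteed ($\nu,\lambda\in[0,1]$ and $N(d,\cdot)\le\Xi(d)<\infty$ via \eqref{warXi}), and their closed forms come from Proposition \ref{lambdanu}.

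The one step that genuinely needs justification is that the two families of events above are exhaustive, i.e. $\mathbb{P}_{|d|u}(\tau_D^+(a)=\tau_U^+(b))=0$: the drawdown and drawup first-passage times must not coincide with positive probability. This is exactly where I would invoke the standing regularity assumption from Section \ref{sec:prel} that $X_t$ has an absolutely continuous transition density (ensured either by $\sigma>0$, or by continuity of $\Pi(-\infty,-y)$ in the bounded-variation case), which rules out a simultaneous two-sided exit. With \eqref{fdrawup} established, imposing \eqref{fairk} gives the linear equation $\frac{p^*}{r}\big(\nu(d,u)+\lambda(d,u)-1\big)+N(d,u)=0$, whose solution is \eqref{p*drawup} whenever $\lambda(d,u)+\nu(d,u)\neq1$. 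I do not expect a real obstacle in this theorem: the substantive fluctuation-theoretic computation of $\nu$, $\lambda$ and $N$ through the scale functions and the Mijatovi\'c--Pistorius joint law \eqref{m2} has already been discharged in Proposition \ref{lambdanu}, so the statement reduces to the killing-time bookkeeping above plus elementary algebra.
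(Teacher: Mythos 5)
Your proposal is correct and follows essentially the same route as the paper: the paper obtains \eqref{fdrawup} by exactly this splitting of the discounted premium integral over the disjoint events $\{\tau_D^+(a)<\tau_U^+(b)\}$ and $\{\tau_U^+(b)<\tau_D^+(a)\}$ (asserting, as you do, that the two passage times cannot coincide) and then derives \eqref{p*drawup} by solving the linear fairness equation, with all the substantive work delegated to Proposition \ref{lambdanu}. Your extra care about $\mathbb{P}_{|d|u}(\tau_D^+(a)=\tau_U^+(b))=0$ and about $\lambda+\nu\neq 1$ only makes explicit what the paper leaves implicit.
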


\addtocounter{Ex}{-2}
\begin{Ex}[continued]
We continue analysing the case of linear Brownian motion defined in \eqref{linearBrownian}.
Assume that $b<a$.
To find the price of the contract \eqref{kyzp}, we use formula \eqref{fdrawup}
and we calculate all of the functions $\lambda$, $\nu$ and $N$ given in \eqref{nu}-\eqref{N}.

At the beginning, let us consider a case of $a\leq d+u$.
By Lemma \ref{lambdanu}, the expressions for functions $\nu$ and $\lambda$ reduce to two-sided exit formulas
\eqref{twosided1}--\eqref{twosided2}. These are given explicitly in terms of the scale functions:
\begin{align}
\lambda(d,u)&=\frac{W^{(r)}(a-d)}{W^{(r)}(a+b-d-u)},\nonumber\\
\nu(d,u)&=Z^{(r)}(a-d)-Z^{(r)}(a+b-d-u)\frac{W^{(r)}(a-d)}{W^{(r)}(a+b-d-u)}.\nonumber
\end{align}
Given that Brownian motion has continuous trajectories,
we have $D_{\tau_D^+(a)}=a$. Denoting $\alpha (a)=\alpha$
from the definition \eqref{N} of the function $N$,  we can conclude that $$N(d,u)=\alpha\nu(d,u).$$

The case of $a>d+u$ is slightly more complex.  First observe that the dual process $\widehat{X}_t:=-X_t$
is again a linear Brownian motion with drift $-\mu$.
Moreover, the geometrical observation of the trajectories of the processes $X$ and $\widehat{X}$
gives that $\widehat{U}_t=D_t$ and $\widehat{D}_t=U_t$, where $\widehat{U}_t$ and $\widehat{D}_t$ are the drawup and drawdown processes for $\widehat{X}_t$, respectively.
Now, using \cite{Mijatovic&Pistorius}, we can find explicit expressions for the functions $\lambda$ and $\nu$:
\begin{align}
&\lambda(d,u)=\frac{1}{r}\frac{1}{\WW^{(r)}(b)}\frac{\sigma^2}{2}\left[\frac{(\WW^{\prime (r)}(b))^2}{\WW^{(r)}(b)}-\WW^{\prime\prime (r)}(b)\right]\left(\ZZ^{(r)}(b\wedge(a-u))-\ZZ^{(r)}(d)\right)e^{-(a-b)\frac{\WW^{\prime (r)}(b)}{\WW^{(r)}(b)}}\mathbbm{1}_{(b>d)}\nonumber\\
&\qquad+\frac{\WW^{(r)}(b)}{\WW^{\prime (r)}(b)}\frac{\sigma^2}{2}\left[\frac{(\WW^{\prime (r)}(b))^2}{\WW^{(r)}(b)}-\WW^{\prime\prime (r)}(b)\right]\left(e^{-u\frac{\WW^{\prime (r)}(b)}{\WW^{(r)}(b)}}-e^{-(a-b\vee d)\frac{\WW^{\prime (r)}(b)}{\WW^{(r)}(b)}}\right)\mathbbm{1}_{(a>u+b)}+\frac{W^{(r)}(u)}{W^{(r)}(b)}\nonumber\\
&\nu(d,u)=\frac{\WW^{(r)}(b-u)}{\WW^{\prime (r)}(b)}\frac{\sigma^2}{2}\left[\frac{(\WW^{\prime (r)}(b))^2}{\WW^{(r)}(b)}-\WW^{\prime\prime (r)}(b)\right]e^{-(a-b\vee (d+u))\frac{\WW^{\prime (r)}(b)}{\WW^{(r)}(b)}}Z^{(r)}(b)\nonumber\\
&\qquad -\frac{1}{r}\frac{1}{\WW^{(r)}(b)}\frac{\sigma^2}{2}\left[\frac{(\WW^{\prime (r)}(b))^2}{\WW^{(r)}(b)}-\WW^{\prime\prime (r)}(b)\right]\left(\ZZ^{(r)}(b-u)-\ZZ^{(r)}(d)\right)e^{-(a-b)\frac{\WW^{\prime (r)}(b)}{\WW^{(r)}(b)}}Z^{(r)}(b)\mathbbm{1}_{(b>d+u)},\nonumber
\end{align}
where $\WW^{(r)}$ and $\ZZ^{(r)}$ are the scale functions defined for $\widehat{X}_t$ (see also \cite{drawdownup1} and \cite{ZP&JT} for all details of calculations).

Because $N(d,u)=\alpha\nu(d,u)$,  the contract value and the fair premium defined in \eqref{fdrawup} and \eqref{p*drawup} can be represented as follows:
\begin{align}
&k(d,u,p)=\left(\frac{p}{r}+\alpha\right)\nu(d,u)+\frac{p}{r}\lambda(d,u)-\frac{p}{r},\nonumber\\
&p^*(d,u)=\frac{r\alpha\nu(d,u)}{1-\lambda(d,u)-\nu(d,u)}.\nonumber
\end{align}
Figures \ref{fig:k_bm} and \ref{fig:p*_k_bm} show the contract value and the fair premium levels depending on the starting positions of the drawdown and drawup.

\begin{figure}[!ht]
\centering
\includegraphics[width=0.60\textwidth]{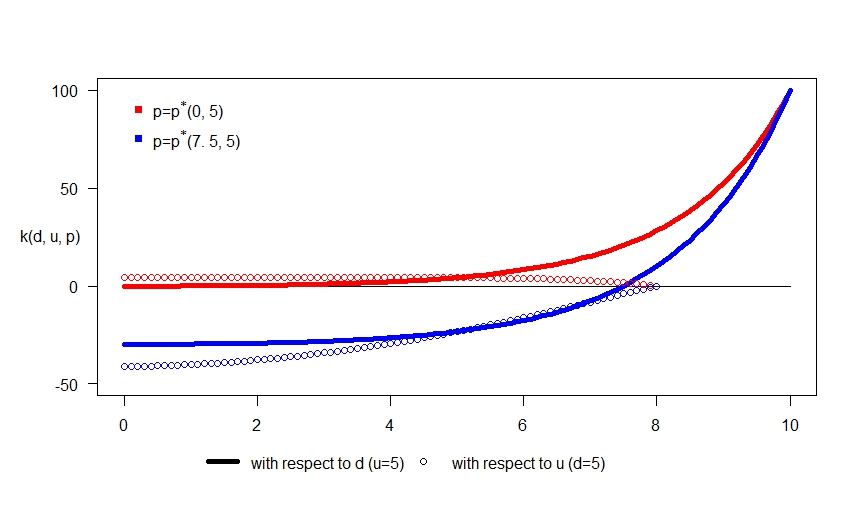}
\caption{\footnotesize{The value of the contract $k(d,u,p)$ with drawup contingency for linear Brownian motion
depending on the starting positions of drawdown and drawup for various premium levels $p$. Parameters: $r=0.01,\ \mu=0.03,\ \sigma=0.4,\ a=10,\ b=8,\ \alpha=100.$}}
\label{fig:k_bm}
\end{figure}

\begin{figure}[!ht]
\center{
\subfloat{\includegraphics[width=0.60\textwidth]{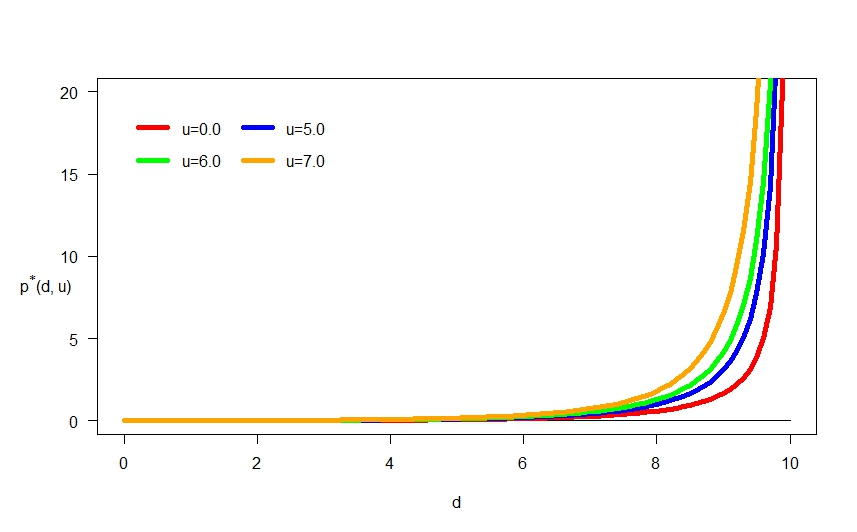}}
\quad
\subfloat{\includegraphics[width=0.60\textwidth]{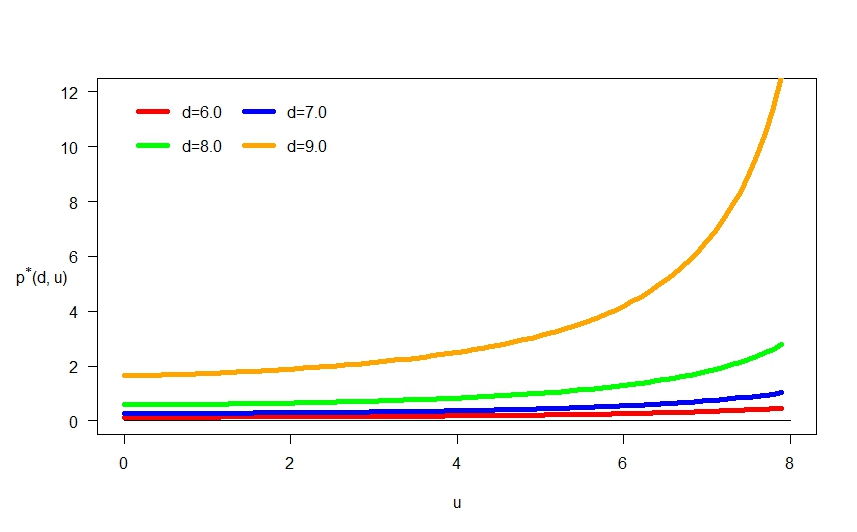}}
\caption{\footnotesize{The fair premium $p^*$ for the contract
$k(d,u,p)$ with drawup contingency for linear Brownian motion depending on the starting positions of drawdown (top) and drawup (bottom). Parameters: $r=0.01,\ \mu=0.03,\ \sigma=0.4,\ a=10,\ b=8,\ \alpha=100.$}}\label{fig:p*_k_bm}}
\end{figure}
\end{Ex}

\begin{Ex}[continued]
Now, assume that $a=b$ and that the reward function is linear. To analyse the
Cram\'er-Lundberg process we use Proposition \ref{lambdanu}, which identifies the contract value given in \eqref{fdrawup}.
The fair premium $p^*$ can be derived from \eqref{p*drawup} of Theorem \ref{thm2}.

Figures \ref{fig:k_cl} and \ref{fig:p*_k_cl} show the contract value and the fair premium levels depending on the starting positions of the drawdown and drawup.

\begin{figure}[!ht]
\centering
\includegraphics[width=0.60\textwidth]{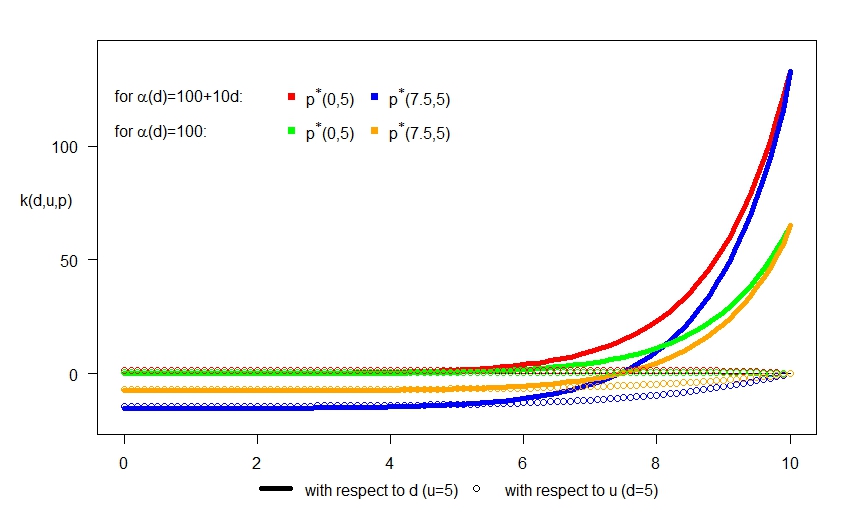}
\caption{\footnotesize{The value of the contract $k(d,u,p)$ with drawup contingency for Cram\'er-Lundberg model
depending on the starting positions of drawdown and drawup for linear reward function $\alpha(d)$ and various premium levels $p$. Parameters: $r=0.01,\ \hat{\mu}=0.05,\ \beta=0.1,\ \rho=2.5,\ a=10.$}}
\label{fig:k_cl}
\end{figure}

\begin{figure}[!ht]
\center{
\subfloat{\includegraphics[width=0.60\textwidth]{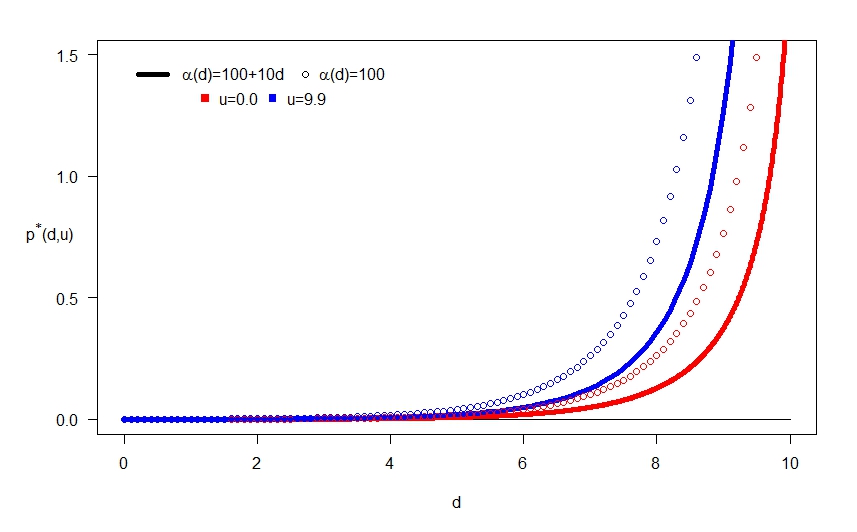}}
\quad
\subfloat{\includegraphics[width=0.6\textwidth]{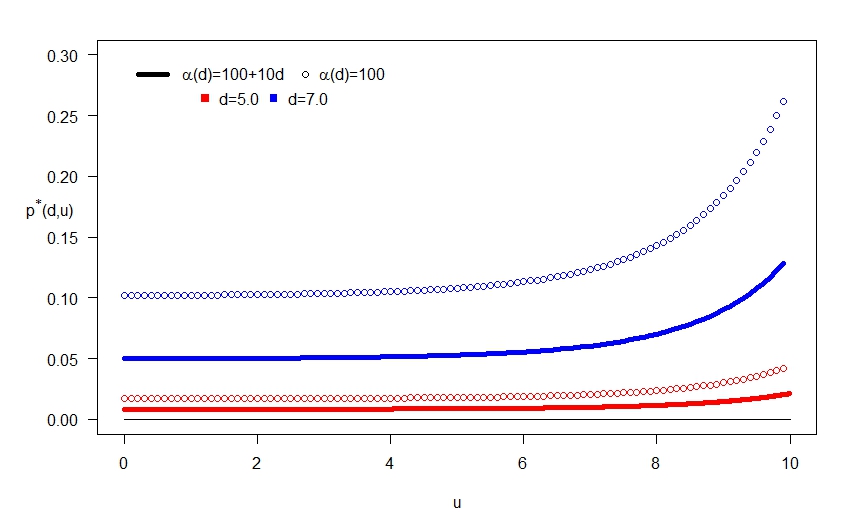}}
\caption{\footnotesize{The fair premium $p^*$ for the contract
$k(d,u,p)$ with drawup contingency for the Cram\'er-Lundberg model depending on the starting positions of drawdown (top) and drawup (bottom) for linear reward function $\alpha(d)$.
Parameters: $r=0.01,\ \hat{\mu}=0.05,\ \beta=0.1,\ \rho=2.5,\ a=10.$}}\label{fig:p*_k_cl}}
\end{figure}
\end{Ex}

\subsection{Cancellable feature}
The last contract that we consider in this paper allows investors to terminate it earlier and it has drawup contingency. In other words, we
add the cancellable feature to the previous contract $k(d,u,p)$ given in \eqref{kyzp}.
In this case, the protection buyer can terminate his or her position by paying fee $c(D_\tau)\geq 0$ at any time prior to drawdown epoch.
Note that the value of this fee may depend on the value of the drawdown at the moment of termination.
The value of this contract then equals,
\begin{align}
K(d,u,p):=&\sup\limits_{\tau\in\mathcal{T}}\mathbb{E}_{|d|u}\Bigg[-\int_0^{\tau^+_D (a)\wedge\tau^+_U(b)\wedge\tau}e^{-rt}p\diff t\nonumber\\
&+\alpha(D_{\tau^+_D (a)}) e^{-r\tau^+_D(a)}\mathbbm{1}_{\left(\tau^+_D(a)<\tau^+_U(b)\wedge\tau\right)}-c(D_\tau)e^{-r\tau}\mathbbm{1}_{\left(\tau<\tau^+_D(a)\wedge\tau^+_U(b)\right)};\ \tau<\infty\Bigg].\label{K}
\end{align}
To analyse the cancellable feature, we rewrite the contract value in a similar form as that given in Lemma \ref{war}.
That is, we can represent the drawup contingency contract with cancellable feature as the sum of two parts:
one without cancellable feature (it is then $k(d,u,p)$) and one that depends on a stopping time $\tau$.
\begin{Prop}\label{Kdecompose}
The cancellable drawup insurance value \eqref{K} admits the following decomposition:
\begin{align}\label{kdecompose}
K(d,u,p)=k(d,u,p)+H(d,u,p),
\end{align}
where
\begin{align}
&H(d,u,p):=\sup\limits_{\tau\in\mathcal{T}}h_{\tau}(d,u,p),\label{fh}\\
&h_{\tau}(d,u,p):=\mathbb{E}_{|d|u}\left[e^{-r\tau}\tilde{k}(D_{\tau},U_{\tau},p);\tau<\tau_D^+(a)\wedge\tau_U^+(b),\ \tau<\infty\right],\label{htau}\\
&\tilde{k}(d,u,p):=-k(d,u,p)-c(d)\label{ktilde}
\end{align}
for $k$ defined in \eqref{kyzp} and identified in (\ref{fdrawup}).
\end{Prop}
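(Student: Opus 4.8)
The plan is to follow the same scheme as the proof of Proposition \ref{FDecomp}, now carried out for the two-dimensional Markov process $(D_t,U_t)$ and with the single barrier $\tau_D^+(a)$ replaced by the two-barrier exit time $T:=\tau_D^+(a)\wedge\tau_U^+(b)$. First I would split the two $\tau$-dependent ingredients of \eqref{K} into a part that does not see $\tau$ and a part supported on $\{\tau<T\}$. For the running cost this is the pathwise identity $\int_0^{T\wedge\tau}e^{-rt}p\,\diff t=\int_0^{T}e^{-rt}p\,\diff t-\mathbbm{1}_{(\tau<T)}\int_\tau^{T}e^{-rt}p\,\diff t$, while for the terminal reward I would factor $\mathbbm{1}_{(\tau_D^+(a)<\tau_U^+(b)\wedge\tau)}=\mathbbm{1}_{(\tau_D^+(a)<\tau_U^+(b))}\mathbbm{1}_{(\tau_D^+(a)<\tau)}$ and use $\mathbbm{1}_{(\tau_D^+(a)<\tau)}=1-\mathbbm{1}_{(\tau\leq\tau_D^+(a))}$, so that the reward in \eqref{K} becomes $\alpha(D_{\tau_D^+(a)})e^{-r\tau_D^+(a)}\mathbbm{1}_{(\tau_D^+(a)<\tau_U^+(b))}$ minus the correction $\alpha(D_{\tau_D^+(a)})e^{-r\tau_D^+(a)}\mathbbm{1}_{(\tau_D^+(a)<\tau_U^+(b))}\mathbbm{1}_{(\tau\leq\tau_D^+(a))}$.

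Collecting the $\tau$-independent survivors, the terms $-\int_0^T e^{-rt}p\,\diff t+\alpha(D_{\tau_D^+(a)})e^{-r\tau_D^+(a)}\mathbbm{1}_{(\tau_D^+(a)<\tau_U^+(b))}$ are exactly the integrand defining $k(d,u,p)$ in \eqref{kyzp}, so this block contributes $k(d,u,p)$ and pulls out of the supremum. On the event $\{\tau<T\}$ one has automatically $\tau<\tau_D^+(a)$, hence $\mathbbm{1}_{(\tau\leq\tau_D^+(a))}=1$, and the three remaining terms — the residual premium $\int_\tau^T e^{-rt}p\,\diff t$, the forgone reward $-\alpha(D_{\tau_D^+(a)})e^{-r\tau_D^+(a)}\mathbbm{1}_{(\tau_D^+(a)<\tau_U^+(b))}$, and the fee $-c(D_\tau)e^{-r\tau}$ — are all carried by $\mathbbm{1}_{(\tau<T)}$, leaving the supremum over $\tau$ of a single expectation on $\{\tau<T,\ \tau<\infty\}$.

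The key step is the strong Markov property of $(D_t,U_t)$ applied at $\tau$ on $\{\tau<T,\ \tau<\infty\}$. Since neither barrier has been crossed before $\tau$, both the drawdown and the drawup processes regenerate from the current state $(D_\tau,U_\tau)$; conditioning on $\mathcal{F}_\tau$ and extracting $e^{-r\tau}$ turns $\mathbb{E}_{|d|u}\!\left[\int_\tau^T e^{-rt}p\,\diff t-\alpha(D_{\tau_D^+(a)})e^{-r\tau_D^+(a)}\mathbbm{1}_{(\tau_D^+(a)<\tau_U^+(b))}\,\middle|\,\mathcal{F}_\tau\right]$ into $e^{-r\tau}\bigl(-k(D_\tau,U_\tau,p)\bigr)$ by the very definition \eqref{kyzp}–\eqref{fdrawup}. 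Adding the fee gives $e^{-r\tau}\bigl(-k(D_\tau,U_\tau,p)-c(D_\tau)\bigr)=e^{-r\tau}\tilde{k}(D_\tau,U_\tau,p)$, which is precisely the integrand of $h_\tau$ in \eqref{htau}; taking the supremum yields $H(d,u,p)$ and hence \eqref{kdecompose}. The regeneration used here is legitimate because $(D_t,U_t)$ is a strong Markov process, the same fact that underlies the generator $\mathcal{A}^{(D,U)}$ in Lemma \ref{supermtg}.

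The step I expect to require the most care is the boundary event $\{\tau=T\}$. Because the reward indicator in \eqref{K} is strict in $\tau$, the correction above is actually supported on $\{\tau\leq\tau_D^+(a)\}\cap\{\tau_D^+(a)<\tau_U^+(b)\}=\{\tau\leq T,\ T=\tau_D^+(a)\}$, which contains $\{\tau=T=\tau_D^+(a)<\tau_U^+(b)\}$; there stopping exactly at $T$ forgoes the reward while incurring no fee, so the pathwise correction is $-\alpha(D_{\tau_D^+(a)})e^{-r\tau_D^+(a)}\leq 0$ rather than $0$, and it is not captured by $h_\tau$, which lives on $\{\tau<T\}$. Since this extra contribution is non-positive it gives $K\leq k+H$; for the reverse inequality I would, given any candidate $\tau$, pass to the stopping time equal to $\tau$ on $\{\tau<T\}$ and to $+\infty$ on $\{\tau\geq T\}$, which leaves $h_\tau$ unchanged while making the boundary contribution vanish, so that $K\geq k+H$ as well. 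Combining the two inequalities yields the claimed decomposition.
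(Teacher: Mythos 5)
Your proposal is correct and its core is the same as the paper's proof: split the premium integral, the reward indicator and the fee into a $\tau$-independent block (which reproduces the integrand of $k$ in \eqref{kyzp}) plus a remainder supported on $\{\tau<\tau_D^+(a)\wedge\tau_U^+(b)\}$, and then apply the strong Markov property of $(D_t,U_t)$ at $\tau$ to turn the conditional expectation of the remainder into $e^{-r\tau}\tilde{k}(D_\tau,U_\tau,p)$.

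Where you genuinely add something is the boundary event $\{\tau=\tau_D^+(a)<\tau_U^+(b)\}$. The paper's proof rests on the identity $\mathbbm{1}_{(\tau_D^+(a)<\tau_U^+(b)\wedge\tau)}=\mathbbm{1}_{(\tau_D^+(a)<\tau_U^+(b))}-\mathbbm{1}_{(\tau<\tau_D^+(a)<\tau_U^+(b))}$, which is false precisely on that event (the left side is $0$ there, the right side is $1$; note that, unlike the drawdown-only contract \eqref{F} whose reward indicator is $\mathbbm{1}_{(\tau_D^+(a)\leq\tau)}$, the reward indicator in \eqref{K} is strict in $\tau$, and the event $\{\tau=\tau_D^+(a)\}$ is not negligible for a general stopping time). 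Your exact decomposition carries the extra non-positive term $-\alpha(D_{\tau_D^+(a)})e^{-r\tau_D^+(a)}\mathbbm{1}_{(\tau=\tau_D^+(a)<\tau_U^+(b))}$, which is not captured by $h_\tau$, and your sandwich argument --- $K\leq k+H$ because this term is $\leq 0$, and $K\geq k+H$ by replacing $\tau$ with the stopping time that agrees with $\tau$ on $\{\tau<\tau_D^+(a)\wedge\tau_U^+(b)\}$ and equals $+\infty$ otherwise, which preserves $h_\tau$ and kills the boundary term --- closes this gap cleanly. (It is worth checking, as you implicitly do, that the modified time is indeed a stopping time, which follows since $\{\tau<\tau_D^+(a)\wedge\tau_U^+(b)\}\in\mathcal{F}_\tau$, and that the inequality $K\leq k+H$ uses $\alpha\geq 0$.) So your argument proves the stated decomposition rigorously where the paper's one-line indicator identity is, strictly speaking, an imprecision.
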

\begin{proof} Using $\mathbbm{1}_{(\tau_D^+(a)<\tau_U^+(b)\wedge\tau)}=\mathbbm{1}_{(\tau_D^+(a)<\tau_U^+(b))}-\mathbbm{1}_{(\tau<\tau_D^+(a)<\tau_U^+(b))}$ we obtain:
\begin{align}
&K(d,u,p)=\mathbb{E}_{|d|u}\left[-\int_0^{\tau_D^+ (a)\wedge\tau_U^+(b)}e^{-rt}p\diff t+\alpha(D_{\tau_D^+(a)}) e^{-r\tau_D^+(a)}\mathbbm{1}_{\left(\tau_D^+(a)<\tau_U^+(b)\right)}\right]\nonumber\\
&\quad+\sup\limits_{\tau\in\mathcal{T}}\mathbb{E}_{|d|u}\left[\int_{\tau\wedge\tau_D^+(a)\wedge\tau_U^+(b)}^{\tau_D^+(a)\wedge\tau_U^+(b)}e^{-rt}p\diff t-\alpha(D_{\tau_D^+(a)})e^{-r\tau_D^+(a)}\mathbbm{1}_{\tau<\tau_D^+(a)<\tau_U^+(b)}-c(D_{\tau})e^{-r\tau}\mathbbm{1}_{(\tau<\tau_D^+(a)\wedge\tau_U^+(b))};\ \tau<\infty\right].\nonumber
\end{align}
Note that $\mathbbm{1}_{(\tau<\tau_D^+(a)<\tau_U^+(b)}=\mathbbm{1}_{(\tau<\tau_D^+(a)\wedge\tau_U^+(b)}\mathbbm{1}_{(\tau_D^+(a)<\tau_U^+(b)}$. Thus, the result follows from the Strong Markov Property applied at the stopping time $\tau$:
\begin{align}
K(d,u,p)=&k(d,u,p)+\sup\limits_{\tau\in\mathcal{T}}\mathbb{E}_{|d|u}\Big[e^{-r\tau}\mathbb{E}_{|D_{\tau}|U_{\tau}}\left[\int_{0}^{\tau_D^+(a)\wedge\tau_U^+(b)}e^{-rt}p\diff t\right]\nonumber\\
&\qquad\qquad\qquad-\alpha(D_{\tau_D^+(a)})e^{-r\tau_D^+(a)}\mathbbm{1}_{(\tau_D^+(a)<\tau_U^+(b))}-c(D_{\tau})e^{-r\tau};\ \tau<\tau_D^+(a)\wedge\tau_U^+(b),\ \tau<\infty\Big].\nonumber
\end{align}
\end{proof}

To identify the value of the contract $K$ and, hence, to find the function $H$ defined in \eqref{fh}
we use, as before, the ``guess and verify'' approach.
The candidate for the optimal stopping strategy $\tau^*$ is the first passage time over level
$d-\theta^*$ for some $\theta^* \in (d+u-b,a)$ that will be specified later; that is,
\begin{equation}\label{tau*drugie}
\tau^*:=
\tau_{d-\theta^*}^+.
\end{equation}
\begin{Rem}\label{negativetheta}\rm
We allow here $\theta^*$ to be negative, which corresponds to the rise of the running supremum from $d$ to $d+|\theta^*|$.
Finally, note that, if $\theta^*>0$ then $\tau^*$ becomes $\tau_D^-(\theta^*)$, as for the
drawdown cancellable contract without drawup contingency.
For example, by using Verification Lemma \ref{war}, to simplify exposition, for $\theta^*<0$ we will later
treat $\tau_{d-\theta^*}^+$ as the first drawdown passage time $\tau_D^-(\theta^*)$.
\end{Rem}
At this point it should be noted that if $\tilde{k}(D_{\tau_{d-\theta}^+},U_{\tau_{d-\theta}^+},p)<0$ for all $\theta$, then
it is optimal for the investor to never terminate the contract and, hence, $\tau=\infty$ and $h_{\tau}(d,u,p)$=0.
To avoid this trivial case, from now we assume that there exists at least one point $(d_0,u_0)$ with $d_0\in[0,d)$ and $u_0>u$ for which $\tilde{k}(d_0,u_0,p)>0$.
This is equivalent to the following inequality:
\begin{align}
\frac{p}{r}-c(d_0)>\frac{p}{r}\nu(d_0,u_0)+\frac{p}{r}\lambda(d_0,u_0)+N(d_0,u_0)\geq 0.\label{war2}
\end{align}

We will now find function $h_{\tau}(d,u,p)$ defined in \eqref{htau}
for the postulated stopping rule
\begin{equation}\label{tauzwykle}\tau= \tau_{d-\theta}^+\qquad\text{for $\theta \in (d+u-b,d)$.}\end{equation}
Then, we will maximise $h_{\tau}(d,u,p)$ over $\theta$ to find $\theta^*$. In the last step, we
will utilise the Verification Lemma (checking all its conditions) to verify that the suggested
stopping rule is a true optimal.

We start from the first step, which is identifying the function $h_{\tau}(d,u,p)$
for $\tau$ given \eqref{tauzwykle}.
For $\theta \in (d+u-b,a)$ we denote:
\begin{align}
h_{\tau_{d-\theta}^+}(d,u,p)=h_{\tau_{d-\theta}^+}(d,u,p)\mathbbm{1}_{(d>\theta)}+h_{\tau_{d-\theta}^+}(d,u,p)\mathbbm{1}_{(d\leq\theta)}:=h_>(d,u,p,\theta)\mathbbm{1}_{(d>\theta)}+h_<(d,u,p,\theta)\mathbbm{1}_{(d\leq\theta)}.\label{h1}
\end{align}
Recall that the underlying process $X$ starts at 0.
Note  that for $\theta\geq d$, then $X_0>d-\theta$ and the investor should stop the contract immediately; that is,
\begin{align}
h_<(d,u,p,\theta)=\tilde{k}(d,u,p).\label{h=k}
\end{align}
Assume now that $\theta<d$. We will calculate function $h_>(d,u,p,\theta)$.
Given that the considered process $X_t$ has no strictly positive jumps and, hence, it crosses upward at all levels continuously,
from the definitions of the drawup and drawdown processes we have:
\begin{align}
U_{\tau^+_{d-\theta}}=&X_{\tau^+_{d-\theta}}-\underline{X}_{\tau^+_{d-\theta}}\wedge (-u)=d-\theta-\underline{X}_{\tau^+_{d-\theta}}\wedge (-u),\nonumber\\
D_{\tau^+_{d-\theta}}=&\overline{X}_{\tau^+_{d-\theta}}\vee d-X_{\tau^+_{d-\theta}}=(d-\theta)\vee d - (d-\theta)=\theta\vee 0.\nonumber
\end{align}
Moreover, on the event $\{\tau^+_{d-\theta}<\tau_U^+(b)\wedge\tau_D^+(a)\}$ we can observe that $\overline{U}_{\tau^+_{d-\theta}}<b$ and $\overline{D}_{\tau^+_{d-\theta}}<a$,
where $\overline{U}_t:=\sup_{s\leq t} U_s$ and $\overline{D}_t:=\sup_{s\leq t} D_s$ denote the running supremum of the processes $U_t$ and $D_t$, respectively.
In fact, from the definition of the drawup we have $\overline{U}_{\tau^+_{d-\theta}}=U_{\tau^+_{d-\theta}}$.
Thus, the following inequality holds true: $\underline{X}_{\tau^+_{d-\theta}}\wedge (-u)>d-\theta-b$.
When analysing the drawdown process, we consider two cases. First, let $\theta>0$, and then $\overline{D}_{\tau^+_{d-\theta}}=d-\underline{X}_{\tau^+_{d-\theta}}<a$. Second, if $\theta\leq 0$, then we have $\overline{D}_{\tau^+_{d-\theta}}\leq (d-\theta)-\underline{X}_{\tau^+_{d-\theta}}\leq U_{\tau^+_{d-\theta}}<b<a$.
These observations give an additional inequality:
$\underline{X}_{\tau^+_{d-\theta}}>(d-\theta)\vee d-a$.
We can now rewrite the function $h_>$  as follows:
\begin{align}
h_>(d,u,p,\theta)=&\E_{|d|u}\left[e^{-r\tau^+_{d-\theta}}\tilde{k}(\theta\vee 0,d-\theta-\underline{X}_{\tau^+_{d-\theta}}\wedge (-u),p);\ \underline{X}_{\tau^+_{d-\theta}}>(d-\theta)\vee d-a,\ \underline{X}_{\tau^+_{d-\theta}}\wedge (-u)>d-\theta-b\right]\nonumber\\
=&\tilde{k}(\theta\vee 0,d+u-\theta,p)\E_{|d|u}\left[e^{-r\tau^+_{d-\theta}};\ \underline{X}_{\tau^+_{d-\theta}}>(-u)\vee ((d-\theta)\vee d-a)\right]\mathbbm{1}_{(d+u-b<\theta)}\nonumber\\
&+\E_{|d|u}\left[e^{-r\tau^+_{d-\theta}}\tilde{k}(\theta\vee 0,d-\theta-\underline{X}_{\tau^+_{d-\theta}},p);\ -u\geq\underline{X}_{\tau^+_{d-\theta}}>(d-\theta-b)\vee ((d-\theta)\vee d-a)\right].\nonumber
\end{align}
The joint law for $(\tau_{d-\theta}^+,\underline{X}_{\tau^+_{d-\theta}})$ can be derived from two-sided formula given in \eqref{twosided1}:
\begin{align}
\E\left[e^{-r\tau^+_{d-\theta}};\ \underline{X}_{\tau^+_{d-\theta}}>-y\right]=\E\left[e^{-r\tau_{d-\theta}^+};\ \tau_{d-\theta}^+<\tau_{-y}^-\right]=\frac{W^{(r)}(y)}{W^{(r)}(d-\theta+y)}.\nonumber
\end{align}
The final form of the function $h_>(d,u,p,\theta)$ can  then be expressed as follows:
\begin{align}
h_>(d,u,p,\theta)=&\tilde{k}(\theta^+,(d-\theta+u)^+,p)\frac{W^{(r)}(u\wedge (a-(d-\theta)\vee d))}{W^{(r)}(d-\theta+u\wedge (a-(d-\theta)\vee d))}\mathbbm{1}_{(\theta>d+u-b)}\nonumber\\
&+\int_{u}^{(b+\theta-d)\vee (a-(d-\theta)\vee d)}\tilde{k}(\theta^+,(d-\theta+y)^+,p)\frac{\partial}{\partial y}\frac{W^{(r)}(y)}{W^{(r)}(d-\theta+y)}\diff y \mathbbm{1}_{((b+\theta-d)\vee (a-(d-\theta)\vee d)>u)}.\label{h>}
\end{align}

Because $h_<(d,u,p,\theta)$ does not depend on $\theta$, as previously for the drawdown contract,
we choose the optimal level $\theta^*$ such that it maximises the function $h_>(d,u,p,\theta)$; that is, we define:
\begin{align}\label{tetadu*}
\theta^*=\inf\left\{ \theta\in (d+u-b,d]:\ h_>(d,u,p,\varsigma)\leq h_>(d,u,p,\theta)\ \forall\varsigma\geq 0\ \right\}.
\end{align}
\begin{lem}
Assume that \eqref{war2} holds. Then, there exist $\theta^*$ defined in \eqref{tetadu*}.
\end{lem}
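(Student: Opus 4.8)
The plan is to mirror the existence argument used for the drawdown-only cancellable contract (the lemma established following \eqref{teta*}): first check that $\theta\mapsto h_>(d,u,p,\theta)$ is continuous and bounded, then attain its maximum over the admissible range $(d+u-b,d]$ by a compactness argument, and finally use the non-triviality assumption \eqref{war2} to guarantee that the maximiser does not escape to the open left endpoint, so that the infimum in \eqref{tetadu*} is well defined and attained.

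First I would record the regularity of the integrand. By \eqref{fdrawup} and Proposition \ref{lambdanu}, the map $(d,u)\mapsto k(d,u,p)$ is continuous, since $\nu$, $\lambda$ and $N$ are assembled from the scale functions $W^{(r)},Z^{(r)}$ --- which are $C^1$ by \eqref{Wprim} --- and from the joint law \eqref{m2}; together with the $C^2$-regularity of the fee $c$, definition \eqref{ktilde} shows that $\tilde k(\cdot,\cdot,p)$ is continuous. Substituting into the explicit formula \eqref{h>} and using once more that $W^{(r)}\in\mathcal C^1(\mathbb R_+)$ (so that the boundary ratio and the $y$-derivative inside the integral depend continuously on $\theta$), I obtain continuity of $h_>(d,u,p,\cdot)$ on $(d+u-b,d]$. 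Boundedness is inherited from $\tilde k$: the latter is bounded because $\nu,\lambda\in[0,1]$, $N(d,\cdot)\le\Xi(d)<\infty$ by \eqref{warXi}, and $c$ is bounded, while every scale-function ratio appearing in \eqref{h>} lies in $[0,1]$.

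Next I would attain the maximum. Evaluating the left endpoint of the range, as $\theta\searrow d+u-b$ the drawup argument $d-\theta+u$ entering the first term of \eqref{h>} tends to $b$, while the upper limit $b+\theta-d$ of the integral term tends to $u$, so the integral contribution vanishes. At drawup level $b$ the drawup contingency triggers immediately, so $\lambda\to 1$, $\nu\to 0$ and $N\to 0$, whence $k\to 0$ and $\tilde k\to -c(\cdot)\le 0$ by \eqref{ktilde}. Consequently $\limsup_{\theta\searrow d+u-b}h_>(d,u,p,\theta)\le 0$, so the supremum of $h_>$ over $(d+u-b,d]$ is not approached at this endpoint. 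On the other hand, assumption \eqref{war2} provides a point $(d_0,u_0)$ with $d_0\in[0,d)$, $u_0>u$ and $\tilde k(d_0,u_0,p)>0$, i.e. a strictly positive value of $\tilde k$ among the drawdown--drawup pairs realisable at the termination epoch $\tau^+_{d-\theta}$ (the drawdown decreases to $\theta^+<d$ while the drawup increases above $u$); this yields $\sup_{(d+u-b,d]}h_>>0$. Combining continuity with these two facts, the supremum is attained at some $\theta\in(d+u-b,d]$ bounded away from the left endpoint.

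It then follows that the set of global maximisers of $h_>(d,u,p,\cdot)$ on $(d+u-b,d]$ is non-empty, and by continuity it is closed; hence it possesses a least element, which is precisely the $\theta^*$ of \eqref{tetadu*}. I expect the main obstacle to be exactly the endpoint analysis linking \eqref{war2} to a strictly positive value of $h_>$: because \eqref{h>} is a sum of a boundary term and an integral term --- in contrast to the single expression $g_>(d,p,\theta)=\tilde f(\theta,p)W^{(r)}(a-d)/W^{(r)}(a-\theta)$ of the drawdown case, where positivity of $\tilde f$ transfers directly to $g_>$ --- one must verify that the positive value guaranteed by \eqref{war2} genuinely makes $h_>$ positive for an admissible $\theta$, rather than being cancelled by the integral contribution. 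The remaining steps --- continuity, boundedness, and the extraction of the infimal maximiser --- are routine consequences of \eqref{Wprim}, Proposition \ref{lambdanu} and the hypotheses on $c$.
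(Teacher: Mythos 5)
Your argument follows essentially the same route as the paper's proof: continuity of $h_>(d,u,p,\cdot)$ on the closed interval, the limit $-c((d+u-b)^+)$ times a scale-function ratio (hence $\le 0$) as $\theta\downarrow d+u-b$, the limit $\tilde k(d,u,p)$ as $\theta\uparrow d$, and assumption \eqref{war2} to rule out the maximiser escaping to the excluded left endpoint, after which the least element of the closed argmax set gives \eqref{tetadu*}. The one step you flag as delicate --- that \eqref{war2} genuinely forces $h_>$ to exceed its left-endpoint limit somewhere in $(d+u-b,d]$ rather than being cancelled by the integral term --- is equally left implicit in the paper, so your treatment is no less complete than the original.
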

\begin{proof}
First note that $h_>(d,u,p,\cdot)$ is continuous on $[d+u-b,d]$ because $\tilde{k}$ and scale function $W^{(r)}$ are
continuous.  If $\theta\downarrow d+u-b$ then $D_{\tau^+_{d-\theta}}\downarrow (d+u-b)^+$ and $U_{\tau^+_{d-\theta}}\uparrow b$. Thus,
\begin{align}
h_>(d,u,p,\theta)\longrightarrow-c((d+u-b)^+)\frac{W^{(r)}(u\wedge((a-b+u)\vee d))}{W^{(r)}(b-u+u\wedge((a-b+u)\vee d))}\leq 0.\nonumber
\end{align}
On the one hand, as $\theta\uparrow d$ we get $D_{\tau^+_{d-\theta}}\uparrow d$ and then,
\begin{align}
h_>(d,u,p,\theta)\longrightarrow \tilde{k}(d,u,p).\nonumber
\end{align}
If $\tilde{k}(d,u,p)<0$ then by assumption \eqref{war2} there exists $\theta^*$.
On the other hand, when $\tilde{k}(d,u,p)\geq 0$  then either $\theta^*$ exists inside interval $(d+u-b,d)$ or the maximum is attained at boundary; that is, $\theta^*=d$. The second scenario corresponds to a case
where the contract is immediately stopped.
\end{proof}

From this proof, we can see that $h_>(d,u,p,\theta)\longrightarrow \tilde{k}(d,u,p)$.
Thus, the {\it continuous fit} for this problem is always satisfied.
\footnote{
Because of the complexity of the value function $h_>(t,\theta^*,u)$ we cannot prove the {\it smooth pasting} condition
using theoretical considerations. However, we believe that it always holds true.}

In the last step, using Verification Lemma \ref{war}, we will prove that the stopping rule \eqref{tau*drugie}
is indeed optimal.
\begin{theorem}\label{th2}
Let $c: [d+u-b,a)\rightarrow\mathbb{R}_+$ be a non-increasing bounded function in $C^2([d+u-b,a))$ and it additionally satisfies \eqref{war1}.
Assume that \eqref{war2} holds. Let $\theta^*$ defined by \eqref{tetadu*} satisfy the smooth fit condition:
$$\frac{\partial}{\partial d}h_>(d,u,p,\theta^*)|_{d\searrow\theta^*}=\frac{\partial}{\partial d}h_<(d,u,p,\theta^*)|_{d\nearrow\theta^*}$$
if $\sigma>0$.
If
\begin{align}\label{assum_additional2}
\int_{((\theta^*-d)^+,\infty)}\tilde{k}(d+z,(u-z)\vee 0,p)\Pi(-\diff z)\geq 0\textrm{ for all }d\in[0,\theta^*),
\end{align}
then $\tau^*=\tau_{d-\theta^*}^+$ given in \eqref{tau*drugie}
is the optimal stopping rule for the stopping problem \eqref{fh} and, hence, for the insurance contract \eqref{K}.
In this case the value $K(d,u,p)$ of the drawdown contract with
drawup contingency and cancellable feature equals $k(d,u,p)+h_{\tau_{d-\theta^*}^+}(d,u,p)$, as defined in \eqref{h1}
for $h_<(d,u,p,\theta^*)$, and identified in \eqref{h=k} and for $h_>(d,u,p,\theta^*)$ given in \eqref{h>}.
\end{theorem}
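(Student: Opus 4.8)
The plan is to mirror the proof of optimality for the drawdown cancellable contract (the theorem preceding Remark \ref{rem1}), now carried out for the two-dimensional Markov process $(D_t,U_t)$. I invoke Verification Lemma \ref{war} with $\Upsilon_t=(D_t,U_t)$, state space $\mathbb{B}=\mathbb{R}_+\times\mathbb{R}_+$, killing time $\tau_0=\tau_D^+(a)\wedge\tau_U^+(b)$, payoff $V(d,u)=\tilde{k}(d,u,p)$ and candidate value $v^*(d,u)=h_{\tau_{d-\theta^*}^+}(d,u,p)$. Since $\nu,\lambda\in[0,1]$, $N\leq\Xi<\infty$ and $c$ is bounded, the function $\tilde{k}$ is bounded, so the integrability hypothesis of Lemma \ref{war} is automatic. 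Following Remark \ref{negativetheta}, whenever $\theta^*<0$ I read $\tau_{d-\theta^*}^+$ as $\tau_D^-(\theta^*)$, so that the continuation and stopping regions may uniformly be written as $C=\{d>\theta^*\}$ and $S=\{0\leq d<\theta^*\}$.

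For the domination condition (i) I would split on the sign of $d-\theta^*$. For $d\leq\theta^*$, identity \eqref{h=k} gives $h_{\tau^*}(d,u,p)=h_<(d,u,p,\theta^*)=\tilde{k}(d,u,p)$, so equality holds. For $d>\theta^*$, the definition \eqref{tetadu*} of $\theta^*$ as a maximiser of $\theta\mapsto h_>(d,u,p,\theta)$, together with the boundary behaviour $h_>(d,u,p,\theta)\to\tilde{k}(d,u,p)$ as $\theta\uparrow d$ (established in the proof of the preceding existence lemma), yields $h_{\tau^*}(d,u,p)=h_>(d,u,p,\theta^*)\geq\tilde{k}(d,u,p)$. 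The boundary case $\theta^*=d$ (immediate stopping) is covered by the same chain of inequalities.

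The second condition (ii) requires that $e^{-r(t\wedge\tau_0)}h_{\tau^*}(D_{t\wedge\tau_0},U_{t\wedge\tau_0},p)$ be a supermartingale, and here I apply Lemma \ref{supermtg} to $p(t,d,u)=e^{-rt}h_{\tau^*}(d,u,p)$ with $p_C=e^{-rt}h_>(d,u,p,\theta^*)\mathbbm{1}_{(d>\theta^*)}$ and $p_S=e^{-rt}h_<(d,u,p,\theta^*)\mathbbm{1}_{(d<\theta^*)}$. The continuous fit \eqref{smooth2} holds automatically (as noted after \eqref{tetadu*}) and the smooth fit \eqref{smooth} is assumed in the statement, so by Lemma \ref{supermtg}(i) it remains to show that $I(p_C)$ is a martingale and $I(p_S)$ a supermartingale. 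The martingale property of $I(p_C)$ follows from Lemma \ref{supermtg}(ii) and the strong Markov property, exactly as for $g_>$ in the drawdown-only case, because $h_>$ is by construction the expected discounted value of $\tilde{k}$ evaluated at the passage time $\tau_{d-\theta^*}^+$.

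The crux is the supermartingale property of $I(p_S)$, for which I would first show that $e^{-rt}\tilde{k}(D_t,U_t,p)$ is a supermartingale. Writing $\tilde{k}=-\frac{p}{r}(\nu+\lambda)-N+\frac{p}{r}-c$ via \eqref{fdrawup}, the processes $e^{-rt}\nu(D_t,U_t)$, $e^{-rt}\lambda(D_t,U_t)$ and $e^{-rt}N(D_t,U_t)$ are martingales by the strong Markov property, while $e^{-rt}(\frac{p}{r}-c(D_t))$ is a supermartingale: since $c$ depends on $d$ only, the generator $\mathcal{A}^{(D,U)}$ reduces to $\mathcal{A}^{(D)}$ on it, and the supermartingale inequality and normal reflection conditions follow from \eqref{war1} and the monotonicity of $c$, precisely as in \eqref{adu}--\eqref{normalrefld}. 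To pass from $\tilde{k}$ to $p_S$ I then insert the indicator $\mathbbm{1}_{(d<\theta^*)}$ into the jump term of $\mathcal{A}^{(D,U)}$; because $D$ jumps only upward while $U$ jumps to $(u-z)\vee 0$, the indicator can only annihilate contributions from jumps carrying $d$ above $\theta^*$, so splitting the L\'evy integral at $z=\theta^*-d$ shows that the generator of $p_S$ differs from that of $\tilde{k}$ by $-\int_{(\theta^*-d,\infty)}\tilde{k}(d+z,(u-z)\vee 0,p)\,\Pi(-\diff z)$, which is $\leq 0$ exactly by assumption \eqref{assum_additional2}. Hence \eqref{supmtg} holds on $S$ and \eqref{normalref2} is inherited from $\tilde{k}$, so Lemma \ref{war} closes the argument. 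I expect this truncation-of-jumps step, together with the reliance on the smooth fit that is merely postulated for $\sigma>0$, to be the main obstacles; the remainder is a direct transcription of the one-dimensional argument with the drawup coordinate carried along passively.
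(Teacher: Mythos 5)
Your proposal follows the paper's own proof essentially step for step: the same invocation of Verification Lemma \ref{war} for the bivariate process $(D_t,U_t)$, the same domination argument via the maximising property of $\theta^*$, the same reduction to Lemma \ref{supermtg} with the decomposition $\tilde{k}=-\frac{p}{r}(\nu+\lambda)-N+\frac{p}{r}-c$ and the supermartingale check for $\frac{p}{r}-c$ under \eqref{war1}, and the same truncation-of-jumps step showing the generator of $p_S$ differs from that of $\tilde{k}$ by the nonnegative integral in \eqref{assum_additional2}. The argument is correct and no substantive deviation from the paper's route is present.
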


\begin{proof}
We again apply the Verification Lemma \ref{war} for the optimal stopping problem \eqref{fh}.
This time we take Markov process $\Upsilon_t=(D_t, U_t)$, $\mathbb{B}=\mathbb{R}_+\times\mathbb{R}_+$, $\tau_0=\tau_U^+(b)\wedge\tau_D^+(a)$ and
$V(\phi)=\tilde{k}(d,u,p)$ with $\phi=(d,u)$.
To prove the domination condition (i) of Verification Lemma \ref{war},
we have to show that $h_{\tau^+_{d-\theta^*}}(d,u,p)\geq \tilde{k}(d,u,p)$.
To do this, we take $\theta=d$ and use the definition of $\theta^*$ in \eqref{tetadu*}.
In particular, if $\theta^*<d$ then
\begin{align}
h_{\tau^+_{d-\theta^*}}(d,u,p)=h_>(d,u,p,\theta^*)\geq h_>(d,u,p,d)=\tilde{k}(d,u,p).\nonumber
\end{align}
Otherwise, if $\theta^*\geq d$, then:
\begin{align}
h_{\tau^+_{d-\theta^*}}(d,u,p)=h_<(d,u,p,\theta^*)=\tilde{k}(d,u,p).\nonumber
\end{align}

To prove the second condition of Verification Lemma \ref{war}, which states
that process
\begin{align}
e^{-r(t\wedge\tau_D^+(a)\wedge\tau_U^+(b))}h(D_{t\wedge\tau_D^+(a)\wedge\tau_U^+(b)},U_{t\wedge\tau_D^+(a)\wedge\tau_U^+(b)},p,\theta^*)\nonumber
\end{align}
is a supermartingale,
we use key Lemma \ref{supermtg}.
We take $p(t,d,u):=e^{-rt}h(d,u,p)$ for $d\in[0,\infty)$, $u\in[0,b)$  and $C=(\theta^*,\infty)$, $S=[d+u-b,\theta^*)$, and then $p_C(t,d,u)=e^{-rt}h_>(t,d,u,\theta^*)\mathbbm{1}_{(d\in C)}$ and $p_S(t,d,u)=e^{-rt}h_<(t,d,u,\theta^*)\mathbbm{1}_{(d\in S)}$.
Note that $\theta^*$ always satisfies the continuous fit and the smooth fit condition by the made assumption.
Also observe that function $h_{\tau_{d-\theta^*}^+}(d,u,p)$ defined in \eqref{htau} is bounded.
Indeed, this follows from
the inequalities $0\leq \nu(d,u)\leq 1$, $0\leq \lambda(d,u)\leq 1$,
$N(d,u)\leq \Xi(d)<\infty$ by \eqref{warXi} for any $d\geq 0, u\geq 0$.
Thus, from the Strong Markov Property, the process
\begin{align}
e^{-r(t \tau_D^-(\theta^*))}h(D_{t\wedge\tau_D^-(\theta^*)},U_{t\wedge\tau_D^-(\theta^*)},p,\theta^*)\nonumber
\end{align}
is a martingale (with the interpretation as mentioned in Remark \ref{negativetheta} for $\theta ^*\leq 0$).
Thus, the condition from Lemma \ref{supermtg} (ii) is satisfied and $I(p_C)$ is a martingale.

To prove that $I(p_S)$ is a supermartingale for region $S$,
we first show that $e^{-rt}\tilde{k}(D_t,U_t,p)$ is a supermartingale.
Indeed, from the definition of $\tilde{k}$ given in \eqref{ktilde} we have
\begin{equation}
\tilde{k}(d,u,p)=-\frac{p}{r}\nu(d,u)-\frac{p}{r}\lambda(d,u)-N(d,u)+\frac{p}{r}-c(d).\nonumber
\end{equation}
Now, note that
\begin{align}
e^{-r t}\nu(D_{t},U_{t}),\quad e^{-rt}\lambda(D_{t},U_{t}),\quad e^{-rt}N(D_{t},U_{t})\nonumber
\end{align}
are $\mathcal{F}_t$-martingales.
Then, the supermartingale condition of $e^{-rt}\tilde{k}$ holds if $\left(\frac{p}{r}-c(d)\right)$ is a supermartingale.
By \cite[Thm. 31.5]{sato} this holds when:
\begin{align}
&\mathcal{A}^{(D,U)}\left(\frac{p}{r}-c(d)\right)-r\left(\frac{p}{r}-c(d)\right)\leq 0,\nonumber
\end{align}
which is a consequence of assumption \eqref{war1}; see also,
\eqref{adu}.
Note that the condition \eqref{normalref2} also holds from assumption that function $c$ is non-increasing.
Indeed, observe that
\begin{align}
\frac{\partial}{\partial d}\left(\frac{p}{r}-c(d)\right)=-c^\prime (d)\geq 0\nonumber
\end{align}
and $\frac{\partial}{\partial u}\left(\frac{p}{r}-c(d)\right)=0$.
To prove the supermartingale property of $I(p_S)$ for $p_S(t,D_t,U_t)=e^{-rt}h_<(D_t,U_t,p)\mathbbm{1}_{(d\in S)}$ note that for $d\in S$ we have $p_S(t,d,u)=e^{-rt}\tilde{k}(d,u,p)$ from \eqref{h=k}. Thus, the generator $\mathcal{A}^{(D,U)}$ of $p_S$, as defined in \eqref{Dgenerator}, is equal:
\begin{align}
\mathcal{A}^{(D,U)}p_S(t,d,u)=e^{-rt}\mathcal{A}^{(D,U)}\tilde{k}(d,u,p)-\int_{((\theta^*-d)^+,\infty)}\tilde{k}(d+z,(u-z)\vee 0,p)\Pi(-\diff z)\geq 0.\nonumber
\end{align}
From the made assumption \eqref{assum_additional2}, the integral in the above formula is positive. Thus, for $d\in S$, we have
\begin{align}
-rp_S(t,d,u)+\mathcal{A}^{(D,U)}p_S(t,d,u)\leq -re^{-rt}\tilde{k}(d,u,p)+e^{-rt}\mathcal{A}^{(D,U)}\tilde{k}(d,u,p)\leq 0.\nonumber
\end{align}
Hence, the {\it supermartingale} condition \eqref{supmtg} from Lemma \ref{supermtg} (iii) for $I(p_S)$ is satisfied. Given that $p_S(t,d,u)=e^{-rt}\tilde{k}(d,u,p)$ for $d\in S$, the condition \eqref{normalref2} is the same for both functions because the derivative in \eqref{normalref2} is checked at $d+u-b\in S$. Thus, the condition \eqref{normalref2} is satisfied for $p_S$ because it is satisfied for $e^{-rt}\tilde{k}(d,u,p)$, as we have shown earlier. This completes the proof of the supermartingale property of $I(p_S)$ for $p_S(t,d,u)=e^{-rt}h_<(d,u,p,\theta^*)\mathbbm{1}_{(d\in S)}$.
\end{proof}

\addtocounter{Ex}{-2}
\begin{Ex}[continued]
We continue the analysis of linear Brownian motion for the drawdown contract with drawup contingency and cancellable feature.
We take $b<a$.
Note that we can calculate function $h_>$ from \eqref{h>}.
Then, we can find $\theta^*$ that maximises this function $h_{>}$.
We can numerically check that for chosen parameters this identified $\theta^*$ indeed satisfies the smooth fit condition.
Note also that the condition \eqref{assum_additional2} is satisfied because the Brownian motion has no jumps.
This means that $\tau^+_{d-\theta^*}$ is the optimal stopping rule.

\begin{figure}[!ht]
\center{
\subfloat{\includegraphics[width=0.4\textwidth]{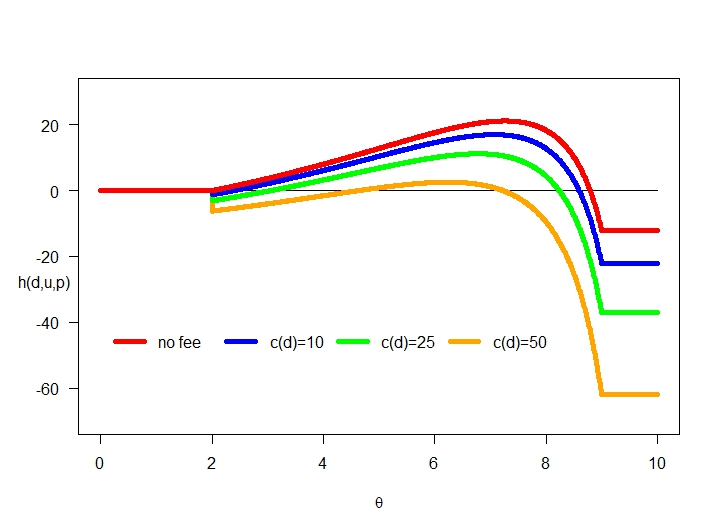}}
\subfloat{\includegraphics[width=0.4\textwidth]{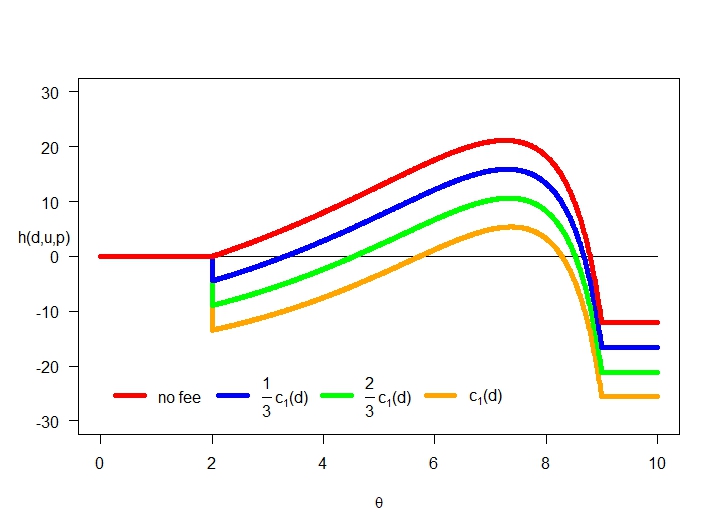}}\\
\subfloat{\includegraphics[width=0.4\textwidth]{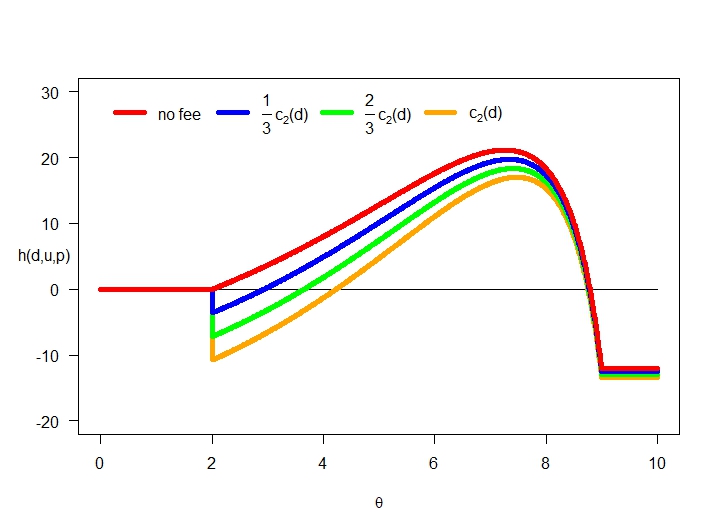}}
\caption{\footnotesize{The function $h$ for linear Brownian motion and various fee functions: constant (first), linear (second) and quadratic (third). Parameters: $r=0.01,\ \mu=0.04,\ \sigma=0.3,\ a=10, \alpha=100, p=1.35, d=9, u=1.$}}\label{fig:h_bm}}
\end{figure}
On Figure \ref{fig:h_bm} we depicted the function $h$ for constant, linear \eqref{c1_ex} and quadratic \eqref{c2_ex} fee functions.
Note that condition \eqref{war2} is satisfied in these cases.
On Figure \ref{fig:smooth_bm}, taking the fee function $c_2$ given in \eqref{c2_ex} (yellow graph on Figure \ref{fig:h_bm}),
we show that  $\theta^*\approx 1.8$, which was found before,
indeed satisfies the smooth fit condition in this case.
\begin{figure}[!ht]
\centering
\includegraphics[width=0.60\textwidth]{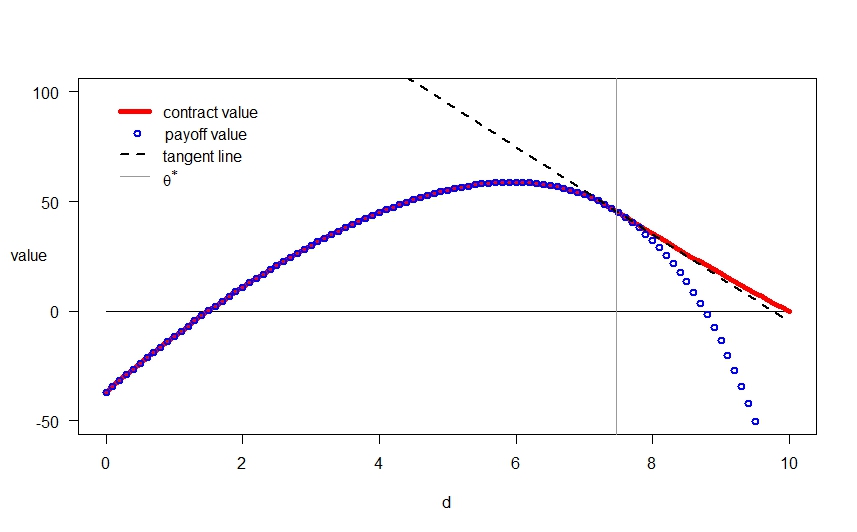}
\caption{\footnotesize{The smooth condition for linear Brownian motion and for quadratic fee function $c_2$. Parameters: $r=0.01,\ \mu=0.03,\ \sigma=0.4,\ a=10, b=8, \alpha=100, p=1.35, d=9, u=1.$}}
\label{fig:smooth_bm}
\end{figure}
\end{Ex}

\begin{Ex}[continued]
For the Cram\'er-Lundberg model, we consider the case when $a=b$.
Assume first that $d+u\geq a$, then the indicator in the last increment equals zero and we have:
\begin{align}
h_>(d,u,p,\theta)=&\mathbbm{1}_{(\theta>d+u-a)}\tilde{k}(\theta^+,(d-\theta+u)^+,p)\frac{W^{(r)}(u\wedge (a-(d-\theta)\vee d))}{W^{(r)}(d-\theta+u\wedge (a-(d-\theta)\vee d))}\nonumber
\end{align}
If $d+u<a$, then using integration by parts formula we get:
\begin{align}
&h_{>}(d,u,p,\theta)=\mathbbm{1}_{(\theta>d+u-a)}\Bigg(\tilde{k}(\theta^+,d-\theta+a-(d-\theta)\vee d,p)\frac{W^{(r)}(a-(d-\theta)\vee d)}{W^{(r)}(d-\theta+a-(d-\theta)\vee d)}\nonumber\\
&\quad-\frac{1}{r}\left(\Xi(0)\frac{W^{\prime (r)}(a)}{(W^{(r)}(a))^2}+\frac{p}{r}Z^{(r)}(a)\frac{W^{\prime (r)}(a)}{(W^{(r)}(a))^2}-\frac{p}{r}\frac{W^{\prime (r)}(a)}{(W^{(r)}(a))^2}-p\right)\left(Z^{(r)}(a-(d-\theta)\vee d)-Z^{(r)}(u)\right)\Bigg).\nonumber
\end{align}
Using these formula, we can find function $h_>$
for the linear fee function, which is defined as:
\begin{align}
c_3(d,c)=\frac{c-\frac{p}{r}}{a}d+\frac{p}{r}\mathbbm{1}_{(d<a)}\label{c3_ex}
\end{align}
for $c<\frac{p}{r}$. The graph of function $h_>$ is depicted on Figure \ref{fig:h_cl}. On the same Figure, we also show that condition \eqref{assum_additional2} is satisfied.
We check the condition \eqref{war2} numerically and find that it is satisfied in all considered cases.
\begin{figure}[!ht]

\centering
\center{
\subfloat{\includegraphics[width=0.5\textwidth]{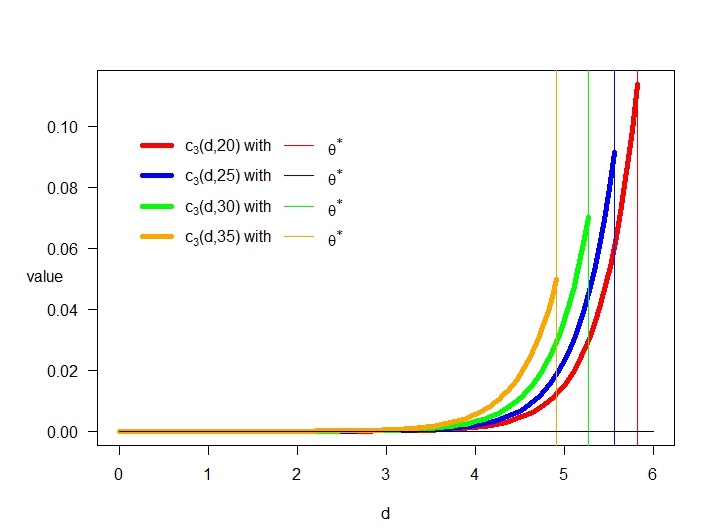}}
\subfloat{\includegraphics[width=0.50\textwidth]{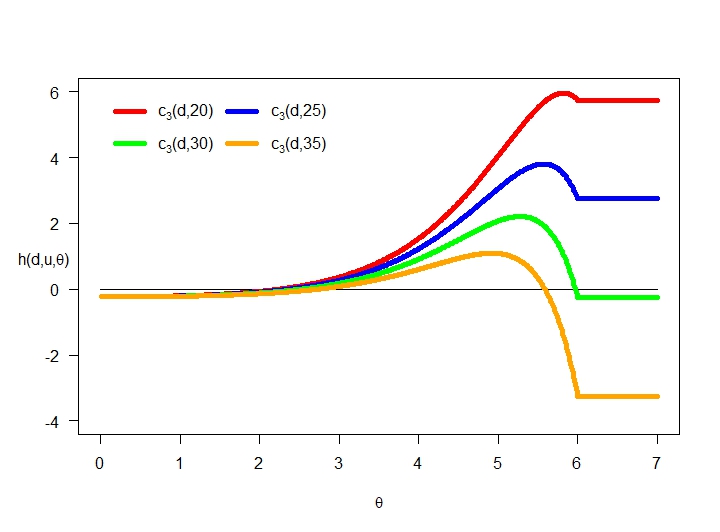}}}
\caption{\footnotesize{The condition \eqref{assum_additional2} (left) and the function $h$ (right) for Cram\'er-Lundberg model and linear fee function $c_3$. Parameters: $r=0.01,\ \hat{\mu}=0.04,\ \beta=0.1,\ \rho=2.5,\ a=10, \alpha(d)=100+20d, p=0.6.$}}
\label{fig:h_cl}
\end{figure}
On Figure \ref{fig:smooth_cl} we present the continuous fit pasting for yellow graph with $c_3(d,35)$, which
appears also on Figure \ref{fig:h_cl}.
\begin{figure}[!ht]
\centering
\includegraphics[width=0.60\textwidth]{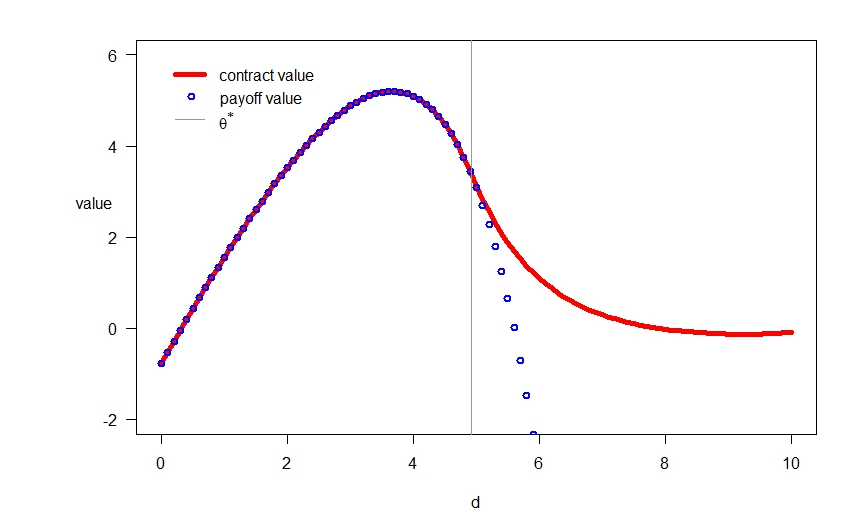}
\caption{\footnotesize{The continuous fit condition for the Cram\'er-Lundberg model with linear fee function $c_3(d,35)$. Parameters: $r=0.01,\ \hat{\mu}=0.04,\ \beta=0.1,\ \rho=2.5,\ a=10, \alpha(d)=100+20d, p=0.6.$}}
\label{fig:smooth_cl}
\end{figure}
\end{Ex}

\section{Appendix}
\begin{proof}[Proof of Proposition \ref{Ksi}]
The identity for $\xi$ was already proven in \cite{ZP&JT}. We focus on proving the identity for $\Xi$.
Note that $\tau_D^+(a)$ may happen before new supremum $\overline{X}_t>d$ is attained or after this event.
In a case where the underlying process $X$ does not cross level $d$, the drawdown epoch $\tau_D^+(a)$ just exceeds the level $d-a$ by the process $X$.
On the other hand, when $X$ crosses level $d$, then a new supremum is attained and the drawdown process starts from 0. We then split the function $\Xi$ into these two separate
scenarios:
\begin{align}
\Xi(d)=&\mathbb{E}_{|d}\left[e^{-r\tau_D^+(a)}\alpha(D_{\tau_D^+(a)});\tau_D^+(a)<\tau^+_d\right]+\mathbb{E}_{|d}\left[e^{-r\tau_D^+(a)}\alpha(D_{\tau_D^+(a)});\tau_d^+<\tau_D^+(a)\right]\nonumber\\
=&\mathbb{E}\left[e^{-r\tau_{d-a}^-}\alpha(d-X_{\tau_{d-a}^-});\tau_{d-a}^-<\tau_{d}^+\right]+\mathbb{E}\left[e^{-r\tau_d^+};\tau_d^+<\tau_{d-a}^-\right]\mathbb{E}\left[e^{-r\tau_D^+(a)}\alpha(D_{\tau_D^+(a)})\right]\nonumber\\
=&\mathbb{E}_{a-d}\left[e^{-r\tau_0^-}\alpha(a-X_{\tau_0^-});\tau_0^-<\tau_a^+\right]+\frac{W^{(r)}(a-d)}{W^{(r)}(a)}\mathbb{E}\left[e^{-r\tau_D^+(a)}\alpha(D_{\tau_D^+(a)})\right].\label{odnosnik}
\end{align}
We then use the two-sided formula given in \eqref{twosided1}.
We now identify the expectations appearing in the last line \eqref{odnosnik} of the above equation.

Let us rewrite the first expectation by considering the position of the process $X$ at time $\tau_0^-$:
\begin{align}
&\mathbb{E}_{a-d}\left[e^{-r\tau_0^-}\alpha(a-X_{\tau_0^-});\tau_0^-<\tau_a^+\right]=\alpha(a)\E_{a-d}\left[e^{-r\tau_0^-};\ \tau_0^-<\tau_a^+,\ X_{\tau^-_0}=0\right]\nonumber\\
&\qquad+\int_{(0,\infty)}\alpha(a+h)\E_{a-d}\left[e^{-r\tau_0^-};\ \tau_0^-<\tau_a^+,\ X_{\tau^-_0}\in -\diff h\right].\nonumber
\end{align}
The first increment of above identity refers to a case when the drawdown process $D_t$
creeps over level $a$ and the second increment refers to case when stopping time $\tau_D^+(a)$ is attained
by a jump of $X_t$ that puts the process $D_t$ strictly above level $a$.
The case of creeping was analysed in \cite[Cor. 3]{Pistorius}, producing:
\begin{align}
\E_{x}\left[e^{-r\tau_0^-};\ \tau_0^-<\tau_a^+,\ X_{\tau^-_0}=0\right]=\frac{u^r(x,0,a)}{u^r(0,0,a)},\nonumber
\end{align}
where
$u^r(x,\cdot,a)$ is a potential density of $X_t$ killed on exiting $[0,a]$ starting at $x\in[0,a]$.
By \cite[Th. 8.7]{KIntr} we have:
\begin{align}
u^r(x,y,a)=\frac{W^{(r)}(x)W^{(r)}(a-y)-W^{(r)}(x-y)W^{(r)}(a)}{W^{(r)}(a)}.\nonumber
\end{align}
We recall that the spectrally negative process $X_t$ creeps across $0$ if and only if $X_t$ has a non-zero Gaussian coefficient $\sigma^2$ (see \cite[Cor. 2]{Pistorius}).
Thus,
\begin{align}
\E_{a-d}\left[e^{-r\tau_0^-};\ \tau_0^-<\tau_a^+,\ X_{\tau^-_0}=0\right]
&=\lim_{{\epsilon\downarrow 0}}\frac{u^r(a-d+\epsilon ,\epsilon,a+\epsilon )}{u^r(\epsilon,\epsilon,a+\epsilon )}\nonumber\\
&=\lim_{{\epsilon\downarrow 0}}\frac{W^{(r)}(a-d+\epsilon )W^{(r)}(a)-W^{(r)}(a-d)W^{(r)}(a+\epsilon )}{W^{(r)}(\epsilon )W^{(r)}(a)-W^{(r)}(0)W^{(r)}(a+\epsilon )}\nonumber\\
&=\frac{W'^{(r)}(a-d)W^{(r)}(a)-W^{(r)}(a-d)W'^{(r)}(a)}{W'^{(r)}(0)W^{(r)}(a)-W^{(r)}(0)W'^{(r)}(a)}\nonumber\\
&=\frac{\sigma^2}{2}\left[W'^{(r)}(a-d)-\frac{W^{(r)}(a-d)W'^{(r)}(a)}{W^{(r)}(a)}\right].\nonumber
\end{align}
The case concerning jump can be solved by considering the joint law of $\tau_X^-(0)$ and $X_{\tau_X^-(0)}$, which is given in \cite[Th. 5.5]{KyprianouGerberShiu}
(note the result there, although presented only for the drift minus compound Poisson process, holds for general spectrally negative L\'evy process).
Then,
\begin{align}
\mathbb{E}_{x}\left[e^{-r\tau^-_0};X_{\tau^-_0}\in -\diff h,\ X_{\tau^-_0-}\in \diff z,\ \tau^-_0<\tau^+_a\right]=u^r(x,z,a)\Pi(-z-\diff h)\diff z,\nonumber
\end{align}
for $h>0$ and $z\in (0,a)$. Because we only need to know the position of $X_{\tau^-_0}$ ,we have to integrate above equation with respect to $z$.
In summary,
\begin{align}
&\mathbb{E}_{a-d}\left[e^{-r\tau_0^-}\alpha(a-X_{\tau_0^-});\tau_0^-<\tau_a^+\right]=\alpha (a)\frac{\sigma^2}{2}\left[W^{\prime (r)}(a-d)-W^{(r)}(a-d)\frac{W^{\prime (r)}(a)}{W^{(r)}(a)}\right]\nonumber\\
&\qquad +\int_0^a\int_{(0,\infty)}\alpha(a+h)\left[\frac{W^{(r)}(a-d)W^{(r)}(a-z)}{W^{(r)}(a)}-W^{(r)}(a-d-z)\right]\Pi(-z-\diff h)\diff z.\nonumber
\end{align}

To find \eqref{odnosnik}, in the last part of the proof we find the formula for $\mathbb{E}\left[e^{-r\tau_D^+(a)}\alpha(D_{\tau_D^+(a)})\right]$. Note that
\begin{align}
\mathbb{E}\left[e^{-r\tau_D^+(a)}\alpha(D_{\tau_D^+(a)})\right]=\alpha(a)\E\left[e^{-r\tau_D^+(a)};\ D_{\tau_D^+(a)}=a\right]+\int_{(0,\infty)}\alpha(a+h)\E\left[e^{-r\tau_D^+(a)};\ D_{\tau_D^+(a)}\in a+\diff h\right].\nonumber
\end{align}
The first equation describes a case where drawdown process creeps over $a$ and the second equation refers to a case where it strictly exceeds level $a$ by jump.
These two expectations were calculated in \cite{Mijatovic&Pistorius}, as follows:
\begin{align}
\mathbb{E}_{}\Big[e^{-r\tau_D^+(a)};D_{\tau_D^+(a)}\in a+\diff h\Big]=\int_0^a \left[W'^{(r)}(a-z)\frac{W^{(r)}(a)}{W'^{(r)}(a)}-W^{(r)}(a-z)\right]\Pi(-z-\diff h)\diff z\nonumber
\end{align}
and
\begin{align}
\mathbb{E}_{}\Big[e^{-r\tau_D^+(a)};D_{\tau_D^+(a)}=a\Big]=\frac{\sigma^2}{2}\left[W'^{(r)}(a)-W''^{(r)}(a)\frac{W^{(r)}(a)}{W'^{(r)}(a)}\right].\nonumber
\end{align}
Putting all of the increments together completes the proof.
\end{proof}

\begin{proof}[Proof of Lemma \ref{supermtg}]
Assume at the beginning that $\sigma >0$.
Using the same arguments as in the proof of Eisenbaum and Kyprianou \cite[Thm. 3]{Eisenbaum&Kyprianou},
and fact that by our assumptions the transient density of $D_t$ exists and hence
\[\mathbb{P}(D_t=\theta^*)=0 \qquad \text{for all $t>0$}\]
(see also \cite[Eq. (2.26)-(2.30)]{Peskir}),
we can extend It\^{o} formula for function $p(t,D_t,U_t)$ into the following change of variables formula:
\begin{align}
p(t,D_t, U_t)=&I(p_C)(t,D_t, U_t)+I(p_S)(t,D_t, U_t)+\frac{1}{2}\int_0^t \left(\frac{\partial p_S}{\partial d}-\frac{\partial p_C}{\partial d}\right)(s,d,u)|_{(s,d,u)=(s,D_{s-}, U_{s-})}\diff L^\theta_s,\label{pierwsztozsamosc}
\end{align}
where $L_t^\theta$ is a local time of $D_t$ at $\theta$, which can be defined formally as was done for the process $X_t$ in
\cite[Thm. 3]{Eisenbaum&Kyprianou}.
However, in this construction we use one crucial
observation. Note that the local time in \cite{Eisenbaum&Kyprianou} is defined along some continuous curve $b(t)$.
The local time $L_t^\theta$ of $D_t$ at point $\theta$ is the same as the local time of $X_t$ at $b(t)=\overline{X}_t-\theta$,
which is continuous because process $\overline{X}_t$ is continuous.
Moreover, the process $X_t$ in \cite{Eisenbaum&Kyprianou} lives in whole real line but the drawdown process $D_t$ lives on non-negative half-line $[0,\infty)$.
Therefore we have in $I(p_C)$ and in $I(p_S)$ additional integrals with respect to continuous parts of supremum and infimum processes.

Now, the \textit{smooth fit} reduces the change variables formula into the following identity:
\begin{align}
p(t,D_t, U_t)=I(p_C)(t,D_t, U_t)+I(p_S)(t,D_t, U_t).\nonumber
\end{align}
Thus, $p(t,D_t, U_t)$ is a supermartingale if $I(p_C)(t,D_t, U_t)$ is a martingale and $I(p_S)(t,D_t, U_t)$ is a supermartingale.
This completes the proof of the first part (i).

To prove the second (ii) and third (iii) parts, note that from the Dynkin's formula for $\hat{p}=p_C$ or $\hat{p}=p_S$ we have:
\begin{eqnarray}
I(\hat{p})(t,D_t, U_t)&=&I(\hat{p})(0,D_0, U_0)\nonumber\\
&&+\int_0^t\frac{\partial}{\partial d}\hat{p}(s,d,u)|_{(s,d,u)=(s,0, U_{s-})}\diff \overline{X}^c_s
-\int_0^t\frac{\partial}{\partial u}\hat{p}(s,d,u)|_{(s,d,u)=(s,D_{s-}, 0)}\diff \underline{X}^c_s\nonumber\\&&
+\int_0^t\left(\frac{\partial}{\partial s}\hat{p}(s,d,u)|_{(s,d,u)=(s,D_s, U_{s})}+\mathcal{A}^{(D,U)}\hat{p}(s,D_s, U_s)\right)\diff s + \mathcal{M}_t,\label{ostatnialinijka}
\end{eqnarray}
where $\mathcal{M}_t$ is a martingale part and $\mathcal{A}^{(D,U)}$ is the full generator of the Markov process $(D_t,U_t)$ defined in \eqref{Dgenerator}-\eqref{generator_bounded}.
Other explanation comes from identifying the drift as a compensator of $X^c$, from the equality $\diff [X]^c_s=\sigma \diff t$ and compensation formula applied to the
jump part (see e.g. \cite[Thm. 4.4, p. 95]{KIntr}).

Moreover, to prove (ii) part, observe that
\begin{equation}\label{nowe}
I(p_C)(t\wedge \tau_D^-(\theta^*)-,D_{t\wedge \tau_D^-(\theta^*)-}, U_{t\wedge \tau_D^-(\theta^*)-})=
p_C(t\wedge \tau_D^-(\theta^*)-,D_{t\wedge \tau_D^-(\theta^*)-}, U_{t\wedge \tau_D^-(\theta^*)-}),
\end{equation}
which follows from It\^{o} formula applied to $p_C$ in the open set $C$.
Note also that process $(t,D_t, U_t)$ goes from set $C$ to set $S$ in continuous way since $X_t$ is spectrally negative.
If $p_C(t\wedge \tau_D^-(\theta^*),D_{t\wedge \tau_D^-(\theta^*)}, U_{t\wedge \tau_D^-(\theta^*)})=p_C(t\wedge \tau_D^-(\theta^*)-,D_{t\wedge \tau_D^-(\theta^*)-}, U_{t\wedge \tau_D^-(\theta^*)-})$ is a martingale, then from \eqref{ostatnialinijka} applied to $\hat{p}=p_C$ and from \eqref{nowe} by taking small $t>0$ we can conclude that
\textit{martingale condition} \eqref{mtg} and \textit{normal reflection condition} \eqref{normalref} hold true. Thus again by \eqref{ostatnialinijka} the process
$I(p_C)(t,D_t, U_t)$ is a martingale for all $t\geq 0$.

Similarly, from \eqref{ostatnialinijka} we can conclude that the \textit{supermartingale condition} \eqref{supmtg} and \textit{normal reflection condition} \eqref{normalref2}
give the supermartingale property of $I(p_S)(t,D_t,U_t)$ (we also use the observation that $\underline{X}^c_s$ is a decreasing process).

All of these arguments remain true and are almost unchanged for the L\'evy processes \eqref{CL} of bounded variation (hence, for $\sigma=0$).
The only change that has to be made is changing \eqref{pierwsztozsamosc} into
\begin{align}
p(t,D_t, U_t)=&I(p_C)(t,D_t, U_t)+I(p_S)(t,D_t, U_t)+\frac{1}{2}\int_0^t \left(p_S-p_C\right)(s,D_{s-}, U_{s-})\diff L^\theta_s;\nonumber
\end{align}
see \cite{Kyprianou&Surya} for details.
In the next step, the continuous pasting condition should be applied and the rest of the proof is the same as before.
This completes the proof.
\end{proof}

\begin{proof}[Proof of Proposition \ref{lambdanu}]
The identities for $\lambda(\cdot,\cdot)$ and $\nu(\cdot,\cdot)$ were proven in \cite{ZP&JT} for both cases $b<a$ and $b=a$. We focus on identifying $N(\cdot,\cdot)$. Note that $\mathbbm{1}_{(\tau_D^+(a)<\tau_U^+(b))}=1-\mathbbm{1}_{(\tau_U^+(b)<\tau_D^+(a))}$ because $\tau_D^+(a)$ and $\tau_U^+(b)$ cannot happen at the same time. Thus, for $b\leq a$, we have
\begin{align}
&\mathbb{E}_{|d|u}\left[e^{-r\tau_D^+(a)}\alpha(D_{\tau_D^+(a)});\ \tau_D^+(a)<\tau_U^+(b)\right]\nonumber\\
&\quad=\mathbb{E}_{|d|u}\left[e^{-r\tau_D^+(a)}\alpha(D_{\tau_D^+(a)})\right]-\mathbb{E}_{|d|u}\left[e^{-r\tau_D^+(a)}\alpha(D_{\tau_D^+(a)});\ \tau_U^+(b)<\tau_D^+(a)\right]\nonumber\\
&\quad=\mathbb{E}_{|d|u}\left[e^{-r\tau_D^+(a)}\alpha(D_{\tau_D^+(a)})\right]-\mathbb{E}_{|d|u}\left[e^{-r\tau_U^+(b)}\mathbb{E}_{|D_{\tau_U^+(b)}|b}\left[e^{-r\tau_D^+(a)}\alpha(D_{\tau_D^+(a)})\right];\ \tau_U^+(b)<\tau_D^+(a)\right],\nonumber
\end{align}
which follows from the Strong Markov Property.
We will analyse the cases $b<a$ and $b=a$ separately.
Assume first that $b<a$.
We can extend the equivalent representation of the event $\left\{\tau_U^+(b)<\tau_D^+(a),\ D_0=y,\ U_0=z\right\}$
in terms of running supremum and infimum of underlying process $X_t$ given in \cite{ZP&JT} by adding the position of the drawdown process
$D_t$ at the stopping moment $\tau_U^+(b)$ as follows:
\begin{align}
&\left\{\tau_U^+(b)<\tau_D^+(a),\ D_{\tau_U^+(b)}\in\diff s,\ D_0=d,\ U_0=u\right\}=\left\{\tau^+_{b-u}<\tau^-_{(d-a)\vee(-u)},\ D_{\tau^+_{b-u}}=(d+u-b)^+ \right\}\nonumber\\
&\qquad\cup\left\{\overline{X}_{\tau_U^+(b)}\vee d-\underline{X}_{\tau_U^+(b)}\in\diff s+b,\ \underline{X}_{\tau_U^+(b)}\leq -u,\ \diff s\in((d+u-b)^+,a-b)\right\}.\nonumber
\end{align}
This is a purely geometric and pathwise observation.
Using this identity, we can derive the following equality:
\begin{align}
&\mathbb{E}_{|d|u}\left[e^{-r\tau_U^+(b)}\mathbb{E}_{|D_{\tau_U^+(b)}|b}\left[e^{-r\tau_D^+(a)}\alpha(D_{\tau_D^+(a)})\right];\ \tau_U^+(b)<\tau_D^+(a)\right]\nonumber\\
&\quad=\int_{\left[(d+u-b)^+,a-b\right)}\mathbb{E}_{|d|u}\left[e^{-r\tau_U^+(b)}\mathbb{E}_{|s|b}\left[e^{-r\tau_D^+(a)}\alpha(D_{\tau_D^+(a)})\right];\ \tau_U^+(b)<\tau_D^+(a),\ D_{\tau_U^+(b)}\in\diff s\right]\nonumber\\
&\quad = \mathbb{E}\left[e^{-r\tau^+_{b-u}};\tau^+_{b-u}<\tau^-_{(d-a)\vee(-u)}\right]\mathbb{E}_{|(d+u-b)^+|b}\left[e^{-r\tau_D^+(a)}\alpha(D_{\tau_D^+(a)})\right]\nonumber\\
&\qquad +\int_{((d+u-b)^+,a-b)}\mathbb{E}_{|s|b}\left[e^{-r\tau_D^+(a)}\alpha(D_{\tau_D^+(a)})\right]\mathbb{E}_{|d|u}\left[e^{-r\tau_U^+(b)}; \overline{X}_{\tau_U^+(b)}\vee d-\underline{X}_{\tau_U^+(b)}\in\diff s+b,\ \underline{X}_{\tau_U^+(b)}\leq -u\right].\nonumber
\end{align}
Using definition of $\Xi(\cdot)$ given in \eqref{Xi} we obtain the result for $b<a$.

For $b=a$, note that when $a\leq d+u$ then $\{\tau_U^+(a)<\tau_D^+(a)\} = \{\tau^+_{a-u}<\tau^-_{d-a}\}$. Moreover, since $a-u\leq d$ and the process $X_t$ has no positive jumps, then $\overline{X}_{\tau^+_U(a)}\vee d=\overline{X}_0\vee d=d$ and $X_{\tau_U^+(a)}=a-u$. This gives that in this case $D_{\tau_U^+(a)}=d+u-a$. On the other hand, when $a>u+d$ then $X_{\tau_U^+(a)}=\overline{X}_{\tau_U^+(a)}\in (d,a-u]$ (see \cite{ZP&JT} for details). Therefore, we get that $D_{\tau_U^+(a)}=0$.
We have just proven that $D_{\tau_U^+(a)}=(d+u-a)^+$. Thus,
\begin{align}
&\mathbb{E}_{|d|u}\left[e^{-r\tau_U^+(b)}\mathbb{E}_{|D_{\tau_U^+(b)}|b}\left[e^{-r\tau_D^+(a)}\alpha(D_{\tau_D^+(a)})\right];\ \tau_U^+(b)<\tau_D^+(a)\right]\nonumber\\
&\quad = \E_{|(d+u-a)^+|a}\left[e^{-r\tau_D^+(a)}\alpha(D_{\tau_D^+(a)})\right]\E_{|d|u}\left[e^{-r\tau_U^+(a)};\ \tau_U^+(a)<\tau_D^+(a)\right]\nonumber\\
&\quad = \Xi((d+u-a)^+)\lambda(d,u).\nonumber
\end{align}
This completes the proof.
\end{proof}


\end{document}